\keywords{Data Words \and Register Automata \and Register Transducers \and
  Functionality \and Continuity \and Computability.}
\newcommand{\N}{\mathbb{N}} \newcommand{\Z}{\mathbb{Z}} \newcommand{\D}{\mathbb{D}} \newcommand{\Q}{\mathbb{Q}} 
\DeclarePairedDelimiter\size{\lvert}{\rvert}
\DeclarePairedDelimiter\length{\lvert}{\lvert}
\DeclarePairedDelimiter\dist{\lVert}{\rVert}
\newcommand{\set}[1]{\ensuremath{\left\{ #1 \right\}}}
\newcommand{\tuple}[1]{\ensuremath{\left( #1 \right)}}
\DeclarePairedDelimiter\rel{\llbracket}{\rrbracket}
\newcommand{\dom}{\mathrm{dom}}
\newcommand{\prefix}{\leq}
\newcommand{\match}{\mathrel{{\hstretch{.8}{\parallel}}}}
\newcommand{\mismatch}{\ensuremath{\mathrel{\centernot {\match}}}}
 \newcommand{\data}{\mathcal{D}}
\newcommand{\aut}{\mathrm{Aut}}
\newcommand{\test}{\mathsf{Test}}
\newcommand{\keep}{\mathsf{keep}}
\newcommand{\setto}{\mathsf{set}}
\newcommand{\guess}{\mathsf{guess}}
\newcommand{\indata}{\mathsf{input}}
\newcommand{\regop}{\mathsf{update}}
\newcommand{\gguess}{?}
\newcommand{\gsetto}{\downarrow{}}
\newcommand{\gout}{\mathrm{out}~}
\newcommand{\cd}{\mathsf{c}_0}
\newcommand{\cp}{\mathsf{Critical} }
\newcommand{\rn}{\mathcal{R}}
\newcommand{\fo}{\textsf{FO}\xspace}
\newcommand{\qf}{\textsf{QF}\xspace}
\newcommand{\ie}{\textit{i.e.}~}
\newcommand{\eg}{\textit{e.g.}~}
\newcommand{\etc}{\textit{etc}\xspace}
\newcommand{\PSpace}{\textsc{PSpace}\xspace}
\newcommand{\NPSpace}{\textsc{NPSpace}\xspace}
\newcommand{\ExpSpace}{\textsc{ExpSpace}\xspace}
\newcommand{\nra}{\textnormal{\textsf{NRA}}\xspace}
\newcommand{\nrt}{\textnormal{\textsf{NRT}}\xspace}
\newcommand{\ngrt}{\textnormal{\textsf{NGRT}}\xspace}
\newcommand{\drt}{\textnormal{\textsf{DRT}}\xspace}
\providecommand*{\shuffle}{%
  \mathbin{\mathpalette\shuffle@{}}%
} \newcommand*{\shuffle@}[2]{%
  \sbox0{$#1\vcenter{}$}%
  \kern .15\ht0 
  \rlap{\vrule height .25\ht0 depth 0pt width 2.5\ht0}%
  \raise.1\ht0\hbox to 2.5\ht0{%
    \vrule height 1.75\ht0 depth -.1\ht0 width .17\ht0 %
    \hfill \vrule height 1.75\ht0 depth -.1\ht0 width .17\ht0 %
    \hfill \vrule height 1.75\ht0 depth -.1\ht0 width .17\ht0 %
  }%
  \kern .15\ht0 
} \makeatother
\apptocmd{\sloppy}{\hbadness 10000\relax}{}{}
\begin{document}

\title{Computability of Data-Word Transductions \texorpdfstring{\\}{} over Different Data Domains}
\author[L. Exibard]{L\'eo Exibard\rsuper{a}}
\author[E. Filiot]{Emmanuel Filiot\rsuper{a}}
\author[N. Lhote]{Nathan Lhote\rsuper{b}}
\author[P.-A. Reynier]{Pierre-Alain Reynier\rsuper{b}}

\address{\lsuper{a}Universit\'e Libre de Bruxelles, Brussels, Belgium}
\email{leo.exibard@ulb.ac.be}
\address{\lsuper{b}Aix Marseille Univ, CNRS, LIS, Marseille, France}

\thanks{%
    L.~Exibard was funded by a FRIA fellowship from the F.R.S.-FNRS\@.
    E.~Filiot is a research associate of F.R.S.-FNRS and was supported by the ARC Project Transform F\'ed\'eration Wallonie-Bruxelles and the FNRS CDR J013116F;\@ MIS F451019F projects.
    N.~Lhote and P.-A.~Reynier were partly funded by the ANR projects DeLTA (ANR-16-CE40-0007) and Ticktac (ANR-18-CE40-0015).
}

\begin{abstract}
  In this paper, we investigate the problem of synthesizing computable functions
  of infinite words over an infinite alphabet (data $\omega$-words). The notion
  of computability is defined through Turing machines with infinite inputs which
  can produce the corresponding infinite outputs in the limit. We use
  non-deterministic transducers equipped with registers, an extension of
  register automata with outputs, to describe specifications. Being non-deterministic,
  such transducers may not
  define functions but more generally relations of data $\omega$-words.
  In order to increase the
  expressive power of these machines, we even allow guessing
  of arbitrary data values when updating their registers.

  For functions over data $\omega$-words, we identify a sufficient condition (the possibility of determining the next letter to be outputted, which we call next letter problem) under which
  computability (resp.~uniform computability) and continuity (resp.~uniform continuity) coincide.

  We focus on  two kinds of data domains: first, the general setting of oligomorphic data, which encompasses any data domain with equality, as well as the setting of rational numbers with linear order; and second, the set of natural numbers equipped with linear order. For both settings, we prove that functionality, \emph{i.e.} determining whether the relation recognized by the transducer
  is actually a function, is decidable. We also show that the so-called next letter problem
  is decidable, yielding equivalence between (uniform) continuity and (uniform) computability.
  Last, we provide characterizations of (uniform) continuity,
  which allow us to prove that these notions, and thus
  also (uniform) computability, are decidable. We even show that all these decision problems
  are \PSpace-complete for $(\N,<)$ and for a large class of oligomorphic data domains, including for instance $(\Q,<)$.
\end{abstract}
\maketitle
\clearpage
\tableofcontents
\section*{Introduction}

\subsubsection*{Synthesis} Program synthesis aims at deriving, in an automatic way, a
program that fulfils a given specification. It is very appealing when
for instance the specification describes, in some abstract formalism (an
automaton or ideally a logic), important properties that the program must
satisfy. The synthesised program is then \emph{correct-by-construction} with
regard to those properties. It is particularly important and desirable for the
design of safety-critical systems with hard dependability constraints, which are
notoriously hard to design correctly. In their most general forms,
synthesis problems have two parameters, a set of inputs \textsf{In} and
a set of outputs \textsf{Out}, and relate two classes $\mathcal{S}$ and $\mathcal{I}$ of
specifications and implementations respectively. A specification
$S\in\mathcal{S}$ is a relation $S \subseteq \textsf{In}\times
\textsf{Out}$ and an implementation $I\in\mathcal{I}$ is a function $I
: \textsf{In}\rightarrow \textsf{Out}$. The
$(\mathcal{S},\mathcal{I})$-synthesis problem asks, given a
(finite representation of a) specification $S\in\mathcal{S}$,  whether
there exists $I\in\mathcal{I}$ such that for all $u\in \textsf{In}$,
$(u,I(u))\in S$. If such $I$ exists, then the procedure must return
a program implementing $I$. If all specifications in $\mathcal{S}$ are
\emph{functional}, in the sense that they are the graphs of functions
from $\textsf{In}$ to $\textsf{Out}$, then the
$(\mathcal{S},\mathcal{I})$-synthesis is a membership problem: given
$f\in \mathcal{S}$, does $f\in\mathcal{I}$ hold?

\subsubsection*{Automata-theoretic approach to synthesis} In this paper, we are interested in the automata-theoretic approach to
synthesis, in the sense that specifications and implementations can be
defined by automata, or by automata extended with outputs called
\emph{transducers}. In this approach, \textsf{In} and \textsf{Out} are
sets of words over input and output alphabets $\Sigma$ and $\Gamma$
respectively. Perhaps the most well-known decidable instance of synthesis in
this context is the celebrated result of B\"uchi and
Landweber~\cite{BuLa69}: $\mathcal{S}$ is the class of
$\omega$-regular specifications, which relates infinite input words
$i_0i_1\dots\in \Sigma^\omega$ to infinite output words
$o_0o_1\dots\in\Gamma^\omega$ through $\omega$-automata
(e.g.\ deterministic parity automata), in the sense that the infinite
convolution $i_0o_0i_1o_1\dots \in (\Sigma\Gamma)^\omega$ must be
accepted by an $\omega$-automaton defining the
specification. The class of implementations $\mathcal{I}$ is all the
functions which can be defined by Mealy machines, or equivalently,
deterministic \emph{synchronous} transducers which, whenever they read some input $i\in\Sigma$,
produce some output $o\in\Gamma$ and possibly change their own internal
state. The seminal result of B\"uchi and Landweber has recently triggered a
lot of research in reactive system synthesis and game theory, both on
the theoretical and practical sides, see for
instance~\cite{DBLP:reference/mc/2018}. We identify two important limitations to the now classical setting of $\omega$-regular reactive
synthesis:
\begin{enumerate}
\item[$(i)$] specifications and implementations are required to be synchronous, in
the sense that a single output $o\in \Gamma$ must be produced for each
input $i\in\Sigma$, and
\item[$(ii)$] the alphabets $\Sigma$ and $\Gamma$ are assumed to be finite.
\end{enumerate}
 Let us argue why we believe $(i)$ and $(ii)$ are
indeed limitations. First of all, if a specification is not
realizable by a synchronous transducer, then a classical synthesis algorithm stops with a negative
answer. However, the specification could be realizable in a larger
class of implementations $\mathcal{I}$. As an example, if $S$ is the
set of words $i_0o_0\dots$ such that $o_\ell = i_{\ell+1}$, then $S$
is not realizable synchronously because it is impossible to produce
$o_\ell$ before knowing $i_{\ell+1}$. But this specification is
realizable by a program which can delay its output
production by one time unit. Enlarging the class of implementations can
therefore allow one to give finer answers to the synthesis problem in cases
where the specification is not synchronously realizable. We refer to
this type of relaxations as \emph{asynchronous}
implementations. An asynchronous implementation can be
modelled in automata-theoretic terms as a transducer which, whenever
it reads an input $i\in \Sigma$, produces none or several outputs,
i.e.\ a finite word $u\in \Gamma^*$. Generalizations of reactive
system synthesis to asynchronous implementations have been
considered in~\cite{DBLP:journals/corr/abs-1209-0800,DBLP:conf/csl/FridmanLZ11,DBLP:journals/iandc/WinterZ20}. In
these works however,
the specification is still synchronous, given by an automaton which
strictly alternates between reading input and output
symbols. Here, we also assume that specifications are asynchronous, as it gives more flexibility in the relations that can be expressed. For instance, one can specify that some response has to be delayed, or, when transforming of data streams, allow for erasure and/or duplication of some data values.

The synchronicity
assumption made by classical reactive synthesis is motivated by the
fact that such methods focus on the control of systems rather than on
the data, in the sense that input symbols are Boolean signals issued
by some environment, and output symbols are actions controlling the
system in order to fulfil some correctness properties. From a
data-processing perpective, this is a strong limitation. The
synthesis of systems which need to process streams of data, like a monitoring system or a
system which cleans noisy data coming from sensors, cannot be
addressed using classical $\omega$-regular synthesis. Therefore, one
needs to extend specifications to asynchronous specifications, in the
sense that the specifications must describe properties of executions
which do not strictly alternate between inputs and outputs. Already on
finite words however, the synthesis problem of asynchronous
specifications by asynchronous implementations, both defined by
transducers, is undecidable in general~\cite{CarayolL14}, and
decidable only in some restricted cases~\cite{DBLP:conf/icalp/FiliotJLW16}.
The second limitation $(ii)$ is addressed in the next paragraph.

\subsubsection*{From finite to infinite alphabets} To address the
synthesis of systems where \emph{data} are taken into account, one
also needs to extend synthesis methods to handle infinite
alphabets. As an example, in a system scheduling processes, the data values are process ids. In a stream processing system,
data values can be temperature or pressure measurements for example. Not only
one needs synthesis methods able to handle infinite alphabets of data values,
but where those values can be compared through some predicates, like
equality or a linear order. Recent
works have considered the synthesis of (synchronous) reactive systems processing \emph{data
  words} whose elements can be compared for equality~\cite{DBLP:conf/atva/KhalimovMB18,Ehlers:2014:SI:2961203.2961226,DBLP:conf/concur/KhalimovK19,DBLP:conf/concur/ExibardFR19}
as well as comparison with a linear order on the data~\cite{STACS21}.
To handle data words, just as automata have been extended to \emph{register
  automata}, transducers have been extended to \emph{register transducers}. Such
transducers are equipped with a finite set of registers in which they can store
data values and with which they can compare them for equality, inequality or
in general any predicate, depending on the considered data domain. When
a register transducer reads a data value, it can compare it to the values
stored in its registers, assign it to some register, and output the
content of none or several registers, i.e., a finite word $v$ of register
contents. To have more expressive power, we also allow transducers to
guess an arbitrary data value and assign it to some register. This feature,
called data guessing, is arguably a more robust notion of non-determinism notion for
machines with registers and was introduced to enhance register
automata~\cite{DBLP:journals/ijfcs/KaminskiZ10}. We denote by \nrt the
class of non-deterministic register transducers. As an example,
consider the (partial\footnote{In this paper, data word functions can
  be partial by default and therefore we do not explicitly mention it
  in the sequel.}) data word function $g$
which takes as input any data word of the form $u = su_1su_2\cdots \in
\N^\omega$, $s$ occurs infinitely many times in $u$, and $u_i\in
(\N\setminus\{s\})^+$ for all $i\geq 1$. Now, for all $i\geq
1$, denote by $|u_i|$ the length of $u_i$ and by $d_i$ the last data value
occurring in $u_i$. The function $g$\label{page:functiong} is then defined as $g(u) =
d_1^{|u_1|}sd_2^{|u_2|}s\dots$. This function can be defined by the \nrt
of Figure~\ref{fig:nrtexample}. Note that without the guessing
feature, this function could not be defined by any \nrt.

\begin{figure}[t]
\begin{tikzpicture}[->, >=stealth', auto, node distance=38mm]
  \tikzstyle{every state}=[text=black,fill=yellow!30]

  \node[state, initial, initial text={}] (0) {$i$};
  \node[state, accepting, right= of 0]   (1) {$f$};
  \node[state, right= of 1]   (2) {$o$};

  \path (0) edge node[above] {$\top \mid \gsetto r_s, \gguess r_o, \gout \varepsilon$} (1);
  \path (1) edge node[below] {$\star \neq r_s \mid \gsetto r_c, \gout r_o$} (2);
  \path (2) edge[bend right=100] node[above] {$\wedge \begin{array}{c} \star = r_s \\ r_c = r_o \end{array} \mid \gguess r_o, \gout r_s$} (1);
  \path (2) edge[loop right] node[right] {$\star \neq r_s \mid \gsetto r_c, \gout r_o$} (2);
\end{tikzpicture}
  \caption{An \nrt defining the data word function $g$, equipped with a
    B\"uchi condition. The current data value denoted by $\star$ is tested
    with respect to the content of the registers on the left of the
    bar $|$. On the right of the bar, there are instructions such as
    assigning an arbitrary data value to $r$ (notation $\gguess r$), outputting the
    content of a register or nothing ($\gout r$), or assigning the current data value to
    some register ($\gsetto r$). The B\"uchi condition makes
    sure that the first data value, initially stored in $r_s$ during the
    first transition, occurs infinitely many times. The register $r_c$ stores
    the last data value that has been read. $r_o$
    is meant to store the last data value $d_i$ of an input chunk $u_i$. It has to
    be guessed whenever a new chunk $u_i$ is starting to be read, and on reading
    again $r_s$, the automaton checks that the guess was right by evaluating
    whether $r_c = r_o$ (at that moment, $r_c$ contains
    $d_i$).\label{fig:nrtexample}}
\end{figure}
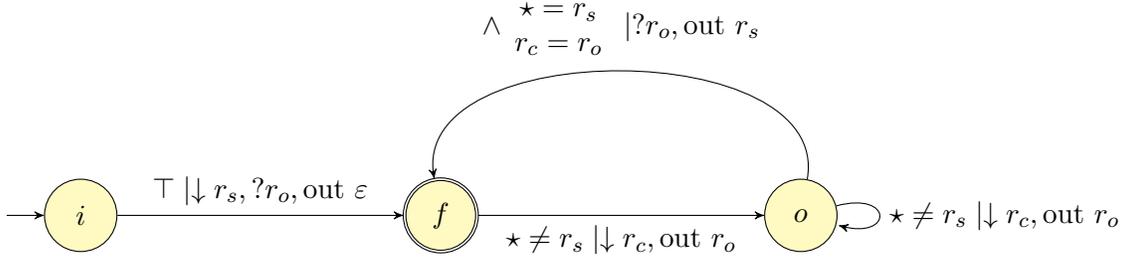

Thanks to the non-determinism of \nrt, in general and unlike the
previous example, there might be several accepting
runs for the same input data word, each of them producing a possibly different output
data word. Thus, \nrt can be used to
define binary relations of data $\omega$-words, and hence
specifications. In the
works~\cite{DBLP:conf/atva/KhalimovMB18,Ehlers:2014:SI:2961203.2961226,DBLP:conf/concur/KhalimovK19,DBLP:conf/concur/ExibardFR19,STACS21}
already mentioned, \nrt have been used as a description of
specifications, however they are assumed to be synchronous and
without guessing.

\subsubsection*{Objective: synthesis of computable data word
  functions} In this paper, our goal is to define a synthesis setting
where both limitations $(i)$ and $(ii)$ are lifted. In particular, specifications are
assumed to be given by (asynchronous) non-deterministic register
transducers equipped with a B\"uchi condition (called \nrt). To retain
decidability, we however make some hypothesis: specifications are
assumed to be functional, i.e., they already define a function from
input data $\omega$-words to output data $\omega$-words. While this
a strong hypothesis, it is motivated by two facts. First, the
synthesis problem of asynchronous implementations from asynchronous specifications given by (non-functional) \nrt
is undecidable in general, already in the finite alphabet
case~\cite{CarayolL14}. Second, functional \nrt
define uncomputable functions in general, and therefore they cannot be
used as machines that compute the function they specify.
Since those functions are
defined over infinite inputs, let us make clear what we mean by
computable functions. A (partial) function $f$ of data $\omega$-words
is computable if there exists a Turing machine $M$ that has an infinite input $x\in \dom(f)$,
and produces longer and longer prefixes of the output $f(x)$ as it reads
longer and longer prefixes of the input $x$. Therefore, such a machine produces
the output $f(x)$ in the limit. As an example, the function
$g$ previously defined is computable. A Turing machine computing it
simply has to wait until it sees the last data value $d_i$ of a chunk $u_i$ (which
necessarily happens after a finite amount of time), compute the length
$\ell_i$ of $u_i$ and once it sees $d_i$, output $d_i^{\ell_i}$ at
once. However, consider the extension $f$ to any input data word defined as follows:
$f(u) = g(u)$ if $u$ is in the domain of $g$, and otherwise $f(u) = s^\omega$ where
$s$ is the first data value of $u$. Such function is not computable. For instance, on input $x = sd^\omega$ (where $d\neq s$ is an arbitrary
data value), we have $f(sd^\omega) =
s^\omega$, as $x$ is not in the domain of $g$.
Yet, on any finite prefix
$\alpha_k = sd^k$ of $sd^\omega$, any hypothetical
machine computing $f$ cannot output anything. Indeed, there exists
a continuation of $\alpha_k$ which is in the domain of $g$, and
for which $f$ produces a word which starts with a different data value than
$f(\alpha_{k}d^\omega)$: it suffices to take the continuation
$(sd)^\omega$, as we have $f(\alpha_k(sd)^\omega) =
g(\alpha_k(sd)^\omega) = d^k(sd)^\omega$.

In this paper, our goal is therefore to study the following synthesis problem: given a
functional \nrt defining a function $f$ of data $\omega$-words, generate a Turing machine
which computes $f$ if one exists. In other words, one wants to decide
whether $f$ is computable, and if it is, to synthesize an algorithm
which computes it.


\subsubsection*{Contributions} Register transducers can be
parameterized by the set of data values from which the $\omega$-data words are built,
along with the set of predicates which can be used to test those
values. We distinguish a large class of data sets for which we obtain
decidability results for the later problem, namely the class of
oligomorphic data sets~\cite{DBLP:journals/corr/BojanczykKL14}. Briefly, oligomorphic data sets are
countable sets $D$ equipped with a finite set of predicates which satisfies
that for all $n$, $D^n$ can be partitioned into finitely many equivalence classes
by identifying tuples which are equal up to automorphisms
(predicate-preserving bijections). For example, any set equipped with
equality is oligomorphic, such as $(\mathbb{N}, \{=\})$,
$(\mathbb{Q}, \{<\})$ is oligomorphic
while $(\mathbb{N}, \{<\})$ is not.
However $(\mathbb{N}, \{<\})$ is an
interesting data set in and of itself. We also investigate \nrt
over such data set, using the fact that it is a substructure of $(\Q, \{<\})$ which is oligormorphic.
Our detailed contributions are the following:
\begin{enumerate}

\item We first establish a general correspondence between computability and
  the classical mathematical notion of continuity (for the Cantor
  distance) for functions of data $\omega$-words  (Theorems~\ref{thm:comp2cont}
  and~\ref{thm:cont2comp}). This correspondence holds under
  a general assumption, namely the decidability of what we called the
  \emph{next-letter problem}, which in short asks that the next data value which
  can be safely outputted knowing only a finite prefix of the input
  data $\omega$-word is computable, if it exists. We also show similar correspondences for
  more constrained computability and continuity notions, namely Cauchy,
  uniform and $m$-uniform computability and continuity. In these
  correspondences, the construction of a Turing machine computing the
  function is effective.
\item We consider a general computability assumption for oligomorphic
  data sets, namely that they have decidable first-order
  satisfiability problem~\cite{Bojanczyk19}. We call such data sets \emph{decidable}. We
  then show that functions defined by \nrt over decidable oligomorphic data
  sets  and over $(\mathbb{N},
  \{<\})$, have decidable next-letter problem.   As a consequence
  (Theorems~\ref{thm:compcont} and~\ref{thm:compcontN}), we
  obtain that a function of data $\omega$-words definable by an \nrt over decidable oligomorphic data sets  and over $(\mathbb{N},\set{<})$, is
  computable iff it is continuous (and likewise for all computability
  and continuity notions we introduce). This is a useful mathematical
  characterization of computability, which we use to obtain our main result.

\item As explained before, an \nrt may not define a function in general but
  a relation, due to non-determinism. Functionality is a semantical, and not
  syntactical, notion. We nevertheless show that checking whether an \nrt defines a function is
  decidable for decidable oligomorphic data sets (Theorem~\ref{thm:decoligo}). This problem is called the \emph{functionality problem}
  and is a prerequisite to our study of computability, as we assume
  specifications to be functional.  We establish \textsc{PSpace}-completeness of
  the functionality problem for \nrt over $(\mathbb{N},\{<\})$
  (Corollary~\ref{thm:funN}) and for oligomorphic
  data sets (Theorem~\ref{thm:decoligo}) under some additional assumptions on the data set that we call polynomial
  decidability. In short, it is required that the data set has
  \textsc{PSpace}-decidable first-order satisfiability problem.

\item Finally, we show (again Theorem~\ref{thm:decoligo}) that  continuity of functions defined by
  \nrt over decidable (resp.\ polynomially decidable) oligomorphic data
  sets is decidable (resp.\ \textsc{PSpace}-c). We also obtain
  \textsc{PSpace}-completeness in the non-oligomorphic case
  $(\mathbb{N}, \{<\})$ (Theorem~\ref{thm:contN}). These results also hold for the stronger notion of uniform continuity (see also Theorem~\ref{thm:decUnifContN}). As a result of the correspondence between
  computability and continuity, we also obtain that computability and uniform computability are decidable
  for functions defined by \nrt over decidable oligomorphic data sets,
  and \textsc{PSPace}-c for polynomially decidable oligomorphic data
  sets as well as $(\mathbb{N}, \{<\})$. This is our main result
  and it answers positively our initial synthesis motivation.
\end{enumerate}

\noindent
The proof techniques we use have the following structure in common:
first, we characterize non-functionality and non-continuity by
structural patterns on \nrt and establish small witness properties for
the existence these patterns. Then, based on the small witness
properties, we show how to decide whether given an \nrt, such patterns
are matched or not. While the proofs have some similarities between
the oligomorphic case, the case $(\mathbb{N},\{<\})$ and the functionality
and continuity problems, there are subtle
technical differences which make them hard to factorize with
reasonable amount of additional notations and theoretical
assumptions.

\subsubsection*{Related Work} We have already mentioned works related
to the synthesis problem. We now give references to results on
computability and continuity. The notion of continuity with regards to Cantor
distance is not new, and for rational functions over finite alphabets, it was
already known to be decidable~\cite{DBLP:journals/tcs/Prieur02}.
The approach of Prieur is to reduce continuity to functionality by defining from a transducer $T$ a transducer realizing its topological closure. We were able to extend this approach to almost all the cases we considered, except for transducers over $(\N,\set{<})$ with guessing allowed, so we chose a different proof strategy.
The connection between continuity and computability for functions of $\omega$-words over a finite alphabet has
recently been investigated in~\cite{DBLP:conf/concur/DaveFKL20} for
one-way and two-way transducers. Our results lift the case of one-way
transducers from~\cite{DBLP:conf/concur/DaveFKL20} to data words. Our
results were partially published in conference
proceedings~\cite{DBLP:conf/fossacs/ExibardFR20}. In that
publication, only the case of data sets equipped with the equality
predicate was considered. We now consider oligomorphic data sets
(which generalise the latter case), the data set $(\mathbb{N},\{<\})$ and
new computability notions. Despite the fact that our
results are more general, this generalisation also allows to extract
the essential arguments needed to prove this kind of
results. Moreover, compared to~\cite{DBLP:conf/fossacs/ExibardFR20},
we add here the possibility for the register transducer to make
non-deterministic register assignment (data guessing), which strictly
increases their expressive power.

\section{Data alphabet, languages and transducers}

\subsection{Data as logical structures}
Let $\Sigma$ be a finite signature with relation and constant
symbols. Let $\D=(D,\Sigma^\D)$ be a logical structure over $\Sigma$
with a countably infinite domain $D$ and an interpretation of each symbol of $\Sigma$. Note that we often identify $\D$ and $D$ when the structure considered is clear, from context.

An \emph{automorphism} of a structure $\D$ is a bijection $\mu: D\rightarrow D$ which preserves the constants and the predicates of $\D$: for any constant $c$ in $D$, $\mu(c)=c$ and for any relation of $\Sigma$, $R\subseteq D^r$, we have $\forall\ \bar x,\  R(\bar x) \Leftrightarrow R(\mu(\bar x))$, where $\mu$ is naturally extended to $D^r$ by applying it pointwise. We denote by $\aut(\D)$ the set of automorphisms of $\D$.
Let $\bar x\in \D^d$, the set $\set{\mu(\bar x)\mid\ \mu\in \aut(\D)}$ is called the \emph{orbit} of $\bar x$ under the action of $\aut(\D)$.

We will be interested in structures that have a lot of symmetry. For instance the structures $(\N,\set{0,=})$, $(\Z,\set{<})$ and $(\Q,\set{<})$ fall under our study as well as more sophisticated structures like $(1(0+1)^*,{\otimes})$ where $\otimes$ is the bitwise xor operation. Other structures like $(\Z,\set{+})$ will not have enough internal symmetry to be captured by our results.

\begin{defi}
  A logical structure $\D$ is \emph{oligomorphic} if for any natural number $n$ the set $\D^n$ has finitely many orbits under the action of $\aut(\D)$.
\end{defi}

\begin{exa}%
  \label{exa:oligomorphic}
Oligomorphic structures can be thought of as ``almost finite''. Consider $(\N, \set{=})$, then $\N^2$ only has two orbits: the diagonal $\set{(x,x)\mid\ x\in \N}$ and its complement $\set{(x,y)\in \N^2\mid\ x\neq y}$. In fact $(\N, \set{=})$ is oligomorphic, since the orbit of an element of $\N^n$ is entirely determined by which coordinates are equal to each other. Similarly, one can see that the dense linear order $(\Q,\set{<})$ is oligomorphic.

The automorphism group of $(\Z,\set{<})$ consists of all translations $n \mapsto n+c$ for some fixed $c \in \Z$. This means that $\Z$ only has one orbit. However, $\Z^2$ has an infinite number of orbits since the difference between two numbers is preserved by translation. Hence $(\Z,\set{<})$ is not oligormorphic. However, the fact that $(\Z,\set{<})$ is a substructure of $(\Q,\set{<})$ will allow us to extend our results to this structure, with some additional work.
For more details on oligomorphic structures see~\cite[Chap.~3]{Bojanczyk19}.
\end{exa}

Let $G$ be a group acting on both $X,Y$, then a function $f:X\rightarrow Y$ is
called \emph{equivariant} if for all $x\in X, \mu\in G$ we have $f(\mu(x))=
\mu(f(x))$.

\subsection{Words and data words}
For a (possibly infinite) set $A$, we denote by $A^*$ (resp. $A^\omega$) the set
of finite (resp.\ infinite) words over this alphabet, and we let $A^\infty = A^*
\cup A^\omega$. For a word $u = u_1 \dots u_n$, we denote $\length{u} = n$ its length,
and, by convention, for $x \in A^\omega, \length{x} = \infty$. The empty word is denoted $\varepsilon$. For $1\leq i\leq
j \leq \length{w}$, we let $w[i{:}j] = w_i w_{i+1} \dots w_j$ and $w[i] =
w[i{:}i]$ the $i$th letter of $u$. For $u,v \in A^\infty$, we say that $u$ is a
prefix of $v$, written $u \prefix v$, if there exists $w \in A^\infty$ such that
$v = uw$. In this case, we define $u^{-1}v = w$. For $u,v \in A^\infty$, we say
that $u$ and $v$ \emph{match} if either $u\prefix v$ or $v \prefix u$, which we denote by $u\match v$, and we say that they \emph{mismatch}, written $u\mismatch v$, otherwise. Finally, for $u,v \in A^\infty$, we denote by $u \wedge v$
their longest common prefix, i.e.\ the longest word $w \in A^\infty$ such that $w
\prefix u$ and $w \prefix v$.

Let $\D$ be a logical structure. A word over
$\D$ is called a \emph{$\D$-data word} (or just \emph{data
  word}). Note that $\aut(\D)$ naturally acts on $\D^\infty$.
\subsection{Functions and relations}
A (binary) relation between sets $X$ and $Y$ is a subset $R \subseteq X \times Y$. We denote its domain $\dom(R) = \{x \in X \mid \exists y \in Y, (x,y) \in R\}$. It is \emph{functional} if for all $x \in \dom(R)$, there exists exactly one $y \in Y$ such that $(x,y) \in R$. Then, we can also represent it as the function $f_R : \dom(R) \rightarrow Y$ such that for all $x \in \dom(R)$, $f(x) = y$ such that $y \in Y$ (we know that such $y$ is unique). $f_R$ can also be seen as a \emph{partial} function $f_R : X \rightarrow Y$.
\begin{conv}
In this paper, unless otherwise stated, functions of data words are assumed to be partial, and we denote by $\dom(f)$ the domain of any (partial) function $f$.
\end{conv}

\subsection{Register transducers}
Let $\D$ be a logical structure, and let $R$ be a finite set of variables.
We define \emph{$R$-tests} by the following grammar:
\[\phi::= P(\bar t)|\phi\wedge \phi| \phi\vee \phi|\neg \phi\]
where $P$ is a symbol of arity $k$ in the signature of $\D$ and $\bar t$ a $k$-tuple of terms. We denote by $\test(R)$ the set of $R$-tests.
Terms are defined by either a constant of $\D$ or a variable of $R$.
In other words $R$-tests are exactly the quantifier-free formulas over the signature of $\D$ using variables in $R$.
\begin{rem}
We choose tests to be quantifier-free formulas. However we could have chosen existential first-order formulas without affecting our results. Note that we choose this formalism just for simplicity's sake, and that it does not make any difference for structures which admit quantifier elimination such as $(\N,\set{=})$ or $(\Q,\set{<})$.
\end{rem}

A \emph{non-deterministic register transducer} (\nrt for short) over $\D$ is a tuple $\tuple{Q,R,\Delta,q_0,\overline{\cd},F}$. Where $Q$ is a finite set of states, $R$ is a finite set of registers, $q_0\in Q$, $\overline{\cd}\in \Sigma^R$ is a vector of constant symbols, $F\subseteq Q$, and $\Delta$ is a finite subset of

\[\underbrace{Q}_{\text{current state}}\times \underbrace{\test(R\uplus \set{\indata})}_{\text{current registers + input data}}\times \underbrace Q_{\text{target state}}\times \underbrace {\set{\keep,\setto, \guess}^R}_{\text{register operations}}\times \underbrace{R^*}_{\text{output word}}\]
A \emph{non-guessing} transducer (\ngrt) has a
transition function which is included in $Q\times \test(R\uplus \set{\indata})\times Q\times \set{\keep,\setto}^R\times R^*$.
Finally, a \emph{deterministic} transducer (\drt) satisfies an even stronger condition:
its transition relation is a function of type $\delta: Q\times \test(R\uplus \set{\indata})\rightarrow Q\times \set{\keep,\setto}^R\times R^*$ where, additionally, tests are mutually exclusive, i.e.\ for all $\phi,\psi \in \dom(\delta)$, $\phi \wedge \psi$ is unsatisfiable.

\begin{rem}
 Note that in the definition of a transducer we require that $\D$ contains at least one constant symbol.
 This is needed for annoying technical reasons, namely in order to initialize registers to some value.

 However it is not too damaging since,
 given a $\Sigma$-structure $\D$ of domain $D$, one can always consider the $\Sigma \uplus \set{\cd}$-structure $\D_\bot$ with domain $D\uplus \set{\bot}$, which is just the structure $\D$ with the extra constant symbol being interpreted as the new element $\bot$, the other relations and constants are unchanged, except naturally for the equality relation which is extended to include $(\bot,\bot)$.

For simplicity's sake we will sometimes talk about structures without mentioning any constant, implicitely stating that we extend the structure to include $\bot$.
Also note that this operation of adding a fresh constant does not affect oligomorphicity.
\end{rem}

Let $T$ be an \nrt given as above. A \emph{configuration} $C$ of $T$ is given by a pair $(q,\bar d)$ where $q\in Q$ is a state and $\bar d\in \D^R$ is a tuple of data values, hence the group $\aut(\D)$ naturally acts on the configurations of $T$ by not touching the states and acting on the content of the registers pointwise. The \emph{initial configuration} is the pair $C_0=(q_0,\bar d_0)$ with $\bar d_0=\overline \cd^\D$ being the interpretation of the constants in $\D$. A configuration is called \emph{final} if the state component is in $F$.
Let $C_1=(q_1,\bar d_1),C_2=(q_2,\bar d_2)$ be two configurations, let $d\in \D$ and let $t=(q_1,\phi, q_2,\regop,v)\in \Delta$.
We say that $C_2$ is a \emph{successor configuration} of $C_1$ by reading $d$ through $t$ and producing $w\in \D^{|v|}$ if the following hold:
\begin{itemize}
\item By letting $\bar d_1[\indata \leftarrow d]: \left\{ \begin{array}{l} r \in R \mapsto \bar d_1(r) \\ \indata \mapsto d \end{array} \right.$, we have $\bar{d}_1[\indata \leftarrow d] \models \phi$
\item for all $r\in R$, if $\regop(r)=\keep$, then $\bar{d}_2(r)=\bar d_1(r)$
\item for all $r\in R$, if $\regop(r)=\setto$, then $\bar{d}_2(r)=d$
\item $w(i)=\bar d_2(v(i))$ for all $i\in \set{1,\ldots, |v|}$ 
\end{itemize}
Moreover, we write $C_1\xrightarrow{d,\phi,\regop | w }_T C_2$ to indicate that fact. Often we don't mention $T$ (when clear from context), nor $\phi$ and $\regop$, and we simply write $C_1\xrightarrow{d| w } C_2$. Given a sequence of successor configurations, called a \emph{run}, $\rho=C_1\xrightarrow{d_1| w_1 } C_2 \xrightarrow{d_2| w_2 } C_3\ldots C_n \xrightarrow{d_n| w_n } C_{n+1}$, we write $C_1\xrightarrow{d_1d_2\cdots d_n| w_1w_2\cdots w_n } C_{n+1}$. We sometimes even don't write the output $C\xrightarrow {u} C'$ stating that there is a sequence of transitions reading $u$ going from $C$ to $C'$. 

Let $\rho=C_1\xrightarrow{d_1|v_1} C_2 \ldots C_n \xrightarrow{d_n|v_n} C_{n+1}\ldots$ denote a possibly infinite run. If $C_1=C_0$, then $\rho$ is called \emph{initial}. If an infinite number of configurations of $\rho$ are final, we say that $\rho$ is \emph{final}. A run which is both initial and final is \emph{accepting}. We say that the run $\rho$ is over the input word $x=d_1\ldots d_n\ldots$ and produces $w=v_1\ldots v_n\ldots$ in the output. Then the semantics of $T$ is defined as $ \rel T =\set{(x,w)\mid\ \rho \text{ is over $x$, produces $y$ and is accepting}}\subseteq \D^\omega\times \D^\infty$. An \nrt is called \emph{functional} if $\rel T$ is a (partial) function. Note that in the following we will mainly consider transducers that only produce $\omega$-words. Restricting the accepting runs of a transducer to runs producing infinite outputs is a B\"uchi condition and can easily be done by adding one bit of information to states.


\section{Continuity and computability}

\subsection{Continuity notions}

We equip the set $A^\infty$ with the usual distance: for
$u,v \in A^\infty$, $\dist{u,v} = 0$ if $u=v$ and $\dist{u,v} =
2^{-\length{u \wedge v}}$ otherwise. A sequence of (finite or infinite) words
$(w_n)_{n\in\mathbb{N}}$ converges to some word $w$ if for all
$\epsilon>0$, there exists $N\ge0$ such that for all $n\ge N$, $\dist{w_n,w}\le
\epsilon$. Given a language $L\subseteq A^\infty$, we denote by $\bar L$ its topological closure, \ie the set of words which can be approached arbitrarily close by words of $L$.

\begin{rem}%
  \label{rem:infiniteNotCompact}
Whether the alphabet $A$ is finite or infinite substantially modifies the properties of the metric space $A^\infty$. Indeed when $A$ is finite this space is compact, but it is not when $A$ is infinite.
\end{rem}

\subsubsection{Continuity}

\begin{defi}[Continuity]
  A function $f:A^\omega \rightarrow B^\omega$ is \emph{continuous} at $x \in \dom(f)$ if (equivalently):
  \begin{enumerate}[(a)]
  \item for all sequences of words $(x_n)_{n \in \N}$ converging towards
    $x$, where for all $i \in \N$, $x_i \in \dom(f)$, we have that $(f(x_n))_{n
      \in \N}$ converges towards $f(x)$.
  \item  $\forall i \geq 0,\ \exists j,\ \forall y \in \dom(f),\ \length{x\wedge y
      } \geq j \Rightarrow \length{f(x) \wedge f(y)} \geq i$
  \end{enumerate}
  The function $f$ is called \emph{continuous} if it is continuous at each $x \in
  \dom(f)$.
\end{defi}

\subsubsection{Cauchy continuity}

A Cauchy continuous function maps any Cauchy sequence to a Cauchy
sequence. One interesting property of Cauchy continuous functions is
that they always admit a (unique) continuous extension to the
completion of their domain. Since we deal with $A^\infty$ which is
complete, the completion of the domain of a function $f$, denoted
$\overline{\dom(f)}$, is simply its
closure.

\begin{defi}[Cauchy continuity]%
  \label{def:cauchy-cont}
  A function $f:A^\omega \rightarrow A^\omega$ is \emph{Cauchy continuous} if the image of a Cauchy sequence in $\dom(f)$ is a Cauchy sequence.
\end{defi}

\begin{rem}%
  \label{rem:cauchy-cont}
  Any Cauchy continuous function $f$ can be continuously extended over
  $\overline{\dom(f)}$ in a unique way, which we denote by $\bar f$.
\end{rem}

\subsubsection{Uniform continuity}

\begin{defi}[Uniform continuity]
  A function $f:A^\omega \rightarrow A^\omega$ is \emph{uniformly continuous} if
 there exists a mapping $m:\N\rightarrow \N$ such that:  \[\forall i \geq 0, \forall x,y \in \dom(f), \length{x
  \wedge y} \geq m(i) \Rightarrow \length{f(x) \wedge f(y)} \geq i\]
  Such a function $m$ is called a \emph{modulus of continuity}\footnote{The usual notion of modulus of continuity is defined with respect to distance, but here we choose to define it with respect to longest common prefixes, for convenience. Given $m$ a modulus of continuity in our setting we can define $\omega:x\mapsto 2^{-m\left(\left\lceil\log_2\left(\frac 1 x\right)\right\rceil\right)}$ and recover the usual notion.} for $f$. We also say that $f$ is \emph{$m$-continuous}.
  Finally, a functional \nrt $T$ is \emph{uniformly continuous} when $\rel{T}$
  is uniformly continuous.
\end{defi}

\begin{rem}\label{rem:unifcont}
In the case of a finite alphabet, and in general for compact spaces, Cauchy continuity is equivalent to uniform continuity, but for infinite alphabets this does not hold anymore. Consider the following function $f$ computable by a \drt over the data alphabet $(\N,\set{0,<})$ and defined by $u0x\mapsto x$, for $u0\in \N^*$ being a strictly decreasing sequence. Then this function is not uniformly continuous, since two words may be arbitrarily close yet have very different images. However one can check that the image of a Cauchy sequence is indeed Cauchy: let $(x_n)_{n \in \N}$ be a Cauchy sequence in the domain of $f$. Let us assume without loss of generality that all the $x_n$'s begin with the same letter $i\in\N$. Then, after reading at most $i+1$ symbols of one of the $x_n$'s, the \drt outputs something. Let $j\in \N$ and let $N$ be such that for all $m,n\geq N$ we have $|x_m\wedge x_n|\geq i+j+1$. Thus we have $|f(x_m)\wedge f(x_n)|\geq j$, which means that $(f(x_n))_{n \in \N}$ is Cauchy.
\end{rem}

We've seen in the previous remark that Cauchy continuity and uniform continuity don't coincide over infinite alphabets. However when dealing with oligomorphic structures we recover some form of compactness, that is compactness of $\D^\infty/{\aut(\D)}$, which ensures that the two notions do coincide in this case.
\begin{prop}
  Let $\D$ be an oligomorphic structure and let $f:\D^\omega\rightarrow \D^\omega$ be an equivariant function. Then $f$ is uniformly continuous if and only if it is Cauchy continuous.

\end{prop}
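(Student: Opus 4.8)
The plan is to prove the two implications separately, the forward one being elementary and valid over any alphabet, the backward one being where oligomorphicity and equivariance are genuinely used, through the compactness of $\D^\omega/\aut(\D)$. The whole argument rests on one observation that I would isolate first: automorphisms act as isometries. Since any $\mu\in\aut(\D)$ is a bijection applied letterwise, agreement and disagreement of prefixes are preserved, so $\length{\mu(u)\wedge\mu(v)}=\length{u\wedge v}$ (and hence $\dist{\mu(u),\mu(v)}=\dist{u,v}$) for all $u,v\in\D^\infty$. Moreover, equivariance of $f$ forces $\dom(f)$ to be $\aut(\D)$-invariant and yields $\length{f(\mu(u))\wedge f(\mu(v))}=\length{\mu(f(u))\wedge\mu(f(v))}=\length{f(u)\wedge f(v)}$ for $u,v\in\dom(f)$. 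These two facts are what let me freely ``renormalise'' words by automorphisms without disturbing any of the quantities that matter.

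For the forward implication I would argue directly and observe that neither oligomorphicity nor equivariance is needed. Assuming $f$ is $m$-continuous and $(x_n)$ is Cauchy in $\dom(f)$, for a target $i$ I pick $N$ with $\length{x_p\wedge x_q}\geq m(i)$ for all $p,q\geq N$; the modulus then gives $\length{f(x_p)\wedge f(x_q)}\geq i$, so $(f(x_n))$ is Cauchy. This is essentially the computation already sketched for the example in Remark~\ref{rem:unifcont}.

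For the converse I would reason by contraposition. Using the equivalent form of uniform continuity (a modulus exists iff for every $i$ there is a uniform $j$), failure of uniform continuity provides a threshold $i_0$ such that for every $j$ there are $x_j,y_j\in\dom(f)$ with $\length{x_j\wedge y_j}\geq j$ but $\length{f(x_j)\wedge f(y_j)}<i_0$. The goal is to turn these ``near-coincident inputs with separated outputs'' into a single Cauchy sequence whose image is not Cauchy. The idea is to apply automorphisms $\mu_k$ to normalise the pairs so that $\mu_k(x_{j_k})$ and $\mu_k(y_{j_k})$ all converge to one common limit $\xi$; then the interleaved sequence $\mu_1(x_{j_1}),\mu_1(y_{j_1}),\mu_2(x_{j_2}),\mu_2(y_{j_2}),\dots$ lies in $\dom(f)$ (by invariance) and is Cauchy, while by equivariance and the isometry property $\length{f(\mu_k(x_{j_k}))\wedge f(\mu_k(y_{j_k}))}=\length{f(x_{j_k})\wedge f(y_{j_k})}<i_0$, so consecutive terms of the image remain at distance $>2^{-i_0}$ infinitely often, contradicting Cauchy continuity.

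The main obstacle, and the only place oligomorphicity is used, is producing the common limit $\xi$ and the aligning automorphisms $\mu_k$. For $w\in\D^\omega$ let $O_n(w)$ be the orbit of its length-$n$ prefix; oligomorphicity makes each $O_n$ range over finitely many values, and $O_{n+1}(w)$ determines $O_n(w)$. A diagonal (König-style) extraction then yields a subsequence $(x_{j_k})$ along which every $O_n(x_{j_k})$ is eventually constant, equal to some coherent family $(\Omega_n)_n$; compactness of $\D^\omega/\aut(\D)$ is exactly what realises this family by an actual $\xi\in\D^\omega$ with $O_n(\xi)=\Omega_n$. I would then choose $n_k\to\infty$ with $n_k\leq j_k$ and $O_{n_k}(x_{j_k})=O_{n_k}(\xi)$, and take $\mu_k\in\aut(\D)$ sending the length-$n_k$ prefix of $x_{j_k}$ to that of $\xi$, so that $\mu_k(x_{j_k})\to\xi$; since $\length{x_{j_k}\wedge y_{j_k}}\geq j_k\geq n_k$, the isometry property gives $\mu_k(y_{j_k})\to\xi$ as well. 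The delicate points to get right are that a single automorphism must normalise both $x_{j_k}$ and $y_{j_k}$ (handled by keeping $n_k\le j_k$) and that the coherent orbit sequence is genuinely realisable in $\D^\omega$ (this is precisely the compactness of the quotient, hence the hypothesis of oligomorphicity).
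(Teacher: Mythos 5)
Your proof is correct and follows essentially the same route as the paper's: contraposition from failure of uniform continuity, normalisation of the witnessing pairs $(x_n,y_n)$ by automorphisms obtained from compactness of $\D^\omega/{\aut(\D)}$, and interleaving the normalised sequences to produce a Cauchy sequence whose image is not Cauchy, using equivariance to preserve the output mismatch. The only cosmetic difference is that you partially unfold the compactness argument into an explicit diagonal extraction on prefix orbits (invoking compactness only to realise the limiting orbit family by an actual word $\xi$), whereas the paper extracts a convergent subsequence of the pairs in the quotient space directly.
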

\begin{proof}
  It is clear that a uniformly continuous function is in particular Cauchy continuous.
  Let $\D$ be an oligomorphic structure and let $f:\D^\omega\rightarrow \D^\omega$ be an equivariant function. 
  Let us assume that $f$ is not uniformly continuous. This means that there exists $i\in \N$ and a sequence $(x_n,y_n)_{n\in \N}$ such that for all $n$, $|x_n\wedge y_n|\geq n$ and $|f(x_n)\wedge f(y_n)|\leq i$.
  Let us consider the sequence $([x_n],[y_n])_{n\in \N}$ of pairs of elements in $\D^\omega/{\aut(\D)}$, \ie words are seen up to automorphism. Using standard arguments, one can show that the set $\D^\omega/{\aut(\D)}$, equipped with the distance $d([x],[y]) = \min_{u \in [x],v \in [y]} d(u,v)$, is compact (see~\cite[Proposition 12.28]{exibard:tel-03409602} for details). As a consequence, we can extract a subsequence (which we also call $([x_n],[y_n])_{n\in \N}$ for convenience) and which is convergent.
  This means that there are automorphisms $(\mu_n)_{n\in \N}$ such that the sequence
  $(\mu_n(x_n))_{n\in \N}$  (and thus $(\mu_n(y_n))_{n\in \N}$) converges. Hence
  by interleaving $(\mu_n(x_n))_{n\in \N}$ and $(\mu_n(y_n))_{n\in \N}$, we
  obtain a converging sequence whose image is divergent, which means that $f$ is
  not Cauchy continuous. \qedhere

\end{proof}

\begin{rem}
In order to refine uniform continuity one can study $m$-continuity for particular kinds of functions $m$.
For instance for $m:i\mapsto i+b$, $m$-continuous functions are exactly $2^b$-Lipschitz continuous functions. Similarly, for $m: i\mapsto ai+b$, $m$-continuous function are exactly the $\frac 1 a$-H\"older continuous functions.

Note that while these notions are interesting in and of themselves, they are very sensitive to the metric that is being used. For instance the metric $d(x,y)=\frac 1 {|x\wedge y|}$ while defining the same topology over words, yields different notions of Lipschitz and H\"older continuity.
\end{rem}

\subsection{Computability notions}

Let $\D$ be a data set. In order to reason with computability, we assume in the sequel that the countable
set of data values $\D$ we are dealing with has an effective
representation, meaning that each element can be represented in a
finite way.
For instance, this is the case when $\D = \mathbb{N}$. Moreover, we assume that checking if a tuple of values belongs to some relation of $\D$ is decidable. We say that the structure $\D$ is \emph{representable}.
Formally, a structure is representable if there exists a finite
alphabet $A$ and an injective function $\mathsf{enc} : \D\rightarrow
A^*$ such that the sets $\{\mathsf{enc}(d)\mid d\in \D\}$, $\{\mathsf{enc}(c)\mid c\text{ is a constant of
}\D\}$ and $\set{\mathsf{enc}(d_1)\sharp \cdots \sharp
  \mathsf{enc}(d_k)\mid\ (d_1,\ldots, d_k)\in R}$ are decidable for all predicates
$R$ of $\D$ and $\sharp\not\in A$. Any infinite word
$d_1d_2\dots\in \D^\omega$ can be encoded as the $\omega$-word
$\mathsf{enc}(d_1)\sharp\mathsf{enc}(d_2)\sharp\dots\in
(A^*\sharp)^\omega$.

We now define how a Turing machine can compute
a function from $\D^\omega$ to $\D^\omega$. We consider deterministic
Turing machines whose cells can contain a letter from $A\cup \{\sharp\}$ or a letter
from a finite working alphabet. They have three tapes: a read-only
one-way input tape on alphabet $A\cup \{\sharp\}$ (containing an encoding of an infinite input data word), a two-way working
tape, and a write-only one-way output tape on alphabet $A\cup
\{\sharp\}$ (on which they write the
encoding of the infinite output data word). Since we always work
modulo encoding, for the sake of simplicity, from now on and in the
rest of the paper, we assume that each cell of the Turing machine, on
the input and output tapes, contain a data value $d\in\D$, while cells of
the working tape are assumed to contain either a data value $d\in \D$ or a
letter from the working alphabet. So, instead of saying the input contains
the encoding of a data word $x$, we just say that it contains the
input data word $x$. We discuss in
Remark~\ref{rem:encodings} how the computability
notions we introduce hereafter are sensitive to encodings.

Consider such a Turing machine $M$ and some input data word $x\in \D^\omega$. For any integer $k\in\mathbb{N}$, we let $M(x,k)$ denote the
finite output data word written by $M$ on its output tape after reaching cell number $k$
of the input tape (assuming it does).
Observe that as the output tape is write-only, the sequence
of data words $(M(x,k))_{k\ge 0}$ is non-decreasing, and thus we
denote by $M(x)$ the limit content of the output tape. 

\begin{defi}[Computability]
Let $\D$ be a representable data domain.
  A data word function $f:\D^\omega \rightarrow \D^\omega$ is \emph{computable} if there exists a deterministic multi-tape
  machine $M$ such that for all $x\in\dom(f)$, $M(x)=f(x)$. We say that $M$ \emph{computes} $f$.
\end{defi}

\begin{defi}[Cauchy computability]
Let $\D$ be a representable data domain.
  A data word function $f:\D^\omega \rightarrow \D^\omega$ is \emph{Cauchy computable} if there exists a deterministic multi-tape
  machine $M$ computing $f$ such that for all $x$ in the topological
  closure $\overline{\dom(f)}$ of $\dom(f)$, the sequence $(M(x,k))_{k\ge 0}$
  converges to an infinite word.
  In other words a Cauchy computable function is a function which admits a continuous extension to the closure of its domain and which is computable. We say that $M$ \emph{Cauchy computes} $f$.
\end{defi}

\begin{defi}[Uniform computability]
Let $\D$ be a representable data domain.
  A data word function $f:\D^\omega \rightarrow \D^\omega$ is \emph{uniformly computable} if there exists a deterministic multi-tape
  machine $M$ and a computable mapping $m:\N\rightarrow \N$ such that
  $M$ computes $f$ and for all $i\geq 0$ and $x\in\dom(f)$, $|M(x,m(i))|\geq i$. Such a function $m$ is called a \emph{modulus of computability} for $f$. In that case $f$ is called \emph{$m$-computable}. We say that $M$ \emph{uniformly computes} $f$, and also that $M$ \emph{$m$-computes} $f$.
\end{defi}

\begin{exa}
  The function $g$ defined in the Introduction (p.~\pageref{page:functiong}), for the data domain of
  integers, is computable. Remind that it is defined on all input
  words of the form $x = su_1d_1su_2d_2s\dots$ such that $s$ occurs
  infinitely often, and for all $i$, $s$ does not occur in $u_{i}d_i$,
  by $g(x) = d_1^{|u_1|+1}sd_2^{|u_2|+1}s\dots$. A Turing machine just
  needs to read the input up to $d_1$, then output $d_1$ exactly
  $|u_1|+1$ times, and so on for the other pieces of inputs.

  In Remark~\ref{rem:unifcont}, the function $f$ is not uniformly
  continuous but Cauchy continuous. It is actually not uniformly
  computable but Cauchy computable. As a matter of fact, all the
  computability notions we define here entail
  the respective continuity notions defined before. We make this
  formal in Section~\ref{sec:compvscont}.
\end{exa}

\begin{rem}[Robustness to encoding]\label{rem:encodings}
  When actually representing words over an infinite alphabet, it is
  not realistic to assume that one letter takes a constant amount of
  space and can be read in a constant amount of time. Then, which of
  the many notions introduced above are sensitive to encoding and
  which ones are more robust?

  Let $\D$ be a representable structure, and let
  $\mathsf{enc}:\D\rightarrow A^*$ be its encoding function. Let $f:\D^\omega
  \rightarrow \D^\omega$ be a function and let $f_{\mathsf{enc}}:
  (A\uplus\sharp)^\omega \rightarrow (A^*\uplus\sharp)^\omega$ be defined
  as ${\mathsf{enc}_\sharp}\circ f\circ {\mathsf{enc}_\sharp}^{-1}$, where
  $\mathsf{enc}_\sharp:d_1\cdots d_n \mapsto \mathsf{enc}(d_1)\sharp \cdots\sharp \mathsf{enc}(d_n)$. Continuity and
  computability are robust enough so that $f$ is continuous
  (resp.~computable) if and  only if $f_{\mathsf{dec}}$ is. Cauchy
  continuity and computability also fall under this category. In
  contrast, uniform continuity and uniform computability are very
  sensitive to encoding. As an example, the function which maps a word
  to the second letter in the word is \emph{never} uniformly continuous,
  since the encoding of the first letter may be arbitrarily
  long. Nevertheless, uniform computability is
  still a relevant notion, as it provides guarantees on the maximal number of
  input data values which need to be read to produce a given number of output
  data values, even though the encoding of those values can be arbitrarily
  large.
\end{rem}

\subsection{Computability versus continuity}\label{sec:compvscont}

In this section, we show that all the computability notions
(computability, Cauchy continuity, \ldots) imply their respective
continuity notions. We then give general conditions under which the
converse also holds.

\subsubsection{From computability to continuity}
\begin{thm}\label{thm:comp2cont}
Let $f:\D^\omega\rightarrow \D^\omega$ and let $m:\N\rightarrow \N$, the following implications hold:
\begin{itemize}
  \item $f$ is computable $\Rightarrow$ $f$ is continuous
  \item $f$ is Cauchy computable $\Rightarrow$ $f$ is Cauchy continuous
  \item $f$ is uniformly computable $\Rightarrow$ $f$ is uniformly continuous
  \item $f$ is $m$-computable $\Rightarrow$ $f$ is $m$-continuous
\end{itemize}
\end{thm}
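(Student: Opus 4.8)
The plan is to isolate a single \emph{locality} observation about the computing machine $M$ and then derive all four implications from it by unwinding definitions; none of the four requires a separate idea. The key observation is that the input tape of $M$ is read-only and one-way, so by the time the head reaches cell number $k$ the computation has consulted at most the first $k$ data values of the input. Hence the partial output $M(x,k)$ depends only on the prefix $x[1{:}k]$: formally, if $|x \wedge y| \geq k$ then $M(x,k) = M(y,k)$. Moreover, since the output tape is write-only and one-way, $(M(x,k))_{k\geq 0}$ is a non-decreasing sequence of prefixes of $M(x)$ with limit $M(x)$; and when $x \in \dom(f)$ we have $M(x) = f(x)$, so each $M(x,k) \prefix f(x)$. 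All four proofs combine these two facts.

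For the first implication I would fix $x \in \dom(f)$ and verify characterization (b) of continuity. Given $i$, since $M(x,k) \to f(x) \in \D^\omega$, I choose $k$ with $|M(x,k)| \geq i$ and set $j = k$. For any $y \in \dom(f)$ with $|x \wedge y| \geq j$, locality gives $M(x,k) = M(y,k)$, so $M(x,k)$ is a common prefix of $f(x)$ and $f(y)$ of length $\geq i$; thus $|f(x) \wedge f(y)| \geq i$, and $f$ is continuous at $x$. As $x$ was arbitrary, $f$ is continuous.

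The $m$-computable case is the uniform analogue of the same argument, with the bound $k$ now independent of $x$: given $i$, set $k = m(i)$. For any $x,y \in \dom(f)$ with $|x \wedge y| \geq m(i)$, locality yields $M(x,m(i)) = M(y,m(i))$, which by the defining property of the modulus has length $\geq i$ and is a common prefix of $f(x)$ and $f(y)$; hence $|f(x) \wedge f(y)| \geq i$ and $f$ is $m$-continuous. Uniform computability is $m$-computability for some computable $m$, so uniform continuity follows at once by forgetting the computability of $m$.

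For Cauchy computability I would take a Cauchy sequence $(x_n)$ in $\dom(f)$; by completeness of $\D^\infty$ it converges to some $x \in \overline{\dom(f)}$. By hypothesis $(M(x,k))_k$ converges to an infinite word $w$. Given $i$, pick $k$ with $|M(x,k)| \geq i$; for all sufficiently large $n$ we have $|x_n \wedge x| \geq k$, so $M(x_n,k) = M(x,k) \prefix f(x_n)$ agrees with $w$ on the first $i$ letters. Hence $f(x_n) \to w$, so $(f(x_n))$ is Cauchy, proving Cauchy continuity. I expect no genuine obstacle: the only two points requiring care are the locality lemma, whose validity rests entirely on the one-way read-only input tape, and the Cauchy case, where completeness of $\D^\infty$ is what lets one pass from the Cauchy sequence to its limit $x$ before invoking the machine's guaranteed behaviour on $\overline{\dom(f)}$.
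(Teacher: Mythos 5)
Your proof is correct and follows essentially the same route as the paper's: both rest on the single locality/determinism observation that $M(x,k)$ depends only on the first $k$ input letters together with monotonicity of the output tape, both derive uniform continuity from the $m$-computable case, and both handle the Cauchy case by passing to the limit point in $\overline{\dom(f)}$ and invoking the machine's convergence there. The only cosmetic difference is that for plain continuity you verify characterization (b) directly, whereas the paper verifies the sequential characterization (a); the two are equivalent by definition.
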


\begin{proof}
    Assume that $f$ is computable by a deterministic
    multi-tape Turing machine $M$. Let $x$ be in the topological closure
    of $\dom(f)$ and $(x_n)_n$ be a sequence in $\dom(f)$ converging to
    $x$. We show that $(f(x_n))_n$ converges to $M(x)$ if $M(x)$ is infinite, which is the case for all $x \in \dom(f)$ since $f$ is computable.
    For all
    $k\geq 0$, let $p_k$ the prefix of $x$ of length $k$. Since
    $\lim_{n\rightarrow\infty} x_n= x$, for all $k$, there exists $n_k$ such
    that for all $m\geq n_k$, $p_k\leq x_{m}$. As $M$
    is a deterministic machine, it implies that $M(x,k) \leq
    f(x_m)$. So, for all $k$, $M(x,k)\leq f(x_m)$ for all but
    finitely many $m$. It follows that $(f(x_m))_m$ converges to
    $M(x) = \lim_{k\rightarrow\infty} M(x,k)$ if $M(x)$ is an infinite word, which it is if $x\in\dom(f)$, entailing continuity of
    $f$. If additionally $f$ is Cauchy computable, then it is also the
    case for all $x\in\overline{\dom(f)}$, entailing Cauchy continuity
    of $f$.

    It remains to show the fourth statement, which entails the
    third. So, let us assume that $f$ is $m$-computable by some machine $M$. We show it is
    $m$-continuous. Let $i\geq 0$, $x,y\in\dom(f)$ such that $|x\wedge
    y|\geq m(i)$. We must show that $|f(x)\wedge f(y)|\geq i$. We have
    $M(x,m(i)) = M(y,m(i))$ because $M$ is deterministic and $|x\wedge
    y|\geq m(i)$. By
    definition of $m$-computability, we also have $|M(x,m(i))|\geq
    i$. Since $M(x,m(i))\leq f(x)$ and $M(y,m(i))\leq f(y)$, we get
    $|f(x)\wedge f(y)|\geq i$, concluding the proof.
\end{proof}

\subsubsection{From continuity to computability} While, as we have
seen in the last section, computability of a function $f$ implies its continuity, and
respectively for all the notions of computability and continuity we
consider in this paper, the converse may not hold in general. We give
sufficient conditions under which the converse holds for $f$, namely
when it has a computable next-letter problem. This problem asks, given
as input two finite words $u,v\in\D^*$, to output, if it exists, a data value $d\in
\D$ such that for all $y\in \D^\omega$ such that $uy\in\dom(f)$, we have $vd\leq f(uy)$. Because of the universal
quantification on $y$, note that $d$ is unique if it
exists. Formally:

\vspace{2mm}
\begin{minipage}{0.8\textwidth}
\underline{\textbf{Next-Letter Problem} for $f$}:

\begin{tabular}{lll}
  \textbf{Input} & $u,v\in \D^*$ \\
  \textbf{Output} & $\left\{\begin{array}{ll} d\in\D & \text{if for
                                                       all }y\in
                                                       \D^\omega\text{
                                                       s.t. }
                                                       uy\in\dom(f),
                                                        vd\leq f(uy)
 \\
                         \texttt{none} & \text{otherwise}\end{array}\right.$
\end{tabular}
\end{minipage}

\begin{thm}\label{thm:cont2comp}
  Let $f:\D^\omega\rightarrow \D^\omega$ be a function with a computable next-letter problem and let $m:\N\rightarrow \N$, the following implications hold:
  \begin{itemize}
    \item $f$ is continuous $\Rightarrow$ $f$ is computable
    \item $f$ is Cauchy continuous $\Rightarrow$ $f$ is Cauchy computable
    \item $f$ is uniformly continuous $\Rightarrow$ $f$ is uniformly computable
    \item $f$ is $m$-continuous $\Rightarrow$ $f$ is $m$-computable
  \end{itemize}
  \end{thm}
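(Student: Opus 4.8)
The plan is to build a Turing machine $M$ that, on input $x \in \dom(f)$, repeatedly calls the next-letter oracle to extend the output one data value at a time. I would maintain two finite words: the prefix $u$ of the input read so far, and the prefix $v$ of the output produced so far. At each stage, $M$ queries the next-letter problem on $(u,v)$. If the oracle returns some $d \in \D$, then by definition $vd \leq f(uy)$ for every $y$ with $uy \in \dom(f)$; since $x \in \dom(f)$ and $u \leq x$, we may safely append $d$ to the output tape, update $v \gets vd$, and repeat the query with the same $u$. If the oracle returns \texttt{none}, then $M$ reads one more input data value, extending $u$, and queries again. This interleaving of oracle calls and input reads is the core of the construction, and each step is effective because the next-letter problem is assumed computable.

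The correctness argument splits into two directions. Soundness (every data value written is correct) follows directly from the next-letter specification together with $x \in \dom(f)$: any $d$ output satisfies $vd \leq f(x)$ by instantiating $y = u^{-1}x$. The substantial direction is \emph{productivity}: I must show that when $f$ is continuous at $x$, the machine outputs infinitely many data values, so that $M(x) = f(x)$ in the limit rather than stalling at some finite prefix. Suppose $M$ has produced $v$ with $|v| = i$ after reading some prefix $u \leq x$, but then never outputs again despite reading ever-longer prefixes of $x$. This means that for every prefix $u' $ of $x$ with $u \leq u'$, the next-letter oracle on $(u', v)$ returns \texttt{none}, i.e.\ there is no single data value safely outputtable. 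I would derive a contradiction from continuity: by continuity at $x$ (characterization (b)), taking the target length $i+1$ yields a $j$ such that every $y \in \dom(f)$ with $|x \wedge y| \geq j$ satisfies $|f(x) \wedge f(y)| \geq i+1$. Choosing $u'$ to be the prefix of $x$ of length $j$, the value $f(x)[i+1]$ is then forced: it equals $f(uy)[i+1]$ for every continuation $y$ with $u'y \in \dom(f)$, because all such $f(u'y)$ agree with $f(x)$ up to position $i+1$. Hence the oracle on $(u',v)$ must return $f(x)[i+1]$, contradicting that it returned \texttt{none}. So the machine keeps producing, and in the limit $M(x) = f(x)$.

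For the Cauchy variant, the same machine $M$ works, and I would argue that on any $x \in \overline{\dom(f)}$ the sequence $(M(x,k))_k$ converges to an infinite word. The reasoning is identical to the productivity argument above but invoked at points of $\overline{\dom(f)}$: Cauchy continuity guarantees $f$ extends continuously to $\bar f$ on the closure, and continuity of $\bar f$ at $x$ forces the oracle to keep producing, so the limit output is infinite. Note the oracle's universal quantification ranges over $y$ with $uy \in \dom(f)$, which is exactly the data needed to pin down $\bar f(x)$ when $x$ is only a limit point of the domain.

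For the uniform and $m$-continuity statements, I would refine the bookkeeping by exhibiting an explicit modulus of computability. Assuming $f$ is $m$-continuous, I claim the machine $M$ above, with a suitable computable modulus $m'$ derived from $m$, is $m'$-computing. Concretely, if $|x \wedge y| \geq m(i)$ forces $|f(x) \wedge f(y)| \geq i$, then by the productivity analysis the oracle on the length-$m(i)$ prefix of the input, together with the output produced so far, will have committed to at least $i$ output data values; so after reading $m(i)$ input cells the machine has written a prefix of length at least $i$, giving $|M(x, m(i))| \geq i$ once we account (via a computable bound) for the finitely many extra input reads the machine may interleave between oracle calls. The main subtlety here, and the step I expect to require the most care, is translating the \emph{number of oracle calls} into the \emph{number of input tape cells read}: the machine alternates between calling the oracle (which reads no new input) and advancing the input head, so I must verify that reaching output length $i$ costs reading no more than $m(i)$ input values in the $m$-continuous case, which follows because continuity with modulus $m$ certifies that the length-$m(i)$ prefix of the input already determines the first $i$ output values. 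The uniform (non-$m$) case follows since uniform continuity provides such a modulus $m$, and $m$ computable yields $m'$ computable.
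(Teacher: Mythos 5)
Your proposal is correct and follows essentially the same route as the paper's proof: the same oracle-exhausting machine (query the next-letter problem repeatedly, read one more input value only when it returns \texttt{none}), the same soundness observation via $y = u^{-1}x$, a productivity argument by contradiction from continuity (you invoke characterization (b) directly where the paper builds a sequence $x[{:}i]\alpha_i \to x$ whose images fail to converge to $f(x)$ --- equivalent arguments), the same passage to the continuous extension $\bar f$ for the Cauchy case, and the same key fact for the $m$-continuous case, namely that the length-$m(i)$ input prefix already forces the first $i$ output letters. The only cosmetic blemish is your hedging about a derived modulus $m'$ and ``extra input reads'': since the machine exhausts the oracle at each input prefix before advancing the input head, $|M(x,m(i))| \geq i$ holds with the very same $m$, exactly as the theorem demands and as your final sentence in fact concludes.
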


\begin{proof}
    Let us assume the existence of a procedure
    $\textsf{Next}_f(u,v)$ which computes the next-letter
    problem. To show the four statements, we show the
    existence of a deterministic Turing machine $M_f$ common to the four
    statements, in the sense that $M_f$ computes $f$
    if $f$ is continuous, respectively Cauchy computes $f$ is f is
    Cauchy continuous, etc. The Turing machine $M_f$ is best presented
    in pseudo-code as follows.

    \begin{algorithm}[H]
        \KwData{$x\in \D^\omega$}
        $v:=\epsilon$\;
        \For{$i=0$ \bf{to} $+\infty$}{
          $d := \textsf{Next}_f(x[{:}i],v)$\;
          \While{$d \neq \texttt{none}$\label{line:test}}{
            output $d$\tcp*{write $d$ on the output tape}\label{line:append}
            $v := vd$\;
            $d := \textsf{Next}_f(x[{:}i],v)$\;
          }
          \tcc{end while loop when $d = \texttt{none}$}
        }
        \caption{Turing machine $M_f$ defined in pseudo-code.}
    \end{algorithm}

    Now, we show that if $f$ is continuous, then $M_f$
    computes $f$, i.e.\ $M_f(x) = f(x)$ for all $x\in\dom(f)$. First,
    for all $i\geq 0$, let $v_i = d_{i,1}d_{i,2}\dots$ be the sequence
    of data values outputted at line~\ref{line:append} at the $i$th iteration
    of the for loop. Note that the $i$th iteration may not exist when
    the test at line~\ref{line:test} is forever true. In that case, we
    set $v_i = \epsilon$. By definition of $M_f(x,i)$, we
    have\footnote{For any $\omega$-word $\alpha$, we let
      $\alpha.\epsilon = \alpha$.}
    $M_f(x,i) = v_0v_1\dots v_i$ for all $i\geq 0$.
    Moreover, by
    definition of the next-letter problem, we also have
    $M_f(x,i)\leq f(x)$. Now, by definition, $M_f(x) =
    v_0v_1v_2\dots$. So, if it is infinite, then $M_f(x) = f(x)$. It remains to
    show that it is indeed true when $f$ is continuous. Suppose it is
    not the case. Then there exists $i_0$ such that for all $i\geq
    i_0$ the call $\textsf{Next}_f(x[{:}i],v_0\dots
    v_{i_0})$ returns $\texttt{none}$ (assume $i_0$ is the smallest
    index having this property). Let $d$ such that
    $v_0\dots v_{i_0}d< f(x)$. Then, for all $i\geq
    i_0$, there exists $\alpha_i\in\data^\omega$ and $d'\neq d$ such that
    $x[{:}i]\alpha_i\in \dom(f)$ and $v_0\dots v_{i_0} d'<
    f(x[{:}i]\alpha_i)$. Clearly, the sequence
    $(f(x[{:}i]\alpha_i))_i$, if it converges, does not converge to
    $f(x)$. Since $(x[{:}i]\alpha_i)_i$
    converges to $x$, this contradicts the continuity of $f$.

    Consider now the case where $f$ is Cauchy-continuous. It implies
    that $f$ admits a unique continuous extension $\overline{f}$ to
    $\overline{\dom(f)}$ (see Remark~\ref{rem:cauchy-cont}). By the first statement we just proved,
    $M_{\overline{f}}$ computes $\overline{f}$, and by definition of
    Cauchy-computability, we conclude that $M_f$ Cauchy-computes $f$.

    We finally prove the fourth statement, which implies the
    third. Suppose that $f$ is $m$-continuous for some modulus of
    continuity $m$. We show that $M_f$ $m$-computes $f$. We already
    proved that $M_f$ computes $f$ (as $f$ is continuous). Let $i\geq
    0$ and $x\in\dom(f)$. It remains to show that $|M_f(x,m(i))|\geq
    i$. Let $j = m(i)$. Since the algorithm $M_f$ calls $\textsf{Next}_f(x[{:}j],\cdot)$ until it
    returns $\texttt{none}$, we get that $v_0v_1\dots v_j$ is the
    longest output which can be safely output given only
    $x[{:}j]$, i.e.\ $v_0\dots v_j = \bigwedge \{ f(x[{:}j]\alpha)\mid
    x[{:}j]\alpha \in\dom(f)\}$. If $v_j$ is infinite, then
    $|M_f(x,j)| = +\infty$ and we are done. Suppose $v_j$ is
    finite. Then, there exists $\alpha\in\data^\omega$ such that
    $x[{:}j]\alpha\in\dom(f)$ and $f(x)\wedge f(x[{:}j]\alpha) =
    v_0v_1\dots v_j = M_f(x,j)$. Since $|x\wedge x[{:}j]\alpha|
    \geq j = m(i)$, by $m$-continuity of $f$, we get that
    $|M_f(x,j)|\geq i$, concluding the proof.
\end{proof}

\section{Oligomorphic data}%
\label{sec:case_oligomorphic}
In this section we consider a structure $\D$ which is oligomorphic. We will show that in this case one can decide, under reasonable computability assumptions, all the notions of continuity introduced in the previous section, as well as compute the next-letter problem.
The first step is to prove characterizations of these properties, and then show that the characterizations are decidable.
Let $\rn:\N\rightarrow \N$ denote the \emph{Ryll-Nardzewski function} of $\D$ which maps $k$ to the number of orbits of $k$-tuples of data values, which is finite thanks to oligomorphism.

\subsection{Characterizing functionality and continuity}
The main goal of this section is to give for \nrt characterizations of functionality, continuity and uniform continuity, that consist in small witnesses. These small witnesses are obtained by pumping arguments that rely on the fact that there is only a finite number of configuration orbits.

We start by defining \emph{loop removal}, a tool which will prove useful throughout this section. The main idea is that although no actual loop over the same configuration can be guaranteed over infinite runs, the oligomorphicity property guarantees that a configuration \emph{orbit} will be repeated over long enough runs. A \emph{loop} is a run of the shape $C \xrightarrow {u|v} D\xrightarrow {w|z} \mu(D)$ such that $\mu(C)=C$. The shorter run $C\xrightarrow{\mu(u)|\mu(v)}\mu(D)$ is thus called the run obtained after \emph{removing the loop}.

\begin{prop}[Small run witness]%
\label{prop:loop}
  let $T$ be an \nrt with $k$ registers and state space $Q$, and let $C\xrightarrow {u|v} D$ be a run. Then there exists $u',v'$ with $|u'|\leq |Q|\cdot\rn(2k)$ such that $C\xrightarrow {u'|v'} D$.
\end{prop}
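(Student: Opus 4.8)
The plan is to prove this by iterated \emph{loop removal}, locating loops through a pigeonhole argument driven by the Ryll--Nardzewski bound $\rn(2k)$.

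First I would record the fact that makes loop removal legitimate: automorphisms act on runs. Concretely, if $C \xrightarrow{u|v} D$ is a run of $T$ and $\mu \in \aut(\D)$, then $\mu(C) \xrightarrow{\mu(u)|\mu(v)} \mu(D)$ is again a run. Indeed, each transition is governed by a test $\phi \in \test(R \uplus \set{\indata})$, which is a quantifier-free formula over the signature of $\D$ and is therefore preserved by $\mu$: we have $\bar d[\indata \leftarrow d] \models \phi$ iff $\mu(\bar d)[\indata \leftarrow \mu(d)] \models \phi$. The register operations $\keep$, $\setto$ and $\guess$ visibly commute with applying $\mu$ pointwise (for $\guess$, an arbitrary value stays arbitrary after $\mu$), and states are untouched by automorphisms. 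Hence the image of a run is a run, which is exactly what the loop-removal step requires to produce a genuine shorter run.

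Next I would set up the coloring. Write the run as $C = C_0 \xrightarrow{d_1|w_1} C_1 \cdots \xrightarrow{d_n|w_n} C_n = D$ with $n = |u|$, and let $\bar d_C$, $\bar d_i$ denote the register contents of $C$ and $C_i$. To each position $i$ I associate the color $(q_i, \mathcal{O}_i)$, where $q_i$ is the state of $C_i$ and $\mathcal{O}_i$ is the orbit of the concatenated $2k$-tuple $(\bar d_C, \bar d_i) \in \D^{2k}$ under the diagonal action of $\aut(\D)$. By oligomorphicity there are exactly $\rn(2k)$ orbits of $2k$-tuples, so there are at most $|Q| \cdot \rn(2k)$ possible colors. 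The crucial observation is that two positions $i < j$ sharing a color yield a removable loop: then $q_i = q_j$ and $(\bar d_C, \bar d_i)$, $(\bar d_C, \bar d_j)$ lie in the same orbit, so there is $\mu \in \aut(\D)$ with $\mu(\bar d_C) = \bar d_C$ and $\mu(\bar d_i) = \bar d_j$; since states are preserved, $\mu(C) = C$ and $\mu(C_i) = C_j$. The prefix from $C$ to $C_i$ followed by the segment from $C_i$ to $C_j = \mu(C_i)$ is thus a loop (with pivot $C_i$ and $\mu(C) = C$), and loop removal replaces it by $C \xrightarrow{\mu(\cdot)} C_j$; appending the unchanged tail from $C_j$ to $D$ gives a strictly shorter run from $C$ to $D$.

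Finally I would iterate: as long as $|u| > |Q| \cdot \rn(2k)$, the run visits strictly more positions than there are colors, so by pigeonhole some pair shares a color and a loop can be removed, strictly decreasing the length. The process terminates and yields $u',v'$ with $|u'| \leq |Q| \cdot \rn(2k)$ and $C \xrightarrow{u'|v'} D$. The step I would be most careful about is the choice of the $2k$-tuple rather than the $k$-tuple of current register contents: tracking the orbit of $(\bar d_C, \bar d_i)$, that is, the current configuration \emph{relative} to the fixed start $C$, is precisely what guarantees an automorphism that simultaneously fixes $C$ and maps $C_i$ to $C_j$, which is the exact shape demanded by loop removal. Coloring by $\bar d_i$ alone would not ensure $\mu(C) = C$ and would break the argument.
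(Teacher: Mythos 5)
Your proof is correct and takes essentially the same approach as the paper: pigeonhole on the at most $|Q|\cdot\rn(2k)$ orbit classes of the pairs $(C,C_i)$, yielding an automorphism $\mu$ with $\mu(C)=C$ and $\mu(C_i)=C_j$, followed by loop removal iterated until the run has length at most $|Q|\cdot\rn(2k)$. The only difference is that you make explicit what the paper leaves implicit, namely that automorphisms map runs to runs and that the shortening step must be repeated.
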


\begin{proof}
	The idea behind this proposition is simple: any large enough run must contain a loop and can thus be shortened.
  Let $u=a_1\cdots a_n$ with a run $C_1\xrightarrow {a_1|v_1} C_2\xrightarrow {a_2|v_2} C_3\cdots C_n$.
  Let us assume that $n > |Q|\rn(2k)$, we want to obtain a shorter run from $C_1$ to $C_n$.
  Let us consider the orbits of the pairs $(C_1,C_1),(C_1,C_2),\ldots,(C_1,C_n)$. Since $n > |Q|\rn(2k)$, there must be two pairs in the same orbit, \ie there must be two indices $1\leq i<j\leq n$ and some automorphism $\mu$ such that $\mu(C_1,C_i)=(C_1,C_j)$. Hence we obtain the run $\mu(C_1)\xrightarrow {\mu(a_1\cdots a_{i-1}|v_1\cdots v_{i-1})} \mu(C_i)=C_1\xrightarrow {\mu(a_1\cdots a_{i-1}|v_1\cdots v_{i-1})} C_j$ which is strictly shorter.
\end{proof}

\subsubsection{Characterization of non-emptiness}
Let an \nra (non deterministic register automaton) be simply an \nrt without any outputs. We give a characterization of non-emptiness in terms of small witnesses.
\begin{prop}%
  \label{prop:emptiness-oligo}
Let $A$ be an \nra, the following are equivalent:
\begin{enumerate}
\item\label{itm:ONonEmpty} $A$ recognizes at least one word
\item\label{itm:OWitness} there exist $C_0 \xrightarrow{u_1} C \xrightarrow {u_2} \mu(C)$ with $C$ being a final configuration, and $\mu\in \aut(\D)$
    \item\label{itm:OSmallWitness} there exist $C_0 \xrightarrow{u_1} C \xrightarrow {u_2} \mu(C)$ with $|u_1|,|u_2|\leq |Q|\cdot\rn(2k) $, $C$ being a final configuration, and $\mu\in \aut(\D)$
\end{enumerate}
\end{prop}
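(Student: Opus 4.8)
The plan is to prove the three statements equivalent by establishing the cycle $\eqref{itm:OSmallWitness} \Rightarrow \eqref{itm:OWitness} \Rightarrow \eqref{itm:ONonEmpty} \Rightarrow \eqref{itm:OSmallWitness}$. The first implication is trivial, since the small witness is a special case of the witness. The implication $\eqref{itm:OWitness} \Rightarrow \eqref{itm:ONonEmpty}$ is the ``easy'' direction where we must turn a lasso-shaped pattern on configuration orbits into a genuine accepting run; the implication $\eqref{itm:ONonEmpty} \Rightarrow \eqref{itm:OSmallWitness}$ is where the small-witness bound must be extracted.

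For $\eqref{itm:OWitness} \Rightarrow \eqref{itm:ONonEmpty}$, I would take the loop $C \xrightarrow{u_2} \mu(C)$ and iterate it using automorphisms to build an infinite run. Concretely, from $C \xrightarrow{u_2} \mu(C)$ one obtains $\mu(C) \xrightarrow{\mu(u_2)} \mu^2(C)$ by applying $\mu$ (runs are preserved by automorphisms), then $\mu^2(C) \xrightarrow{\mu^2(u_2)} \mu^3(C)$, and so on. Concatenating the initial segment $C_0 \xrightarrow{u_1} C$ with $u_2 \mu(u_2) \mu^2(u_2)\cdots$ yields an infinite initial run. Since $C$ is final and each $\mu^n(C)$ is again final (being final depends only on the state component, which is fixed by $\mu$), infinitely many configurations along this run are final, so the run is accepting and its input word witnesses non-emptiness.

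For $\eqref{itm:ONonEmpty} \Rightarrow \eqref{itm:OSmallWitness}$, I would start from an arbitrary accepting run $\rho$. Being final, $\rho$ visits final configurations infinitely often, so among the orbits of final configurations---of which there are finitely many by oligomorphism---some orbit is hit infinitely often. Picking two visits to that orbit gives $C_0 \xrightarrow{} C \xrightarrow{} C'$ with $C' = \mu(C)$ for some automorphism $\mu$ and $C$ final, which already establishes $\eqref{itm:OWitness}$. To get the length bounds of $\eqref{itm:OSmallWitness}$, I would apply Proposition~\ref{prop:loop} (Small run witness) to each of the two segments $C_0 \xrightarrow{} C$ and $C \xrightarrow{} \mu(C)$ separately, shortening each to length at most $|Q|\cdot\rn(2k)$ while preserving its source and target configurations.

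The main obstacle I anticipate is the pigeonhole step in $\eqref{itm:ONonEmpty} \Rightarrow \eqref{itm:OSmallWitness}$: I must argue that a \emph{single} final orbit recurs infinitely often and that I can choose the two occurrences so that the intermediate segment $C \xrightarrow{} \mu(C)$ is itself nonempty (length $\geq 1$), matching the lasso shape. This follows because finality forces infinitely many final configurations into finitely many orbits, so one orbit is visited infinitely often and in particular at least twice with a nonempty segment between the chosen occurrences. A secondary subtlety is checking that Proposition~\ref{prop:loop}, stated for runs between two fixed configurations, applies cleanly to the loop segment whose endpoints $C$ and $\mu(C)$ are themselves related by an automorphism---here it suffices to treat $\mu(C)$ as an ordinary fixed target configuration, so no additional care is needed.
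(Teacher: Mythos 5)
Your proof is correct and matches the paper's argument in substance: the same pigeonhole on the finitely many orbits of configurations to find a repeated final orbit, the same double application of Proposition~\ref{prop:loop} to shorten the segments $C_0 \xrightarrow{u_1} C$ and $C \xrightarrow{u_2} \mu(C)$ while preserving their endpoints, and the same automorphism iteration $u_1 u_2\,\mu(u_2)\,\mu^2(u_2)\cdots$ to produce an accepting run. The only difference is the orientation of the cycle of implications (the paper proves $(1)\Rightarrow(2)\Rightarrow(3)\Rightarrow(1)$, you prove $(3)\Rightarrow(2)\Rightarrow(1)\Rightarrow(3)$), which is purely cosmetic; your explicit handling of the nonemptiness of the loop segment is a detail the paper leaves implicit, and is handled correctly.
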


\begin{proof}
Let us assume (\ref{itm:ONonEmpty}) and let $x$ be a word accepted by $A$. Since $\D$ is oligomorphic, there is only a finite number of orbits of configurations. Thus an accepting run of $A$ over $x$ must go through some accepting configuration orbit infinitely often, and in particular at least twice. Hence (\ref{itm:OWitness}) holds.

Let us assume that (\ref{itm:OWitness}) holds. To bound the size of $u_1$ and $u_2$ and obtain (\ref{itm:OSmallWitness}), we apply Proposition~\ref{prop:loop} twice in $A$ (seen as a trivial NRT producing only epsilon outputs): once to remove loops in the run $C_0 \xrightarrow{u_1 \mid \epsilon} C$ and once in $C \xrightarrow{u_2 \mid \epsilon} \mu(C)$.

Let us assume (\ref{itm:OSmallWitness}), then the word $u_1u_2\mu(u_2)\mu^2(u_2)\cdots$ is accepted by $A$.
\end{proof}

\subsubsection{Characterizing functionality}
Characterizing functionality is slightly more complicated since we have to care about outputs. Moreover, we need to exhibit patterns involving two runs which makes pumping more involved.

We start with a useful yet quite technical lemma, which will allow us to remove loops while preserving mismatches.
\begin{lem}[Small mismatch witness]%
  \label{lem:smallmis}
  There exists a polynomial $P(x,y,z)$ such that the following hold.

  Let $T$ be an \nrt, with $k$ registers, a state space $Q$ and a maximal output length $L$.
  Let $C_1\xrightarrow{u|u_1}D_1$ and $C_2\xrightarrow {u|u_2}D_2$ be runs such that $u_1\mismatch u_2$.
  Then there exists $u'$ of length less than $P(\rn(4k+2),|Q|,L)$ and $u_1',u_2'$, so that $C_1\xrightarrow{u'|u_1'}D_1$, $C_2\xrightarrow {u'|u_2'}D_2$ with $u_1'\mismatch u_2'$.

\end{lem}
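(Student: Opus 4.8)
The plan is to prove Lemma~\ref{lem:smallmis} by simultaneously pumping both runs down while carefully preserving the mismatch, using the product construction together with Proposition~\ref{prop:loop}. The central difficulty is that a naive application of loop removal on the two runs independently breaks synchronization: the two runs read the \emph{same} input $u$, so we are only allowed to remove a loop from one run if we remove the corresponding factor from the other run over the same input positions. This forces us to track orbits of \emph{pairs} of configurations along the synchronized run, which is why the register count jumps to $4k+2$ in the bound (two configurations with $k$ registers each, plus a constant number of extra registers to mark the positions witnessing the mismatch).

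First I would set up the synchronized run as a run of a product automaton reading $u$, whose configurations are pairs $(C,D)$ of configurations of $T$, tracking both runs at once. Because $u_1 \mismatch u_2$, there is some output position $p$ at which the two outputs are defined and disagree (or one output has strictly ended while the other continues with a different letter); I would pinpoint the transition at which this discriminating output letter is produced on each side. The key idea is to freeze the data values responsible for the mismatch: I would augment each configuration with two auxiliary registers that store, respectively, the two distinct data values $w_1[p]$ and $w_2[p]$ occurring at the mismatching output position, so that the orbit of the augmented pair of configurations records enough information to guarantee the mismatch survives pumping. This is the role of the ``$+2$'' in $4k+2$. With these auxiliary registers in place, the mismatch becomes a property of the augmented configuration orbit rather than of the specific run, so it is invariant under the automorphisms applied during loop removal.

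Next I would split the synchronized run at the position where the discriminating letters are emitted into a prefix (up to and including the emission of the mismatching letters) and a suffix. On the prefix, I apply Proposition~\ref{prop:loop} to the product automaton with $4k+2$ registers: any run of the product longer than $|Q|^2 \cdot \rn(4k+2)$ contains a repeated orbit of augmented pairs and can be shortened by removing the loop via some automorphism $\mu$, and crucially applying $\mu$ preserves the augmented configuration (hence the stored mismatching values), so the mismatch is maintained. This bounds the prefix length by a polynomial in $\rn(4k+2)$ and $|Q|$. The suffix, which only needs to carry the two runs back to their respective endpoints $D_1,D_2$, is handled by a second application of Proposition~\ref{prop:loop} (again in the product), bounding its length similarly. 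The maximal output length $L$ enters because the mismatching position $p$ lies within at most $L$ output symbols of the last transition before the split, so the factor where we must delicately preserve the mismatch spans at most a bounded number of transitions; this accounts for the dependence of $P$ on $L$.

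The main obstacle, and the step I would spend the most care on, is verifying that the two auxiliary registers genuinely suffice to make ``the outputs mismatch'' an orbit-invariant, decidable-by-orbit property, and that loop removal in the product never accidentally realigns the outputs so as to destroy the mismatch. The subtle point is that removing a loop shortens \emph{both} outputs $u_1,u_2$ by the factors produced inside the loop, so the position $p$ of the mismatch shifts; I must argue that because we only remove loops occurring strictly before the emission of the discriminating letters (on the prefix) or strictly after both discriminating letters have been emitted (on the suffix), the relative order of the two stored values at the discriminating position is untouched, so $u_1' \mismatch u_2'$ still holds. Assembling these bounds, the total input length is at most a polynomial $P(\rn(4k+2),|Q|,L)$, as claimed.
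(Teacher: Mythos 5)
There is a genuine gap, and it sits exactly where the paper's proof has to work hardest. Your argument rests on the claim that removing a synchronous loop lying strictly before (or strictly after) the transitions emitting the discriminating letters $a_1,a_2$ cannot destroy the mismatch, because the automorphism used in loop removal fixes the two stored values. Keeping $a_1\neq a_2$ is indeed easy — the paper does it by restricting to loops whose automorphism fixes $a_1$ and $a_2$, which plays the same role as your auxiliary registers. But distinctness is not enough: a mismatch also requires the two differing letters to sit at the \emph{same position} of the two outputs. A synchronous loop generally produces output factors of \emph{different lengths} in the two runs, so removing it shifts the position of $a_1$ in $u_1'$ and of $a_2$ in $u_2'$ by different amounts (the paper's nonzero \emph{net effect}), and the de-synchronized outputs may then match. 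Concretely, take $u_1=b^na_1$ and $u_2=b^na_2w$ with $a_1=b\neq a_2$, and a loop strictly before both emissions producing one $b$ in $\rho_1$ and nothing in $\rho_2$: after removal, $u_1'=b^n \prefix u_2'$, and the mismatch is gone. This loop satisfies all your constraints, so your prefix/suffix pumping step is unsound as stated. (Two smaller issues: since the runs output at different rates, the $a_1$- and $a_2$-emitting transitions generally occur at different \emph{input} positions, so there is no single split point; and the $4k+2$ does not come from $2k$ registers plus two markers, but from orbits of pairs of product configurations — loop start and loop end — together with $a_1,a_2$, i.e. $2k+2k+2$.)

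The paper's proof is devoted almost entirely to this alignment problem. It classifies synchronous ``nice'' loops by their net effect on the two output prefixes: loops of net effect zero can be removed safely (this is the only case your argument actually covers); if all nice loops have strictly positive net effect (WLOG), a monotonicity argument on the output-length difference shows the run was already polynomially short, so nothing needs removing; and if loops of both signs exist, one removes all nice loops — which misaligns the mismatch by a multiple of the gcd of the effects — and then re-inserts positive and negative loops, each a polynomially bounded number of times via the signed Bézout identity (Lemma~\ref{lem:bezout}), to realign the discriminating letters. Your proposal would need to be supplemented by an argument of this kind; note also that the dependence of $P$ on $L$ comes from these effect and realignment estimates, not from the distance of the mismatch position to your split point.
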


\begin{proof}
  Let $\rho_1=C_1\xrightarrow{u|u_1}D_1$ and $\rho_2=C_2\xrightarrow {u|u_2}D_2$ be runs such that $u_1\mismatch u_2$.

  Our goal is to show that either $|u|$ is smaller than some polynomial in $\rn(4k+2),|Q|$ and $L$ or we can remove a synchronous loop in $\rho_1,\rho_2$ while preserving the mismatch.
  A synchronous loop is defined as a loop over the same input in the product transducer $T\times T$.
  Let $u_1=\alpha a_1\beta_1$, $u_2=\alpha a_2\beta_2$ with $a_1\neq a_2$.
  We will only consider removing loops that do \emph{not} contain the transitions producing the mismatching letters $a_1$ or $a_2$.

In order to make things more precise we introduce  some useful loop vocabulary.
   An \emph{initial loop} in $\rho=C_1\rightarrow C_2\rightarrow \cdots \rightarrow C_n$ is a loop starting in $C_1$.
  A \emph{simple loop} is a loop $C\rightarrow D\rightarrow \mu(D)$ so that there is no pair of configurations $B,\lambda(B)$ with $\lambda(C)=C$ which occur within the part $D\rightarrow \mu(D)$ of the run. Similarly a \emph{simple synchronous loop} is a simple loop over $T\times T$.

In the following we will consider loops that are synchronous and simple.
They will be of the shape $(C_1,C_2)\rightarrow (B_1,B_2) \rightarrow \mu(B_1,B_2)$ with $\mu(C_1,C_2)=(C_1,C_2)$. Moreover, we will ask that the automorphism satisfies $\mu(a_1)=a_1$ and $\mu(a_2)=a_2$, this is done to be sure that removing or adding loops does not affect the fact that $a_1\neq a_2$ (note that it would be enough to simply ask that $\mu(a_1)=a_1$). We call such loops \emph{nice}.
 With these constraints, if we have a sequence of configuration pairs in $(\rho_1,\rho_2)$ of length larger than $\rn(4k+2)Q^2$, we are sure to find a nice loop that preserves $(a_1,a_2)$.

  We call the \emph{effect} of a loop on $\rho_1$, the length of the output factor removed from $\alpha$, in particular the effect is $0$ if the factor removed is in $\beta_1$ (and symmetrically for $\rho_2$).
  We call the \emph{net effect} of a synchronous loop on $\rho_1,\rho_2$ the difference between the effect in $\rho_1$ and the effect on $\rho_2$. Loops with a positive, negative and null net effect are called positive, negative and null loops, respectively.

  If there exists a null nice loop, then removing it preserves the mismatch and we are done.
  We split the proof into the two remaining cases: (1)~either all nice loops are strictly positive (without loss of generality) or (2)~some nice loops are strictly positive while others are strictly negative. In the first case, we will show that $u$ has to be \emph{small}. In the second case, we will show that, after removing all nice loops, we can repump (a small number of) positive and negative nice loops to realign the mismatch.

  Let us consider a configuration $(B_1,B_2)$ in $(\rho_1,\rho_2)$.
  The \emph{effects} of $(B_1,B_2)$ is the set of effects of nice loops starting in $(C_1,C_2)$ ending in $(B_1,B_2)$ (note that we consider any nice loop, not just the ones occurring in $(\rho_1,\rho_2)$).
  Then we say that $(B_1,B_2)$ is \emph{positive} (resp.~\emph{negative}) if all its effect are positive (resp.~\emph{negative}).

  We consider two cases, (1)~either $(\rho_1,\rho_2)$ only has strictly positive configurations (without loss of generality) or (2)~it has configurations with positive and negative effects (possibly different configurations).

  Let us start with the first case, we observe two simple facts: the effect of a nice loop is bounded by $\rn(4k+2)|Q|^2L$ and, similarly, the output length of a run without nice loops is bounded by $\rn(4k+2)|Q|^2L$. Let us denote by $\Delta_0$ the difference of output lengths $|u_2|-|u_1|$. Since there are no nice loops having $0$ effect on $\rho_1$ this means that $|\beta_1|\leq \rn(4k+2)|Q|^2L$, and thus $\Delta_0\geq -\rn(4k+2)|Q|^2L$.
  We denote by $\Delta_1$ the difference of output length after removing one nice loop from $(\rho_1,\rho_2)$. We observe that $-\rn(4k+2)|Q|^2L\leq \Delta_0 < \Delta_1$ since all nice loops must be strictly positive. We carry on removing nice loops until there are none left.
Note that removing nice loops cannot introduce configurations with new effects: some configurations are erased, and to some an automorphism $\mu$ satisfying $\mu(C_1,C_2,a_1,a_2)=(C_1,C_2,a_1,a_2)$ is applied, which preserves the effect set.
We thus obtain $-\rn(4k+2)|Q|^2L\leq \Delta_0 <\Delta_1 < \ldots <\Delta_l \leq \rn(4k+2)|Q|^2L$ after removing $l$ nice loops and obtaining a run without nice loops. Hence we get that $l\leq 2\rn(4k+2)|Q|^2L$. This means that the runs $\rho_1,\rho_2$ could not have been very large. In fact we have $|\rho_1|\leq (l +1)\times \rn(4k+2)|Q|^2\leq 3\rn(4k+2)^2|Q|^4L$.

  We only have left to consider the second case, that is, $(\rho_1,\rho_2)$ has configurations with positive and negative effects. Note that the effects of configurations belong to the interval $[-\rn(4k+2)|Q|^2L,\rn(4k+2)|Q|^2L]$.
  Let $d$ denote in the following the $\gcd$ of all effects of configurations in $(\rho_1,\rho_2)$.
Let $(B_1^1,B_2^1), \ldots,(B_1^l,B_2^l)$ be configurations so that the $\gcd$ of their collective effects is $d$, and with at least one positive and one negative effect. We can assume that $l\leq \rn(4k+2)|Q|^2L$. We remove nice loops without deleting these configurations. Note that removing loops may mean applying an automorphism to some of these configurations. However, the automorphisms always preserve $(C_1,C_2,a_1,a_2)$ so the effect set of the configuration is left unchanged. Also note that the effects of such loops are all multiples of $d$.
After removing all these loops we are left with runs that are small ($\leq (l+1)\rn(4k+2)|Q|^2$) but the outputs $a_1$, $a_2$ may very well be misaligned (by a factor of $d$). The idea is to use the configurations $(B_1^1,B_2^1), \ldots,(B_1^k,B_2^k)$ to pump simple loops into the run to realign the two mismatching outputs. Let us explain how these nice loops can be pumped into the runs:
Let $\alpha:(C_1,C_2)\xrightarrow{(u_1,u_2)|(v_1,v_2)} (B_1,B_2)$ be a run and let $\beta:(C_1,C_2)\xrightarrow{(x_1,x_2)|(y_1,y_2)} (B_1,B_2)\xrightarrow {(z_1,z_2)|(w_1,w_2)}\mu(B_1,B_2)$ be a nice loop. By applying $\mu^{-1}$ to $\alpha$ and the second part of $\beta$ we obtain a run of the shape:
$(C_1,C_2)\xrightarrow {\mu^{-1}(u_1,u_2)|\mu^{-1}(v_1,v_2)}\mu^{-1}(B_1,B_2)\xrightarrow {\mu^{-1}(z_1,z_2)|\mu^{-1}(w_1,w_2)}(B_1,B_2)$, with the added effect of $\beta$.

Note that we have chosen the configurations carefully so that we can change the alignment by any multiple of $d$.
The next lemma (whose proof can be found in Appendix~\ref{sec:bezout})
shows that it is possible to find small iteration numbers for
each loop so as to realign the outputs:
  \begin{lem}[Signed generalized B\'ezout identity]%
    \label{lem:bezout}
  Let $p_1,\ldots,p_k\in \Z{\setminus}{\set{0}}$ be non-zero integers, such that at least two have different signs.
  Then there exist natural numbers $n_1,\ldots, n_k \leq \max(|p_1|^3,\ldots,|p_k|^3)$ such that:
  \[n_1p_1+\ldots +n_{k}p_k=\gcd(p_1,\ldots,p_k)\]
  \end{lem}

  Using this lemma, we can change the alignment by $d$ using only a polynomial number of times each loop (at most $(\rn(4k+2)|Q|^2L)^3$ for each loop). Since the runs are small, the misalignment is also small (at most $(l+1)\rn(4k+2)|Q|^2L$) and we only need to repeat this operation a polynomial number of times.
  Finally since we have chosen automorphisms that preserve $a_1,a_2$, we are sure to obtain a mismatch.

$\hfill  \textit{End of Proof of Lemma~\ref{lem:smallmis}}$

\end{proof}

\begin{prop}[Functionality]%
\label{prop:func}
  There exists a polynomial $P(x,y,z)$ such that the following holds.

  Let $R\subseteq \D^\omega \times \D^\omega$ be given by an \nrt $T$ with $k$ registers, a state space $Q$ and a maximum output length $L$. The following are equivalent:
  \begin{enumerate}
    \item $R$ is not functional
    \item there exist $C_0 \xrightarrow{u|u_1} C_1 \xrightarrow {v|v_1} D_1 \xrightarrow{w|w_1} \mu(C_1)$, $C_0 \xrightarrow{u|u_2} C_2 \xrightarrow {v|v_2} D_2 \xrightarrow{w|w_2} \mu(C_2)$ with $C_1,D_2$ final, $\mu\in \aut(\D)$ such that $u_1\mismatch u_2$
    \item there exist $C_0 \xrightarrow{u|u_1} C_1 \xrightarrow {v|v_1} D_1 \xrightarrow{w|w_1} \mu(C_1)$, $C_0 \xrightarrow{u|u_2} C_2 \xrightarrow {v|v_2} D_2 \xrightarrow{w|w_2} \mu(C_2)$ with $C_1,D_2$ final and $|u|,|v|,|w|\leq P(\rn(4k),|Q|,L) $, $\mu\in \aut(\D)$ such that $u_1\mismatch u_2$

  \end{enumerate}
  \end{prop}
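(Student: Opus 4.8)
The plan is to prove the cycle of implications $(1)\Rightarrow(2)\Rightarrow(3)\Rightarrow(1)$; since $(3)$ trivially implies $(2)$, this yields the three-way equivalence. Throughout I rely on two standing facts: accepting runs produce infinite outputs (this is encoded in the Büchi condition, as noted after the definition of $\rel{T}$), and whether a configuration is final depends only on its state component, so finality is preserved by every $\mu\in\aut(\D)$ and under pumping. I also use that the transitions of $T$ are automorphism-invariant, so applying $\mu$ to a run yields a run.

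To prove $(1)\Rightarrow(2)$, suppose $R$ is not functional, witnessed by two accepting runs $\rho_1,\rho_2$ over a common input $x\in\D^\omega$ with distinct, hence (being infinite) mismatching, outputs $y_1\mismatch y_2$. Since a mismatch occurs at a finite output position and mismatching is preserved when outputs grow, there is a finite prefix of $x$ after which the outputs already produced mismatch. I would then run $\rho_1,\rho_2$ synchronously and track, for each input position $i$, the pair of configurations $(E_i^1,E_i^2)$, i.e.\ a run of the product $T\times T$ whose configurations live in $\D^{2k}$ with a state in $Q^2$; by oligomorphicity these pairs fall into finitely many orbits. Let $F_1$ be the (infinite) set of positions past the mismatch at which $\rho_1$ is final; by the pigeonhole principle an infinite subset of $F_1$ carries product configurations in a single fixed orbit. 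Using that this subset and the set of $\rho_2$-final positions are both infinite, I choose $a<b<c$ with $a,c$ in the subset and $\rho_2$ final at $b$. The pairs at $a$ and $c$ are then related by a \emph{single} $\mu\in\aut(\D)$, so naming the run-$1$ (resp.\ run-$2$) configurations $C_1,C_2$ at $a$, $D_1,D_2$ at $b$, and $\mu(C_1),\mu(C_2)$ at $c$, and letting $u,v,w$ be the input factors on $[0,a),[a,b),[b,c)$, produces exactly the pattern of $(2)$: $C_1$ is final (as $a\in F_1$), $D_2$ is final (the run-$2$ configuration at $b$), and $u_1\mismatch u_2$ because $a$ lies past the mismatch.

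For $(2)\Rightarrow(3)$ I shorten the three factors while keeping every endpoint, both final configurations, the automorphism $\mu$, and the mismatch. The factors $v,w$ are loop segments of the product run $(C_1,C_2)\xrightarrow{v}(D_1,D_2)\xrightarrow{w}(\mu(C_1),\mu(C_2))$, so I apply Proposition~\ref{prop:loop} to $T\times T$ (state space $Q^2$, $2k$ registers) to replace them by words of length at most $|Q|^2\cdot\rn(4k)$ with identical endpoints; as the endpoints are unchanged, $C_1$ and $D_2$ stay final and the $\mu$-relation persists. The delicate factor is $u$: here I invoke Lemma~\ref{lem:smallmis} with start configurations $C_0,C_0$ and targets $C_1,C_2$, obtaining a prefix $u'$ of length polynomial in $\rn(4k+2),|Q|,L$ with $C_0\xrightarrow{u'}C_1$, $C_0\xrightarrow{u'}C_2$ and the mismatch preserved. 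Composing the three bounds yields the required polynomial bound.

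Finally, $(3)\Rightarrow(1)$ is a pumping argument. From the pattern I build the common input $x=u\cdot vw\cdot \mu(vw)\cdot \mu^2(vw)\cdots$ together with the runs $C_0\xrightarrow{u}C_1\xrightarrow{vw}\mu(C_1)\xrightarrow{\mu(vw)}\mu^2(C_1)\cdots$ and $C_0\xrightarrow{u}C_2\xrightarrow{vw}\mu(C_2)\xrightarrow{\mu(vw)}\mu^2(C_2)\cdots$, obtained by iterating $\mu$ on the loop. Both runs read $x$; the first visits $\mu^n(C_1)$ (final, since $\mu$ fixes states) at the start of every period and the second visits $\mu^n(D_2)$ (final) inside every period, so both are Büchi-accepting and, by the standing convention, produce infinite outputs $y_1,y_2$. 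As $u_1\prefix y_1$, $u_2\prefix y_2$ and $u_1\mismatch u_2$, we get $y_1\mismatch y_2$, hence $(x,y_1),(x,y_2)\in R$ with $y_1\neq y_2$ and $R$ is not functional. The main obstacle is the mismatch-preserving shortening of $u$ in $(2)\Rightarrow(3)$, which is exactly the content of Lemma~\ref{lem:smallmis}; the secondary subtlety is arranging, in $(1)\Rightarrow(2)$, a single synchronized loop carrying a final configuration of \emph{each} run at possibly different positions, which the pigeonhole extraction above resolves.
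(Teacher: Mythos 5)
Your proof is correct and follows essentially the same route as the paper's: a pigeonhole argument on orbits of product configurations for $(1)\Rightarrow(2)$, Lemma~\ref{lem:smallmis} for the mismatch-preserving shortening of $u$ together with loop removal (Proposition~\ref{prop:loop} applied to $T\times T$) for $v$ and $w$ in $(2)\Rightarrow(3)$, and pumping the word $uvw\mu(vw)\mu^2(vw)\cdots$ for $(3)\Rightarrow(1)$. The only (inessential) difference is in the extraction step: the paper interleaves two infinitely repeating orbits of pairs (one final in each component), whereas you fix a single repeating orbit among run-$1$-final positions past the mismatch and interpose a run-$2$-final position between two of its occurrences — both yield the required pattern.
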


\begin{proof}
Let us assume that~(1) holds, meaning that $T$ has two accepting runs $\rho_1,\rho_2$ over some word $x\in \D^\omega$ which produce different outputs. Let $C_0, C_1,C_2 \ldots$ denote the configurations of $\rho_1$ and $C_0, C_1',C_2' \ldots$ the ones of $\rho_2$. Let us consider the orbits of pairs of configurations $C_i,C_i'$. We know that there is a finite number of such orbits. We also know that an infinite number of such pairs is accepting in the first component and an infinite number is accepting in the second component. Thus we can see $(\rho_1,\rho_2)$ as a sequence of the following shape, with $C$ and $D'$ final:

 \[ (C_0,C_0)\xrightarrow{u_0}(C,C')\xrightarrow {v_0}(D,D')\xrightarrow{u_1} {\mu_1(C,C')} \xrightarrow {v_1}\nu_1(D,D') \xrightarrow {u_2}\cdots\]

Since the outputs are different and infinite then they mismatch at some position $i$. Then, there exists $n$ such that $u=u_0v_0\cdots u_n v_n$ has produced at least $i$ symbols, both for $\rho_1$ and $\rho_2$.
Hence we have shown that $C_0 \xrightarrow{u|\alpha_1} \mu_n(C) \xrightarrow {u_{n+1}|\beta_1} \nu_{n+1}(D)\xrightarrow {v_{n+1}|\gamma_1} \mu_{n+1}(C)$, $C_0 \xrightarrow{u|\alpha_2} \mu_n(C') \xrightarrow {u_{n+1}|\beta_2} \nu_{n+1}(D')\xrightarrow {v_{n+1}|\gamma_2} \mu_{n+1}(C')$, and by assumption $|\alpha_1|,|\alpha_2|\geq i$, and thus $\alpha_1\mismatch \alpha_2$.
Hence we have shown that~(2) holds.

Let us assume that~(2) holds: \ie we have  $C_0 \xrightarrow{u|u_1} C_1 \xrightarrow {v|v_1} D_1 \xrightarrow{w|w_1} \mu(C_1)$, $C_0 \xrightarrow{u|u_2} C_2 \xrightarrow {v|v_2} D_2 \xrightarrow{w|w_2} \mu(C_2)$ with $C_1,D_2$ final, $\mu\in \aut(\D)$ such that $u_1\mismatch u_2$.
We use Lemma~\ref{lem:smallmis} (small mismatch witness) to obtain that $C_0 \xrightarrow{u'|u_1'} C_1$, $C_0 \xrightarrow{u'|u_2'} C_2 $ with $u_1'\mismatch u_2'$ and $|u'|$ small. We get
$C_0 \xrightarrow{u'|u_1'} C_1 \xrightarrow {v|v_1} D_1 \xrightarrow{w|w_1} \mu(C_1)$, $C_0 \xrightarrow{u'|u_2'} C_2 \xrightarrow {v|v_2} D_2 \xrightarrow{w|w_2} \mu(C_2)$.
We now only have to use some loop removal on $v$ and $w$, just as in Proposition~\ref{prop:emptiness-oligo}, in order to obtain words smaller than $|Q|^2\rn(4k)$.

Showing that~(3) implies~(1) is the easiest part since the pattern clearly causes non-functionality over the word $uvw\mu(vw)\mu^2(vw)\cdots$.
\end{proof}

\subsubsection{Characterizing continuity}

Here we characterize continuity and uniform continuity using patterns similar to the one of functionality.
Before doing so we introduce a property of configuration: a configuration is \emph{co-reachable} if there is a final run from it.

For this we define a notion of \emph{critical pattern}.
\begin{defi}%
  \label{def:criticalPattern}
Let $T$ be an \nrt. A pair of runs of the form
 $C_0 \xrightarrow{u|u_1} C_1 \xrightarrow {v|v_1} \mu(C_1)\xrightarrow{w|w_1}D_1$, $C_0 \xrightarrow{u|u_2} C_2 \xrightarrow {v|v_2} \mu(C_2)\xrightarrow{z|w_2}D_2$
  is a \emph{critical pattern} if $D_1$, $D_2$ are co-reachable and one of the following holds:
 \begin{enumerate}[(a)]
    	  \item\label{itm:critMismatch} $u_1\mismatch u_2$, or
      \item\label{itm:critOneEmpty} $v_i=\epsilon$ and $u_j \mismatch  u_{i}w_i$ for $\set{i,j}= \set{1,2}$, or
      \item\label{itm:critTwoEmpty} $v_1=v_2=\epsilon$ and $u_1w_1\mismatch u_2w_2$
    \end{enumerate}
We  denote by
 $\cp_T(u,v,w,z,C_1,\mu(C_1),D_1,C_2,\mu(C_2),D_2)$ ($T$ is omitted when clear from context)
 the set of critical patterns.
  \end{defi}

Before characterizing continuity and uniform continuity, we show a small critical pattern property.
\begin{clm}[Small critical pattern]%
  \label{clm:smallmis}
  There exists a polynomial $P'(x,y,z)$ such that the following holds.

  Let $T$ be an \nrt, with $k$ registers, a state space $Q$ and a maximal output length $L$.
  Let $C_0 \xrightarrow{u|u_1} C_1 \xrightarrow {v|v_1} \mu(C_1)\xrightarrow{w|w_1}D_1$, $C_0 \xrightarrow{u|u_2} C_2 \xrightarrow {v|v_2} \mu(C_2)\xrightarrow{z|w_2}D_2$ be a critical pattern.
  Then there exists $u',v',w',z'$ of length less than $P'(\rn(4k),|Q|,L)$ and $u_1',v_1',w_1',u_2',v_2',w_2'$, so that $C_0 \xrightarrow{u'|u_1'} C_1 \xrightarrow {v'|v_1'} \mu(C_1)\xrightarrow{w'|w_1'}D_1$, $C_0 \xrightarrow{u'|u_2'} C_2 \xrightarrow {v'|v_2'} \mu(C_2)\xrightarrow{z'|w_2'}D_2$ is a critical pattern.
\end{clm}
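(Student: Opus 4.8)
The plan is to treat the three defining cases \ref{itm:critMismatch}, \ref{itm:critOneEmpty}, \ref{itm:critTwoEmpty} of a critical pattern separately, shrinking each input segment $u$, $v$, $w$, $z$ by loop removal while keeping the run structure, the endpoints $C_1,\mu(C_1),D_1,C_2,\mu(C_2),D_2$ fixed (so that co-reachability of $D_1,D_2$ is preserved for free) and the relevant mismatch intact. Two observations apply uniformly and dispose of most segments. First, the segment $v$ is always cheap: the two runs read it synchronously, so we view it as a single run $(C_1,C_2)\xrightarrow{v}(\mu(C_1),\mu(C_2))$ of the product $T\times T$ (state space $Q^2$, $2k$ registers) and apply Proposition~\ref{prop:loop} to it, keeping both endpoints fixed. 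Since removing loops can only delete letters from the produced output, the conditions $v_1=\epsilon$ and/or $v_2=\epsilon$ required in cases \ref{itm:critOneEmpty} and \ref{itm:critTwoEmpty} are preserved, while in case \ref{itm:critMismatch} the content of $v_1,v_2$ is irrelevant; and in every case $v$ plays no role in the defining mismatch. Second, any tail whose output does not occur in the mismatch can be shortened directly by Proposition~\ref{prop:loop} on $\mu(C_1)\xrightarrow{w}D_1$ (resp.\ $\mu(C_2)\xrightarrow{z}D_2$), keeping $D_1$ (resp.\ $D_2$) fixed; this covers $z$ in case \ref{itm:critOneEmpty} with $\set{i,j}=\set{1,2}$.

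Case \ref{itm:critMismatch} is then immediate: the mismatch $u_1\mismatch u_2$ lives entirely in the shared prefix. We apply Lemma~\ref{lem:smallmis} to the pair of runs $C_0\xrightarrow{u|u_1}C_1$ and $C_0\xrightarrow{u|u_2}C_2$, which read the same input $u$ and have fixed endpoints; this yields $u'$ of length polynomial in $\rn(4k),|Q|,L$ with $C_0\xrightarrow{u'|u_1'}C_1$, $C_0\xrightarrow{u'|u_2'}C_2$ and $u_1'\mismatch u_2'$. Shortening $v$, $w$, $z$ as above (none of which affects $u_1',u_2'$) then gives a small critical pattern of the same kind.

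The genuinely delicate cases are \ref{itm:critOneEmpty} and \ref{itm:critTwoEmpty}, where the mismatch spans both the synchronized prefix and a divergent tail ($u_1w_1\mismatch u_2w_2$ in case \ref{itm:critTwoEmpty}, and $u_2\mismatch u_1w_1$ in case \ref{itm:critOneEmpty}). Here the prefix $u$ and the mismatch-carrying tail cannot be shortened independently, because loop removal alters output lengths and may destroy the mismatch. The plan is to replay the alignment argument of Lemma~\ref{lem:smallmis} on the pair of full runs $C_0\xrightarrow{uvw}D_1$ and $C_0\xrightarrow{uvz}D_2$, now distinguishing loops of two geometric types: synchronized loops in the shared prefix, which shift both output tracks simultaneously, and loops sitting in a single divergent tail, which shift only one track. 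As in Lemma~\ref{lem:smallmis} we restrict attention to \emph{nice} loops, whose automorphism fixes the two mismatching output letters, so that adding or removing them never equates those letters. We classify nice loops by their net effect on the relative alignment of the two outputs; if a null-effect nice loop exists we remove it, and otherwise we argue as before that either all nice loops have the same sign, forcing the runs to be polynomially short, or both signs occur, in which case we strip all loops to obtain polynomially short runs and then re-pump a polynomial number of positive and negative loops---the bound coming from the signed generalized B\'ezout identity (Lemma~\ref{lem:bezout})---to realign the outputs so that the preserved mismatching letters fall at comparable positions, recovering a genuine mismatch. Throughout, $D_1,D_2$ are left unchanged and hence stay co-reachable, and all of $u',v',w',z'$ end up bounded by a single polynomial $P'(\rn(4k),|Q|,L)$.

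I expect the main obstacle to be precisely this last step: managing the three-way bookkeeping (prefix-synchronized loops versus the two tail loops) and the realignment in the mixed synchronized/divergent geometry. Conceptually it is a direct generalization of Lemma~\ref{lem:smallmis}, but choosing the automorphisms of the nice loops so as to keep the fixed endpoints $C_1,\mu(C_1),D_1$ and $C_2,\mu(C_2),D_2$ intact, while still retaining enough loops of each sign to apply B\'ezout, is where the care is required.
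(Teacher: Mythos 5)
Your proposal is correct, and it runs on the same engine as the paper's proof: loop removal that preserves mismatches (the nice-loop/effect/B\'ezout machinery of Lemma~\ref{lem:smallmis}), together with the observation that removing or re-pumping a loop only applies an automorphism \emph{upstream} of it, so every downstream boundary configuration ($C_1$, $\mu(C_1)$, $D_1$, \dots) is untouched and co-reachability of $D_1,D_2$ comes for free. The packaging, however, is genuinely different. The paper does everything in one stroke via a coloring trick: states are colored red/blue/green according to the three phases $u$, $v$, $w$ (resp.\ $z$), so that any loop is necessarily monochromatic and hence cannot straddle a segment boundary; Lemma~\ref{lem:smallmis} is then invoked once on the colored transducer (state space $3|Q|$, whence the bound $P(\cdot,3|Q|,\cdot)$), and preservation of the intermediate configurations is automatic. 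You instead decompose by the three cases of Definition~\ref{def:criticalPattern}, shrink each segment with fixed endpoints, and---crucially---you point out that in cases \ref{itm:critOneEmpty} and \ref{itm:critTwoEmpty} the lemma cannot be used as a black box, because the two mismatching outputs are produced over \emph{different} inputs ($uvw$ versus $uvz$), whereas Lemma~\ref{lem:smallmis} is stated for two runs over a common input; you therefore replay its proof with two species of loops (synchronized prefix loops versus one-sided tail loops). This is a real subtlety that the paper's one-line invocation glosses over: the coloring takes care of segment boundaries, not of the divergence of the inputs, so the paper is implicitly performing exactly the generalization you spell out. What each approach buys: the paper's coloring gives brevity and makes the boundary bookkeeping disappear; your route gives modularity (case \ref{itm:critMismatch}, and the segments $v$, $w$, $z$ via Proposition~\ref{prop:loop}, are fully rigorous black-box steps) and honesty about where the remaining work sits, at the price that your cases \ref{itm:critOneEmpty}/\ref{itm:critTwoEmpty} stay at sketch level---though no more so than the paper's own proof. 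A natural improvement would be to combine the two: run your replay on the colored transducer, which eliminates precisely the endpoint-preservation bookkeeping you flag as the main residual difficulty.
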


\begin{proof}
We want to remove loops in $u,v,w,z$ without affecting the mismatches.
The idea is to see such a critical pattern as a pair of runs which mismatch and leverage Lemma~\ref{lem:smallmis}. In order to make sure that the loops which are removed do not interfere with the intermediate configurations, we color the states of the transducer. We consider runs which start with red configurations, then the middle parts $C_1 \xrightarrow {v|v_1} \mu(C_1)$ and $C_2 \xrightarrow {v|v_2} \mu(C_2)$ are colored in blue and the final parts are colored in green. Using Lemma~\ref{lem:smallmis}, we obtain runs smaller than $P(\rn(4k),3|Q|,L)$ (the $3$ factor comes from the coloring). Since the loops have to be removed in the monochromatic parts, we obtain the desired result.
\end{proof}

Let us give a characterization of continuity and uniform continuity for functions given by an \nrt.
\begin{prop}[Continuity/uniform continuity]%
  \label{prop:crit}
  There exists a polynomial $P'(x,y,z)$ such that the following holds.
  Let $f:\D^\omega \rightarrow \D^\omega$ be given by an \nrt $T$ with $k$ registers, a state space $Q$ and a maximum output length $L$. The following are equivalent:
  \begin{enumerate}
    \item $f$ is not uniformly continuous (resp.~continuous)
    \item there exists a critical pattern in $\cp(u,v,w,z,C_1,\mu(C_1),D_1,C_2,\mu(C_2),D_2)$ (resp.~with $C_1$ final).
    \item there exists a critical pattern in $\cp(u,v,w,z,C_1,\mu(C_1),D_1,C_2,\mu(C_2),D_2)$ such that $|u|,|v|,|w|,|z|\leq P'(\rn(4k),|Q|,L) $ (resp.~with $C_1$ final).
  \end{enumerate}
  \end{prop}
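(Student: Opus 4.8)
The plan is to prove the cycle $(3)\Rightarrow(2)\Rightarrow(1)\Rightarrow(2)\Rightarrow(3)$, handling the uniform statement and the parenthesized continuity statement in parallel, the only difference being that the loop configuration $C_1$ must be final in the continuity case. The implication $(3)\Rightarrow(2)$ is immediate since $(3)$ specializes $(2)$, and $(2)\Rightarrow(3)$ is exactly Claim~\ref{clm:smallmis} (small critical pattern): a critical pattern shrinks to polynomial size, and since that shrinking only removes monochromatic loops it preserves the finality of $C_1$. So the content lies in the equivalence $(1)\Leftrightarrow(2)$.

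For $(2)\Rightarrow(1)$, I would pump the synchronous loop $C_1\xrightarrow{v|v_1}\mu(C_1)$, $C_2\xrightarrow{v|v_2}\mu(C_2)$. Iterating it $n$ times, applying the appropriate power of $\mu$ to the tails, yields two runs over inputs sharing the common prefix $u\,v\,\mu(v)\cdots\mu^{n-1}(v)$ of length $|u|+n|v|\to\infty$ (the loop reads at least one input letter), which then diverge through $w$ (resp.~$z$) into $D_1$ (resp.~$D_2$); since $D_1,D_2$ are co-reachable, both extend to accepting runs over words $x_n,y_n\in\dom(f)$. The three cases pin down where the outputs diverge: in case \ref{itm:critMismatch} the mismatch $u_1\mismatch u_2$ lies in the prefix produced before the loop and is untouched by pumping; in cases \ref{itm:critOneEmpty}--\ref{itm:critTwoEmpty} one (resp.~both) loops produce no output ($v_i=\epsilon$), so the corresponding run stays output-starved while the common input grows, and the divergence is realized only after the tails, which is precisely what $u_j\mismatch u_iw_i$ and $u_1w_1\mismatch u_2w_2$ record. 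The delicate point is that pumping applies $\mu^{n-1}$ to the mismatching tail letter, so the mismatch could a priori be healed for some $n$; but since $\mu^0=\mathrm{id}$ already witnesses it, the letter can be restored to a matching value only for $n$ in a periodic (possibly empty) set, leaving infinitely many indices $n$ with $|f(x_n)\wedge f(y_n)|$ below a fixed bound. These infinitely many pairs refute uniform continuity; for continuity, $C_1$ being final makes the run that pumps the $C_1$-loop forever accepting, defining a point $x\in\dom(f)$ to which the $y_n$ converge while $f(y_n)\not\to f(x)$.

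For $(1)\Rightarrow(2)$, assume $f$ is not uniformly continuous: there is $i$ and pairs $(x_n,y_n)\in\dom(f)$ with $|x_n\wedge y_n|\ge n$ but $|f(x_n)\wedge f(y_n)|<i$. Fixing accepting runs and reading them synchronously along the common prefix, oligomorphicity gives finitely many orbits of configuration pairs \emph{even after adjoining as constants the at most $i$ data values occurring in the bounded common output prefix}; hence for $n$ large a nice synchronous loop with $\mu(C_1,C_2)=(C_1,C_2)$ occurs inside the common prefix, with $\mu$ fixing those values. I would then split on how the outputs diverge: if they already mismatch within the common prefix, the loop is placed after the mismatch, yielding case~\ref{itm:critMismatch}; otherwise the two outputs remain comparable along the whole common prefix, which forces the shorter below $i$, so that run emits no output along the loop ($v_i=\epsilon$), and the eventual mismatch of $f(x_n),f(y_n)$ after divergence is recorded by case~\ref{itm:critOneEmpty} or~\ref{itm:critTwoEmpty} according to whether one or both loop-outputs vanish. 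The configurations reached right after the loop are co-reachable since the runs are accepting, providing $D_1,D_2$. For the continuity statement one starts from a single point of discontinuity $x\in\dom(f)$ and a sequence $y_j\to x$; the reference run over $x$ visits final configurations infinitely often, so the repeated orbit giving the loop can be chosen final, yielding $C_1$ final.

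The step I expect to be the main obstacle is $(1)\Rightarrow(2)$, specifically making the case split exhaustive and placing the loop so that the recorded mismatch matches \emph{exactly} one of \ref{itm:critMismatch}--\ref{itm:critTwoEmpty}: this needs precise control of output lengths to certify the output-starvation of one run, and must guarantee that the loop does not straddle the transition producing the mismatching letter. The Lemma~\ref{lem:smallmis}-style bookkeeping, tracking the mismatch data through nice loops fixing $\mu$ on those letters, is what keeps this under control.
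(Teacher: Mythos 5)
Your architecture coincides with the paper's: $(2)\Leftrightarrow(3)$ is Claim~\ref{clm:smallmis} plus specialization (and you are right that the shrinking preserves finality of $C_1$, since loops are only removed inside monochromatic parts), so the substance is $(1)\Leftrightarrow(2)$. Your pumping direction $(2)\Rightarrow(1)$ is sound and in fact \emph{more} explicit than the paper's one-sentence treatment of $(3)\Rightarrow(1)$: the paper silently ignores that applying powers of $\mu$ to the tails could heal the mismatch for some iteration counts, whereas your observation that the healing set is an arithmetic progression of difference at least $2$ (by injectivity of $\mu$), hence misses infinitely many $n$, is a legitimate substitute for the ``add the loop twice'' principle the paper uses elsewhere; your handling of the continuity variant ($C_1$ final, pump its loop forever to get a point of $\dom(f)$) also matches the intended argument.

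The genuine gap is in $(1)\Rightarrow(2)$, exactly where you flag ``the main obstacle'' and then leave it unresolved: a \emph{single} extracted synchronous loop does not suffice, and your two-case split is not exhaustive as stated. In your first case (outputs already mismatch within the common prefix) you ``place the loop after the mismatch'', but nothing guarantees room for one: a run may emit its mismatching letter at the very end of the common input prefix, leaving no window after it in which a configuration-pair orbit repeats. In your second case you need the loop to be silent on the starved side \emph{and}, for shape b), to lie after the other run has already emitted the letter realizing the mismatch (so that this letter sits in $u_1$, as $u_1 \mismatch u_2w_2$ requires); a single loop extracted merely ``for $n$ large'' gives neither property. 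The paper's mechanism is precisely the missing ingredient: choose $n$ so large that one configuration-pair orbit repeats $2i$ times, yielding $2i$ consecutive synchronous loops, and count the non-silent ones on each side. If some loop is silent on both sides, extending the tails until the outputs' mismatch is realized gives shape c); if at least $i$ loops are non-silent on each side, both outputs inside the loop region reach length at least $i$, hence already contain the mismatch, and taking the \emph{last} loop as the pattern loop gives shape a); otherwise one side is silent on all but fewer than $i$ loops, and a loop silent on that side is \emph{re-inserted} after all $2i$ loops --- where the other run has produced at least $i$ output letters --- with the double-loop trick to keep the mismatch, giving shape b). Without this counting-and-reinsertion step (or an equivalent windowing argument that makes your placement claims precise), the case analysis you describe does not go through; your ``nice loops'' obtained by adjoining the common output prefix as constants do not address this, since the difficulty is where the loop sits relative to the production of the mismatching letters, not whether the loop's automorphism moves them.
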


\begin{proof}
  Let us assume that~(1) holds, meaning that $f$ is not uniformly continuous (resp.\ not continuous) at some point $x\in \D^\omega$. This means that there exists $i$ such that for any $n$, there are two accepting runs $\rho_1,\rho_2$ over $x_1,x_2$ and producing $y_1,y_2$ respectively such that $|x\wedge x_1\wedge x_2|>n$ and $|y_1\wedge y_2|<i-1$. Moreover, if $f$ is not continuous, we can even assume that $x_1=x$ and $\rho_1=\rho$, some accepting run over $x$. Let us consider some $n>2i\cdot |Q|^2\cdot\rn(2k)$. Let $u=x\wedge x_1\wedge x_2$, since $u$ is large enough we have that some pair of configurations in $\rho_1,\rho_2$ has to repeat at least $2i$ times, up to automorphism. If $f$ is not continuous, we choose $n$ large enough so that a final configuration appears at least $2i\cdot |Q|^2\cdot\rn(2k)$ times in the first $n$ transitions of $\rho$. That way we can ensure that some pair of configuration in $\rho,\rho_2$ repeats at least $2i$ times, up to automorphism, with the configuration of $\rho$ being accepting.

Thus we obtain two sequences: $C_0\xrightarrow {u_0|v_0} C\xrightarrow{u_1|v_1} \mu_1(C)\cdots \mu_{{2i-1}}(C)\xrightarrow{u_{2i}|v_{2i}} \mu_{2i}(C)$ and  $C_0\xrightarrow {u_0|w_0} D\xrightarrow{u_1|w_1} \mu_1(D)\cdots \mu_{{2i-1}}(D)\xrightarrow{u_{2i}|w_{2i}} \mu_{2i}(D)$. Moreover, let $x_1=ux_1'$, $x_2=ux_2'$, let $y_1=v_0\cdots v_{2i}y_1'$ and $y_2=w_0\cdots w_{2i}y_2'$.
We do a case analysis; note that the cases are not necessarily mutually exclusive.

Case~(1) let us first assume that there is some index $j\in \set{1,\ldots,2i}$ so that $\mu_{j}(C)\xrightarrow {u_j|\epsilon}\mu_j(C)$ and $\mu_{j-1}(D)\xrightarrow{u_i|\epsilon}\mu_j(D)$.
Since the outputs $y_1,y_2$ mismatch, we have some prefixes of $\rho_1,\rho_2$ of the shape $C_0\xrightarrow{u_0\cdots u_{j-1}|v_0\cdots v_{j-1}}\mu_{j}(C)\xrightarrow {u_j|\epsilon}\mu_j(C)\xrightarrow {x_1''|y_1''}E$ and $C_0\xrightarrow{u_0\cdots u_{j-1}|w_0\cdots w_{j-1}}\mu_{j}(D)\xrightarrow {u_j|\epsilon}\mu_j(D)\xrightarrow {x_2''|y_2''}F$ with $v_0\cdots v_{j-1}y_1''\mismatch w_0\cdots w_{j-1}y_2''$. Hence we obtain a critical pattern of shape (c).

Case~(2) we assume that there are at least $i$ indices $j\in \set{1,\ldots,2i}$ such that $v_j\neq \epsilon$ and at least $i$ indices such that $w_j\neq \epsilon$. Then we have that $v_0\cdots v_{2i}\mismatch w_0\cdots w_{2i}$ and thus we get a critical pattern of shape (a).

Case~(3), let us assume (without loss of generality) that there strictly fewer than $i$ indices such that $w_j\neq \epsilon$. This means that ther must be at least $i$ indices such that $v_j\neq \epsilon$, otherwise we refer back to case~(1). Let us consider a prefix of $\rho_2$ $C_0\xrightarrow{u_0\cdots u_{2i}|w_0\cdots w_{2i}}\mu_{2i}D\xrightarrow {x_2''|y_2''}$ such that $v_0\cdots v_{2i}\mismatch  w_0\cdots w_{2i}y_2''$.
Let $j$ be such that $w_j=\epsilon$, then we add the loop corresponding to index $j$ after the configuration $(\mu_{2i}(C),\mu_{2i}(D)) $. Doing this may modify the mismatching letter of $y_2''$ as to cancelling the mismatch, however if it is the case, we can simply add the loop twice, which guarantees that the mismatch is preserved. Thus we obtain a critical pattern of shape (b).

If we assume that~(2) holds then Claim~\ref{clm:smallmis} gives us~(3).

Let us assume that~(3) holds. Then $f$ is discontinuous at $x=uv\mu(v)\mu^2(v)\cdots\in \overline{\dom(f)}$ and is thus not uniformly continuous. Moreover, if $C_1$ is final we have that $f$ is discontinuous at $x\in \dom(f)$, and hence $f$ is not continuous.
\end{proof}

\subsection{Deciding functionality, continuity and computability}
We use a key property of oligomorphic structures, namely that orbits can be defined by first order formulas.

Let $\D = (D,\Sigma^\D)$ be an oligomorphic structure.
We denote by
$\fo[\Sigma]$ the set of first-order formulas over signature
$\Sigma$, and just $\fo$ if $\Sigma$ is clear from the context.
\begin{propC}[{\cite[Lemma 4.11]{Bojanczyk19}}]%
  \label{prop:oligoFO}
Let $\D$ be an oligomorphic structure and let $k$ be a natural number. Any orbit of an element of $\D^k$ is first-order definable.
\end{propC}
We say that
$\D$ is \emph{decidable} if its Ryll-Nardzewski function is computable
and $\fo[\Sigma]$ has decidable satisfiability problem over $\D$. Moreover, we say that $\D$ is \emph{polynomially decidable} if an orbit of $\D^k$ can be expressed by an \fo formula of polynomial size in $k$ and the \fo satisfiability problem is decidable in  \PSpace.
One (but not us) could easily define a similar notion of exponentially decidable, or $f$-decidable for some fixed complexity function $f$.

Roughly speaking the main automata problems which we will consider (emptiness, functionality, continuity, \etc) will be \PSpace (resp.~decidable) whenever the structure $\D$ is polynomially decidable (resp.~decidable).

\subsubsection{Computing the next letter}

In this section, we show how to compute the next letter problem for
\nrt over a decidable representable oligomorphic structure $\D$. By Theorems~\ref{thm:cont2comp}
and~\ref{thm:comp2cont}, this entails that continuity and computability coincide
for functions defined by transducers over decidable representable oligomorphic
structures, as stated in Theorem~\ref{thm:compcont} below.

Before tackling the next letter problem, we consider the emptiness problem of register automata.
\begin{thm}
Let $\D$ be a decidable (resp.~polynomially decidable) oligomorphic structure.
The emptiness problem for \nra is decidable (resp.~in \PSpace).
\end{thm}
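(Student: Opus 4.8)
The plan is to reduce the emptiness problem to the search for a small witness as characterized by Proposition~\ref{prop:emptiness-oligo}, and then to decide the existence of such a witness using the first-order definability of orbits (Proposition~\ref{prop:oligoFO}) together with the decidability of the \fo satisfiability problem over $\D$. By Proposition~\ref{prop:emptiness-oligo}, an \nra $A$ recognizes a word if and only if there exist runs $C_0 \xrightarrow{u_1} C \xrightarrow{u_2} \mu(C)$ with $C$ final, $\mu \in \aut(\D)$, and $\size{u_1},\size{u_2} \leq \size{Q}\cdot\rn(2k)$. Since the data-set operations (testing whether a configuration orbit transitions to another on some input letter, whether a configuration is final, whether two configurations are in the same orbit) are governed by the predicates of $\D$, the whole existential question can be phrased as the satisfiability of an \fo formula over $\D$.

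First I would bound the search space. Because the witness words $u_1,u_2$ have length bounded by $N := \size{Q}\cdot\rn(2k)$, which is computable since $\rn$ is computable for a decidable structure, one can enumerate the finitely many possible \emph{traces}: sequences of states of length at most $N$, each decorated with the test and register-operation taken at each step. For each fixed pair of traces, the existence of concrete data values $d_1,\dots$ witnessing the run, together with data values in the registers of $C$ realizing both the final test structure and the existence of an automorphism $\mu$ with $\mu(C) = C'$ (where $C'$ is the endpoint of the second sub-run), becomes a first-order sentence: the tests along each trace are quantifier-free formulas over the input data and register contents, and "$C$ and $\mu(C)$ lie in the same orbit" is exactly first-order expressible by Proposition~\ref{prop:oligoFO} (two tuples lie in the same orbit iff they satisfy the same orbit formula). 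Conjoining these, existentially quantifying over the data values, yields a sentence that is satisfiable over $\D$ iff that trace pair witnesses non-emptiness; decidability of \fo satisfiability then settles it.

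For the complexity refinement, I would argue that this can be carried out in \PSpace when $\D$ is polynomially decidable. Rather than enumerating full traces explicitly (which would require exponential space), I would guess the traces \emph{on the fly}, transition by transition, maintaining only the orbit of the current configuration pair $(C_0$-reachable prefix endpoint$)$ as a polynomial-size \fo formula (the orbit of a $k$-tuple is expressible by a formula of polynomial size in $k$ by polynomial decidability), together with a step counter bounded by $N$, which is of polynomial bit-size since $\rn(2k)$ and $\size{Q}$ contribute only polynomially to the counter's magnitude. At each step one nondeterministically picks a transition, updates the orbit formula by relativizing through the test and the register operation, and checks consistency via a \PSpace \fo-satisfiability oracle. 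The final check — that the endpoint orbit matches $C$'s orbit via some automorphism and that $C$ is final — is again one \fo-satisfiability call of polynomial size. This yields an \NPSpace procedure, hence \PSpace by Savitch's theorem.

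The main obstacle I anticipate is the careful bookkeeping needed to keep the maintained orbit description of \emph{polynomial} size throughout the on-the-fly simulation: a naive representation that simply accumulates the conjunction of all tests seen so far would grow linearly in the run length and thus blow up the formula, so one must exploit that the orbit of the reachable configuration is itself a single polynomial-size orbit formula (by polynomial decidability) and recompute it at each step as a \emph{quotient}, via an \fo-satisfiability query, rather than accumulating history. Getting this update step to stay within \PSpace — in particular, ensuring that the existential quantification over the freshly read or guessed data value can be eliminated or resolved by the oracle without enlarging the stored formula beyond polynomial size — is the delicate point, and it is exactly where the polynomial-decidability hypothesis (polynomial-size orbit formulas plus \PSpace \fo satisfiability) is used in an essential way.
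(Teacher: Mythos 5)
Your proposal is correct and follows essentially the same route as the paper's proof: bound the witness via Proposition~\ref{prop:emptiness-oligo}, then run an \NPSpace on-the-fly simulation that stores only a polynomial-size orbit formula for the current configuration plus a counter of polynomial bit-size (the magnitude $|Q|\cdot\rn(2k)$ is exponential, but its logarithm is polynomial), validating each step with a \PSpace \fo-satisfiability query and concluding by Savitch's theorem. The one cosmetic difference is that the paper never ``updates'' or ``quotients'' the stored formula, which is where you anticipate difficulty: it simply \emph{guesses} the next orbit formula nondeterministically and verifies compatibility with the previous one and the chosen transition by a single satisfiability check --- this sidesteps your delicate point, since the one-step image of an orbit is in general a \emph{union} of orbits, so rather than recomputing anything one just guesses which orbit of that union the run follows.
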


\begin{proof}
We show the result in case of a polynomially decidable structure, the more general case can be obtained by forgetting about complexity.
Let $\D$ be a polynomially decidable oligomorphic structure and let $T$ be an \nra with $k$ registers and state space $Q$. Since $\D$ is polynomially decidable, any orbit of $\D^k$ can be represented by a formula of size polynomial in $k$. This means that $\rn(k)$ is exponential.
Using the non-emptiness characterization from Proposition~\ref{prop:emptiness-oligo}, we only need to find a run of length polynomial in $Q$ and $\rn(k)$. The idea is to use a counter bounded by this polynomial, using space in $\log(|Q| \rn(k))$, and execute an \NPSpace algorithm which will guess a run of the automaton and update the type of configurations in space polynomial in $k$ and $\log(|Q|)$.
Simulating the run goes like this: first the type of the configuration is initialized to $q_0,(d_0,\ldots,d_0)$. Then a new type $\phi(y_1,\ldots,y_k)$ is guessed, as well as a transition which we see as given by a state $q$ and a formula $\psi(x_1,\ldots,x_k,y_1,\ldots,y_k)$. We check that the transition is valid by deciding the satisfiability of $\exists y_1,\ldots,y_k\ \psi(d_0,\ldots,d_0,y_1,\ldots,y_k)\wedge \phi(y_1,\ldots,y_k)$ which can be done in \PSpace, by assumption. We thus move to the new configuration type given by $q,\phi$ and we continue the simulation. At some point when $q$ is final, we keep the configuration type in memory and guess that we will see it again.
\end{proof}

\begin{lem}\label{lem:nextletter}
  Let $f : \D^\omega\rightarrow \D^\omega$ be a function defined by an
  $\nrt$ over a decidable representable oligomorphic structure $\D$. Then, its next
  letter problem is computable.
\end{lem}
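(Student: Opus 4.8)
The plan is to reduce the next-letter problem to finitely many emptiness queries for \nra and to discharge those queries with the emptiness theorem for \nra over decidable oligomorphic structures established just above. Throughout, write $u = a_1\cdots a_n$, $v = b_1\cdots b_m$, and let $S \subseteq \D$ be the finite set of data values occurring in $u$ or in $v$, together with the interpretations of the constants of $\D$. Since $\rel{T}$ is recognised by an \nrt it is \emph{equivariant}: an automorphism $\mu \in \aut(\D)$ maps accepting runs to accepting runs, so $(x,w)\in\rel{T}$ implies $(\mu(x),\mu(w))\in\rel{T}$, and in particular $\dom(f)$ and $f$ are preserved by $\aut(\D)$. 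The first step is to show the answer, when it exists, ranges over a finite, effectively computable set. Suppose $d$ is a valid answer and let $\mu\in\aut(\D)$ fix $S$ pointwise, so $\mu(u)=u$ and $\mu(v)=v$. For any $y$ with $uy\in\dom(f)$, equivariance gives $u\mu^{-1}(y)=\mu^{-1}(uy)\in\dom(f)$, hence $vd \prefix f(u\mu^{-1}(y))=\mu^{-1}(f(uy))$; applying $\mu$ yields $v\,\mu(d)\prefix f(uy)$. As $vd$ and $v\mu(d)$ are prefixes of equal length of the same word $f(uy)$, we get $\mu(d)=d$. Thus any valid $d$ is fixed by every automorphism fixing $S$, i.e.\ its orbit over $S$ is a singleton. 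By Proposition~\ref{prop:oligoFO} applied to $\D$ enriched with constants naming $S$ (enriching by finitely many constants preserves oligomorphicity and decidability), the orbits over $S$ are \fo-definable and finitely many; since $\D$ is decidable and representable, the truth of \fo sentences with parameters in $\D$ is decidable, so one can compute the finite list $\{d_1,\dots,d_p\}$ of those values whose orbit over $S$ is a singleton. Every valid answer lies in this list.

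The second step is to decide, for a finite word $w\in\D^*$, the predicate
$$\mathrm{Bad}(w)\ \equiv\ \exists y,\ uy\in\dom(f)\ \wedge\ w \mismatch f(uy),$$
which, since outputs are infinite ($f(uy)\in\D^\omega$, so $w\not\prefix f(uy)$ is the same as $w\mismatch f(uy)$), says that some accepting run of $T$ on an input in $u\D^\omega$ produces an output mismatching $w$ within its first $\length{w}$ positions. To decide it, I work in the structure $\D_S$ obtained by naming $S$ and the letters of $w$ as constants, and build an \nra $B_w$ that simulates $T$, checks with equality tests that the first $n$ input letters are $a_1,\dots,a_n$, compares $T$'s produced output letter by letter against $w$ using a pointer into $w$ kept in the control state, and accepts exactly when $T$'s run is accepting and a mismatch with $w$ has been latched among the first $\length{w}$ output letters. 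Then $\mathrm{Bad}(w)$ holds iff $\rel{B_w}\neq\emptyset$, which is decidable by the emptiness theorem for \nra ($\D_S$ being decidable, representable and oligomorphic).

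Finally I would assemble the procedure. Using an \nra for $\dom(f)\cap u\D^\omega$, first test whether $u$ admits a continuation in $\dom(f)$; if not, return \texttt{none} (this degenerate case never arises when $M_f$ is run on a genuine $x\in\dom(f)$, since the actual tail of $x$ witnesses membership). Otherwise test $\mathrm{Bad}(v)$: if it holds, then $v\not\prefix f(uy)$ for some valid $y$, so no $d$ can satisfy $vd\prefix f(uy)$ for all valid $y$, and we return \texttt{none}. Otherwise, for each candidate $d_i$ compute $\mathrm{Bad}(vd_i)$ and return the $d_i$ with $\neg\mathrm{Bad}(vd_i)$, if one exists, and \texttt{none} otherwise; by the first step such a $d_i$ is unique. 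Correctness is immediate, since by definition $d$ is a valid answer iff $\neg\mathrm{Bad}(vd)$.

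The main obstacle is data guessing: after reading $u$ the set of reachable configurations is infinite, because registers may hold freshly guessed values, and a priori the next output letter could itself be a guessed value. The equivariance/$S$-definability argument of the first step is exactly what tames this, pinning any forced output letter to one of finitely many $S$-definable candidates; all remaining quantification over infinite continuations (including the B\"uchi acceptance of $T$) is then packaged into \nra-emptiness and discharged by the emptiness theorem. A secondary technical point is the faithful treatment of the concrete words $u$ and $w$ inside a register automaton, which is handled by naming their letters as constants and using that enriching a decidable oligomorphic structure with finitely many constants preserves both decidability and oligomorphicity.
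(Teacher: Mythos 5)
You take a genuinely different route from the paper. The paper's proof has two phases: first it decides \emph{whether} a next letter exists, by reducing the negation to non-emptiness of an \nra which simulates two copies of $T$ on a common input whose prefix has the type of $u$, and which looks either for an output mismatching $v$ within the first $|v|$ letters, or for two outputs mismatching each other within the first $|v|+1$ letters, ending in co-reachable configurations; second, knowing a next letter exists, it \emph{computes} it by concretely simulating some run of $T$ on a word of $\dom(f)\cap u\D^\omega$, instantiating data values by enumerating encodings (representability), until the $(|v|+1)$-th output letter appears. You instead bound the answer set first: equivariance of $\rel{T}$ forces any valid answer to be fixed by every automorphism fixing $S$, hence to lie in the finite set of $S$-definable values; then both the \texttt{none} cases and the validity of each candidate become \nra-emptiness queries ($\mathrm{Bad}(\cdot)$) over $\D$ enriched with constants. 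The equivariance observation is a nice conceptual contribution the paper does not make explicit, and your handling of the degenerate case $\dom(f)\cap u\D^\omega=\emptyset$ is more careful than the paper's (whose second phase would not terminate on it); conversely, the paper's simulation phase reads the answer letter off a run and so never needs to enumerate the definable closure of $S$.

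There is, however, a genuine gap, located at the sentence ``since $\D$ is decidable and representable, the truth of \fo sentences with parameters in $\D$ is decidable''. The paper's decidability is satisfiability of \emph{parameter-free} \fo formulas, and representability only gives evaluation of \emph{quantifier-free} formulas at concrete data; together they do not yield evaluation of quantified formulas at concrete parameters (equivalently: computing the orbit of a concrete tuple, or deciding the enriched structure $\D_S$ uniformly in $S$). And this cannot be patched in general: consider the structure with universe $\{a_n\}_n\cup\{b_{n,k}\}_{n,k}$ (plus a constant), with a single symmetric edge relation where $E(a_n,b_{n,k})$ holds iff machine $n$ halts in exactly $k$ steps. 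This is a disjoint union of infinitely many isolated points and infinitely many single edges, hence oligomorphic with a decidable $\omega$-categorical theory and computable Ryll-Nardzewski function, and adjacency is decidable on encodings, so it is decidable and representable; yet the definable closure of $\{a_n\}$ (your candidate list) is not computable, and in fact the next-letter problem of the \nrt which stores the first letter, tests $E$ against the second, and outputs that second letter forever is itself uncomputable: on input $(a_n,\varepsilon)$ a correct, always-terminating algorithm would decide the halting problem. So your step fails, but so does Lemma~\ref{lem:nextletter} under a literal reading of its hypotheses, and the paper's own proof leans on the very same unstated assumption (identifying ``the type of $u,v$'' of concrete words in phase one; ``from decidability we can check whether the transition is possible'' from concrete configurations in phase two). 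Modulo strengthening the hypothesis to uniform decidability of \fo with parameters (which holds for the structures the paper targets, e.g. $(\N,\set{=})$ and $(\Q,\set{<})$, via effective quantifier elimination), your argument is correct and arguably cleaner; under the literal hypotheses, it shares its one hole with the paper.
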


\begin{proof}
In the next letter problem we get as input two words $u,v\in \D^*$.
Our first goal is to decide whether there exists $d\in D$ such that $f(u\D^\omega)\subseteq vd\D^\omega$. We check the negation of this property, i.e.\ we try to exhibit two runs $C_0\xrightarrow{u|u_1} C_1\xrightarrow{w|w_1} D_1$, $C_0\xrightarrow{u|u_2} C_2\xrightarrow{w|w_2} D_2$ such that $|u_1w_1|,|u_2w_2|> |v|$ and either $|u_1w_1\wedge v|<|v|$ or $|u_1w_1\wedge u_2w_2| \leq |v|$.
The non-existence of such runs only depends on the type of $u,v$, hence we can define an automaton which simulates $T$ and starts by reading some input of the type of $u$ and checks whether there is a mismatch occurring before the $|v|$ outputs, which can be done by adding one register to store the mismatch, and two $|v|$-bounded counters in memory (recall that $v$ is given as input). It finally checks that the reached configurations are co-reachable by guessing some continuation for each and simulating $T$ over it. Thus we reduce the non-existence of a next letter to the non-emptiness of an automaton, which is decidable.

Once we know that such a next letter exists, we only have to simulate any run of $T$ over $uw$ for an arbitrary $w \in \D^\omega$ such that $uw \in \dom(T)$, and see what the $|v|+1^{\text{th}}$ output is (note that we can avoid $\epsilon$-producing loops, so this data value will be output in a finite number of steps). To be able to simulate $T$ over $uw$, we use the fact that $\D$ is representable and decidable. For every transition that we want to take, from decidability we can check whether the transition is possible. Then, once we know the transition is possible, we can enumerate the representations of elements of $\D$ and check that they satisfy the transition formula.
\end{proof}

As a direct corollary of Lemma~\ref{lem:nextletter}, Theorem~\ref{thm:cont2comp}
and Theorem~\ref{thm:comp2cont}, we obtain:

\begin{thm}\label{thm:compcont}
    Let $f : \D^\omega\rightarrow \D^\omega$ be a function defined by an
  $\nrt$ over a decidable oligomorphic structure $\D$, and let
  $m:\N\rightarrow \N$ be a total function. Then,
  \begin{enumerate}
  \item $f$ is computable iff $f$ is continuous
  \item $f$ is uniformly computable iff $f$ is uniformly continuous
  \item $f$ is $m$-computable iff $f$ is $m$-continuous
  \end{enumerate}
\end{thm}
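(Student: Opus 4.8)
The plan is to obtain this result as a direct combination of the two correspondence theorems of Section~\ref{sec:compvscont} with the next-letter computability established in Lemma~\ref{lem:nextletter}. Since the computability notions are only defined for representable data domains, I would first record that throughout we tacitly assume $\D$ to be representable (this is forced anyway, as $f:\D^\omega\rightarrow\D^\omega$ must be a candidate for Turing computation); this is precisely the extra hypothesis needed to invoke Lemma~\ref{lem:nextletter}.

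The first step dispatches the ``only if'' directions of all three equivalences at once. By Theorem~\ref{thm:comp2cont}, which applies to an \emph{arbitrary} function $f:\D^\omega\rightarrow\D^\omega$ with no assumption on the next-letter problem, computability implies continuity, uniform computability implies uniform continuity, and $m$-computability implies $m$-continuity. As the $f$ under consideration is in particular such a function, these three implications hold immediately.

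The second step handles the ``if'' directions, and is where the structural hypotheses are consumed. Here I would apply Lemma~\ref{lem:nextletter}: because $f$ is defined by an \nrt over a decidable representable oligomorphic structure $\D$, its next-letter problem is computable. This is exactly the standing hypothesis of Theorem~\ref{thm:cont2comp}, which therefore yields that continuity implies computability, uniform continuity implies uniform computability, and $m$-continuity implies $m$-computability. Chaining the two steps gives the three equivalences claimed, one for each computability/continuity level.

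As this is a corollary rather than a fresh argument, I do not expect a genuine obstacle in the proof itself; the only point demanding care is the bookkeeping of hypotheses, namely ensuring that $\D$ is representable so that Lemma~\ref{lem:nextletter} is applicable, and keeping straight that the forward implications require no such assumption whereas only the converse implications rely on the computability of the next-letter problem.
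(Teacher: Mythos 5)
Your proposal is correct and matches the paper's own argument exactly: the paper derives Theorem~\ref{thm:compcont} as a direct corollary of Lemma~\ref{lem:nextletter}, Theorem~\ref{thm:cont2comp} and Theorem~\ref{thm:comp2cont}, precisely the decomposition you describe. Your additional remark about tacitly assuming representability (needed for Lemma~\ref{lem:nextletter} and for computability to even make sense) is a fair point of hypothesis bookkeeping that the paper glosses over.
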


\subsubsection{Deciding functionality, continuity and computability}

\begin{thm}\label{thm:decoligo}
Given a decidable (resp.~polynomially decidable) oligomorphic structure $\D$ functionality, continuity and uniform continuity are decidable (resp. \PSpace-c) for functions given by \nrt. As a consequence, if $\D$ is representable, then computability and uniform computability are decidable (resp.~\PSpace-c).
\end{thm}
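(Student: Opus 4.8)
The plan is to turn the three semantic properties into the (non-)existence of a bounded structural pattern, and then to decide the existence of such a pattern by a nondeterministic search that relies on the first-order definability of orbits. For functionality I would invoke the characterization of Proposition~\ref{prop:func}: $\rel{T}$ fails to be functional exactly when there is a pattern of two synchronized runs of $T$ whose lengths are bounded by a fixed polynomial in $\rn(4k)$, $|Q|$ and $L$, producing mismatching outputs and ending in a repeated, co-reachable (resp.~final) configuration orbit. For continuity and uniform continuity I would use the critical-pattern characterization of Proposition~\ref{prop:crit} in the same way, the only difference being the additional finality constraint on $C_1$ that separates continuity from uniform continuity. In all three cases the task reduces to deciding whether such a bounded pattern exists.

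To decide existence, I would build an auxiliary register automaton that simulates the product $T \times T$ and guesses the pattern transition by transition. At each step it maintains the joint orbit of the two current configurations, which by Lemma~\ref{lem:smallmis} must also record the finitely many data values needed to witness and preserve the mismatch, hence an orbit in $\D^{4k+2}$; by Proposition~\ref{prop:oligoFO} each such orbit is first-order definable. Checking that a guessed transition is legal then amounts to deciding satisfiability over $\D$ of a first-order formula conjoining the source orbit type, the two transition guards, and the target orbit type; the mismatch and the co-reachability (resp.~finality) conditions are verified the same way, co-reachability being itself a bounded lasso search in the style of Proposition~\ref{prop:emptiness-oligo}. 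For a decidable structure, $\rn$ is computable and first-order satisfiability is decidable, so one may enumerate all patterns up to the (computable) bound and test each, giving decidability. For a polynomially decidable structure the orbit formulas have polynomial size and first-order satisfiability is in \PSpace; since the pattern length is polynomial in $\rn(4k)$ and $\rn(k)$ is at most exponential, the length bound fits in a counter of polynomially many bits. The whole search is then an \NPSpace procedure maintaining only the current orbit type, one such counter, and a constant amount of bookkeeping, so by Savitch's theorem it is in \PSpace.

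The statements on computability are then immediate: over a decidable representable oligomorphic structure the next-letter problem is computable (Lemma~\ref{lem:nextletter}), so by Theorem~\ref{thm:compcont} computability coincides with continuity and uniform computability with uniform continuity; deciding the continuity notions therefore decides the computability notions, and in the polynomially decidable case the reductions preserve the \PSpace bound. It remains to establish \PSpace-hardness, which I would obtain by reducing from the nonemptiness problem of register automata over $\D$ (itself \PSpace-hard), encoding an automaton $A$ into an \nrt that is forced to exhibit an output mismatch, and hence becomes non-functional and discontinuous, precisely when $A$ accepts some word. The step I expect to be the main obstacle is the space-efficient and faithful implementation of the pattern search: one must simultaneously track the joint orbit in $\D^{4k+2}$ so that output mismatches are detected and kept alive under loop removal (the content of Lemma~\ref{lem:smallmis}), interleave the co-reachability checks without an extra blow-up, and ensure that the polynomial length bound really compiles to a counter of polynomially many bits. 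Fusing these into a single \NPSpace computation, rather than a union of independent subroutines, is what needs the most care.
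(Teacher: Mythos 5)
Your proposal follows essentially the paper's own route for the upper bounds: the small-pattern characterizations (Proposition~\ref{prop:func} for functionality, Proposition~\ref{prop:crit} for continuity and uniform continuity, distinguished by the finality of $C_1$), an on-the-fly \NPSpace simulation of $T\times T$ that stores only orbit types updated through \fo satisfiability queries, a length counter that fits in polynomial space because $\rn(4k)$ is at most exponential when $\D$ is polynomially decidable, co-reachability checked as in Proposition~\ref{prop:emptiness-oligo}, and the transfer to (uniform) computability via Lemma~\ref{lem:nextletter} and Theorem~\ref{thm:compcont}. One cosmetic difference: the paper maintains a $2k$-type, enlarged to a $2k{+}1$-type only after guessing the first mismatching output value (together with two counters to align the mismatching positions), rather than an orbit of $\D^{4k+2}$; your over-approximation is harmless for the \PSpace bound since orbit formulas stay polynomial-size.

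The step that would fail as written is the lower bound for continuity and uniform continuity. Your single reduction builds an \nrt that is ``forced to exhibit an output mismatch, and hence becomes non-functional and discontinuous'' when $A$ is nonempty; but the continuity problems are posed for \emph{functional} \nrt, so once the transducer is non-functional, continuity of $\rel{T}$ is not even defined, and this reduction only establishes hardness of functionality. The paper's fix keeps the reduction functional: from an \nra $A$ over $(\N,\{=\})$ --- whose emptiness is \PSpace-complete, and which any countable $\D$ can simulate, which is also how you should ground your claim that emptiness over $\D$ itself is \PSpace-hard --- it defines the function $w\# x \mapsto g(x)$ for $w\in L(A)$, $x\in \D^\omega$, where $g$ is a fixed \nrt-definable non-continuous function. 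This relation is functional for every $A$, and it is continuous (resp.\ uniformly continuous) if and only if $L(A)=\emptyset$, which gives the missing hardness for the two continuity problems.
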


\begin{proof}
The proofs are very similar, whether we consider functionality, continuity or uniform continuity.
Let us show the result for functionality.
Moreover we assume that $\D$ is polynomially decidable, the argument in the more general case can easily be obtained just by forgetting about complexity.

Let us consider an \nrt $T$ with $k$ registers, state space $Q$ and maximum output length $L$. We want to show that we can decide the existence of two runs with a mismatch.
From the characterization given in Proposition~\ref{prop:func}, we know that the pattern we are looking for is small. We consider a counter bounded by the value $3P(\rn(4k),|Q|,L)$, which can be represented using polynomial space because $\rn(4k)$ is exponential in $k$ ($\D$ is polynomially decidable). Our goal is to simulate $T$ and exhibit a pattern of length bounded by that counter.
As we have seen before, we can easily simulate runs of $T$ in \PSpace. The additional difficulty here is that at some point we have to check that two output positions mismatch. We use two additional counters which will ensure that the two mismatching outputs correspond to the same position.
Let us now describe how the algorithm goes, in a high-level manner.
We start by initializing two runs in parallel, as well as our counters.
We keep in memory the $2k$-type of the two configurations, which can be done in polynomial space since $\D$ is polynomially decidable.
We keep guessing in parallel two transitions for our runs and updating the $2k$-type using the fact that satisfiability of \fo is in \PSpace. Every time a run outputs some letter, its counter is incremented.
At some point we may guess that we output the mismatching value in one of the runs, in which case we stop the counter corresponding to that run. We crucially also need to be able to check later that the value output mismatches. In order to do this we keep in memory a $2k+1$-type, always keeping the value which we output. At some point we output the second mismatching position, we check that the counters coincide and that the outputs are indeed different, which is given by the $2k+1$-type. In parallel, we also have to check that we reach some final configurations and that some configuration repeats. To do this we need to keep one or two additional $2k$-type in memory, which again can easily be done in \PSpace.

The approach for continuity and uniform continuity is exactly the same except that the patterns of Proposition~\ref{prop:crit} are slightly more involved. Moreover we also need to decide on the fly that the configurations reached are co-reachable. This can be done exactly like for non-emptiness of automata in \PSpace.

Finally, the \PSpace lower bound is obtained by reducing from emptiness of register automata over $(\N, \{=\})$, which is \PSpace-c~\cite{DBLP:journals/tocl/DemriL09}. Since the data domains are countable, they can always simulate $(\N,\{=\})$, and the proofs of~\cite{DBLP:conf/fossacs/ExibardFR20} can easily be adapted.
\end{proof}
Thus, given a specification represented as a \nrt over a representable and decidable domain, one can examine whether it can be implemented (provided it is functional, which is then decidable) by checking whether it is computable. Indeed, computability is the most liberal criterium for being realisable. The notion of uniform computability then allows to refine this check, as for functions that are uniformly computable, one can focus on implementations that have a bounded lookahead. In the case of $(\Q,\set{<})$, we further get that both problems are $\PSpace$-complete:
\begin{thm}
  For relations given by \nrt over $(\Q,\set{<})$ deciding functionality, continuity/computability and uniform continuity/uniform computability are \PSpace-complete.
\end{thm}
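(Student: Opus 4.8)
The plan is to derive this statement as an instance of Theorem~\ref{thm:decoligo} together with the computability/continuity correspondence of Theorem~\ref{thm:compcont}. To do so, it suffices to check that $(\Q,\set{<})$ is a representable, polynomially decidable oligomorphic structure; then functionality, continuity and uniform continuity are \PSpace-complete by Theorem~\ref{thm:decoligo}, and representability together with Theorem~\ref{thm:compcont} transfers these bounds to computability and uniform computability. Oligomorphicity is already recorded in Example~\ref{exa:oligomorphic}: the orbit of a tuple $(q_1,\dots,q_k)\in\Q^k$ under order-preserving bijections is entirely determined by its \emph{order type}, \ie by the data of which coordinates coincide and how the distinct ones are linearly ordered, and there are only finitely many such types (the ordered Bell numbers).

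First I would establish polynomial decidability. For the orbit part, the order type above is captured by the quantifier-free formula $\bigwedge_{i<j}\theta_{ij}(x_i,x_j)$ where each $\theta_{ij}$ is one of $x_i<x_j$, $x_j<x_i$ or $x_i=x_j$; this is an \fo formula of size $O(k^2)$, hence polynomial in $k$, and the Ryll-Nardzewski function $\rn$ is computable (it enumerates order types). It remains to bound the \fo satisfiability problem over $(\Q,\set{<})$. This is the decision problem for the theory of dense linear orders without endpoints, which admits quantifier elimination: a block of existential quantifiers over a Boolean combination of order atoms can be eliminated by reasoning only about the finitely many relative positions (the ``gaps'') of the free variables, each elimination step using space polynomial in the number of variables. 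Iterating the elimination quantifier by quantifier yields a \PSpace decision procedure, so $(\Q,\set{<})$ is polynomially decidable. I expect this \PSpace bound for the theory of dense orders to be the only genuinely content-bearing step; the remaining verifications are routine.

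Finally, representability is immediate: a rational can be encoded as a pair of binary integers, and comparison of two such encodings for $<$ (and for the added constant $\bot$) is decidable, so one may take $\mathsf{enc}:\Q\to A^*$ accordingly. With all three properties in hand, Theorem~\ref{thm:decoligo} gives the \PSpace upper bounds for functionality, continuity and uniform continuity, and hence, via Theorem~\ref{thm:compcont}, for computability and uniform computability. The matching \PSpace lower bound is already supplied by the proof of Theorem~\ref{thm:decoligo}, which reduces from the emptiness problem of register automata over $(\N,\set{=})$: since equality is expressible over $(\Q,\set{<})$ as $\neg(x<y)\wedge\neg(y<x)$, any such automaton can be simulated over $(\Q,\set{<})$ (using, say, the integers sitting inside $\Q$ as data values), so \PSpace-hardness transfers unchanged. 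This establishes \PSpace-completeness of all the listed problems.
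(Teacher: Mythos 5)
Your proposal is correct and follows essentially the same route as the paper: verify that $(\Q,\set{<})$ is representable, oligomorphic and polynomially decidable (types are just linear orders of the free variables, hence polynomial-size formulas, and \fo satisfiability over dense linear orders is in \PSpace), then invoke Theorem~\ref{thm:decoligo} (together with Theorem~\ref{thm:compcont} for the computability statements), with the lower bound inherited from the reduction from register-automaton emptiness over $(\N,\set{=})$. The only soft spot is your justification of the \PSpace bound for \fo satisfiability --- naive quantifier-by-quantifier elimination can blow formulas up exponentially, so one should rather cite the known Ferrante--Rackoff-style result for dense linear orders --- but the paper itself simply asserts this bound, so this does not separate your argument from the paper's.
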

\begin{proof}
We only need to argue that $(\Q,\set{<})$ is representable and polynomially decidable.
Clearly rational numbers are representable.
Moreover, since $(\Q,\set{<})$ is homogeneous, it admits quantifier elimination, which means that any $k$-type can be defined by a formula polynomial in $k$ (actually linear). Indeed a type is just given by the linear order between the free variables.
Moreover, satisfiability of first-order logic over $(\Q,\set{<})$ is in \PSpace.
\end{proof}
Note that the same reasoning applies for $(\N,\set{=})$; the case of $(\N,\set{=})$ can also be obtained by encoding it in $(\Q,\set{<})$.

\section{\texorpdfstring{A non oligomorphic case: $(\N,\{<,0\})$}{A non oligomorphic case: (N,\{<,0\})}}%
\label{sec:case_N}

We now turn to the study of the case of natural numbers equipped with the usual order. This domain is not oligomorphic (cf Example~\ref{exa:oligomorphic}), so there might not exist loops in the transducer, as there are infinitely many orbits of configurations.
\begin{nota}
For simplicity, in the rest of this section, $(\N,\{<,0\})$ and $(\Q_+,\{<,0\})$ (where $\Q_+$ is the set of rational numbers which are greater than or equal to $0$) are respectively denoted $\N$ and $\Q_+$.
\end{nota}
We thus need to study more precisely which paths are iterable, i.e.\ can be repeated unboundedly or infinitely many times. We show that such property only depends on the relative order between the registers, i.e.\ on the type of the configurations, seen as belonging to $\Q_+$, where the type of a configuration is an FO-formula describing its orbit (cf Proposition~\ref{prop:oligoFO} and Section~\ref{sec:Q-types}). More generally, the fact that $\N$ is a subdomain of $\Q_+$, along with the property that any finite run in $\Q_+$ corresponds to a run in $\N$ (by multiplying all involved data values by the product of their denominator), allows us to provide a characterisation of continuity and uniform continuity which yields a \textsc{PSpace} decision procedure for those properties.

\subsection{On loop iteration}

\begin{exa}%
  \label{exa:emptyN-nonemptyQ-decr}
\begin{tikzpicture}[->, >=stealth', auto, node distance=2.5cm]
  \tikzstyle{every state}=[text=black,fill=yellow!30]

  \node[state, initial, initial text={}] (0) {$0$};
  \node[state, accepting, right= of 0] (1) {$1$};

  \path (0) edge node[above] {$\top, \downarrow{} r$} (1);
  \path (1) edge[loop above] node[above] {$ * < r, \downarrow{} r$} (1);
\end{tikzpicture}
\end{exa}

\begin{exa}%
  \label{exa:emptyN-nonemptyQ-incr-bounded}
\begin{tikzpicture}[->, >=stealth', auto, node distance=2.5cm]
  \tikzstyle{every state}=[text=black,fill=yellow!30]

  \node[state, initial, initial text={}] (0) {$0$};
  \node[state, accepting, right= of 0] (1) {$1$};

  \path (0) edge node[above] {$?r_M$} (1);
  \path (1) edge[loop above] node[above] {$r < * < r_M, \downarrow{} r$} (1);
\end{tikzpicture}
\end{exa}

The \nra of Example~\ref{exa:emptyN-nonemptyQ-decr} is non-empty in $\Q_+$, since it accepts e.g.\ the word $1 \cdot \dfrac{1}{2} \cdots \dfrac{1}{n} \cdots$. However, it is empty in $\N$. Indeed, any data word compatible with its only infinite run necessarily forms an infinite descending chain, which is impossible in $\N$. Similarly, in Example~\ref{exa:emptyN-nonemptyQ-incr-bounded}, the \nra initially guesses some upper bound $B$ which it stores in $r_M$, and then asks to see an infinite increasing chain which is bounded from above by $B$. This is possible in $\Q_+$, but not in $\N$.

That is why we need to study more closely what makes a given path \emph{$\omega$-iterable}, i.e.\ that can be taken an infinite number of times. To characterise continuity, we will also need the weaker notion of \emph{iterable} path, i.e.\ of a path that can be taken arbitarily many times over finite inputs which are increasing for the prefix order. For instance, the loop in Example~\ref{exa:emptyN-nonemptyQ-decr} is not iterable: the first letter in the input sets a bound on the number of times it can be taken. The loop in Example~\ref{exa:emptyN-nonemptyQ-incr-bounded} is iterable: it suffices to guess bigger and bigger values of the initial upper bound. However, there can be no infinite run which contains infinitely many occurrences of such loop, as the value that is initially guessed for a given run sets a bound on the number of times the loop can be taken, so it is not $\omega$-iterable.

We show that the notions of iterability and $\omega$-iterability are both characterised by properties on the order between registers of a pair of configurations, which can be summed up into a type, hence opening the way to deciding such properties.

\subsection{\texorpdfstring{$\Q$-types}{Q-types}}%
\label{sec:Q-types}

 In our study, the relative order between registers plays a key role. Such information is summed up by the type of the configuration, interpreted as a configuration in $\Q_+$.

Since we will need to manipulate different types of copies of some set of registers, we adopt the following convention:
\begin{conv}%
  \label{conv:copies}
  In the following, we assume that for a set of registers $R$, $R_1$ and $R_2$ are two disjoint copies of $R$, whose elements are respectively $r_1$ and $r_2$ for $r \in R$. Similarly, $R'$ is a primed copy of $R$, whose elements are $r'$ for $r \in R$. Note that the two can be combined to get $R_1', R_2'$. Note also that primes and indices are also used as usual notations, but no ambiguity should arise.
\end{conv}

\begin{defi}
  For a register valuation $\bar d : R \rightarrow \N$, we define $\tau(\bar d)$ as $\tau_{\Q_+}(\bar d)$ the type of the valuation in $\Q_+$, i.e.\ an FO-formula describing its orbit (such an FO-formula exists by Proposition~\ref{prop:oligoFO} since $\Q_+$ is oligomorphic). Note that such type can be represented e.g.\ by the formula $\bigwedge_{\bowtie \in \{<,>,=\}}\bigwedge_{r, r' \in R \mid \bar d(r) \bowtie \bar d(r')} r \bowtie r' \wedge \bigwedge_{r \in R \mid \bar d(r) = 0} r = 0$. We extend the notion to configurations $(p,\bar d) \in Q \times \N^R$ by letting $\tau((p,\bar d)) = (p, \tau(\bar d))$. Thus, the type specifies the current state, and summarises the information of the order between registers, as well as whether they are equal to $0$ or not.

  We will also need to have access to the relative order between registers of two configurations. Thus, for two register valuations $\bar d_1,\bar d_2 : R \rightarrow \N$, we define $\sigma(\bar d_1,\bar d_2) = \tau_{\Q_+}(\bar d_1 \uplus \bar d_2')$, where $\bar d_1 \uplus \bar d_2'$ is the disjoint union of $\bar d_1$ and of a primed copy of $\bar d_2$, so that the registers of $\bar d_2$ can be distinguished from those of $\bar d_1$. We then have, for all registers $r,s \in R$ and all relations $\bowtie \in \{<,>,=\}$ that $\sigma(\bar d_1,\bar d_2) \Rightarrow r \bowtie s'$ if and only if $\bar d_1(r) \bowtie \bar d_2(s)$. Again, the definition is naturally lifted to configurations by letting $\sigma((p,\bar d_1), (q,\bar d_2)) = (p,q,\sigma(\bar d_1,\bar d_2))$.
\end{defi}
\begin{rem}
Recall that by definition of an orbit, we have that for any register valuations $\bar d_1$
and $\bar d_2$ such that $\tau(\bar d_1) = \tau(\bar d_2)$, there exists an automorphism $\mu \in \aut(\Q_+)$ such that $\mu(\bar d_1) = \bar d_2$.
\end{rem}

The core property is the following:

\begin{pty}
  Let $R$ be a set of registers, and let $\sigma$ be a $\Q$-type defined over $R \uplus R'$, where $R'$ is a primed copy of $R$. We say that $\sigma$ has the property $\star$ for the set of registers $X \subseteq R$ if:
  \begin{itemize}
  \item for all $r \in X$, $\sigma \Rightarrow r \leq r'$
  \item for all $r,s \in X$, if $\sigma \Rightarrow s = s'$ and $\sigma \Rightarrow r \leq s$, then $\sigma \Rightarrow r = r'$
  \end{itemize}
  By extension, for two configurations $C$ and $C'$ over $R$, we say that $C$
  and $C'$ have the property $\star$ for the set of registers $X \subseteq R$ if
  $\sigma(C,C')$ has the $\star$ property for $X$.

  Finally, when $X = R$, we simply state that $\sigma$ has the $\star$ property.
\end{pty}

Such property ensures, for the considered subset of registers, that they cannot induce infinite descending chains nor infinite bounded increasing chains (as both are not feasible in $\N$), if a run loops over configurations whose pairwise type is $\sigma$.

\subsection{\texorpdfstring{Relations between machines over $\N$ and over $\Q_+$}{Relations between machines over N and Q+}}

There is a tight relation between machines operating over $\N$ and over $\Q_+$. First, since $\N$ is a subdomain of $\Q_+$, runs in $\N$ are also runs in $\Q_+$. Over finite runs, by multiplying all data values by the product of their denominators, we can get the converse property.
\begin{prop}%
  \label{prop:Q_N_fin}
  Let $X \subset_f \Q_+$ be a finite subset of $\Q_+$. There exists an
  automorphism $\lambda \in \aut(\Q_+)$ such that $\lambda(X) \subset \N$,
  $\lambda(\N) \subseteq \N$ and $\lambda$ is non-contracting, i.e.\ for all $x,
  y \in \Q_+$, $\size{\lambda(x) - \lambda(y)} \geq \size{x - y}$.
\end{prop}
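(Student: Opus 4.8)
The plan is to produce $\lambda$ explicitly as a global dilation (multiplication by a suitable positive integer) and then check, one by one, that it satisfies all three required properties. The key observation is that a single uniform scaling can simultaneously place $X$ inside $\N$, keep $\N$ inside $\N$, and be non-contracting, so no piecewise construction is needed.

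Concretely, first I would clear denominators. Writing $X = \{x_1, \dots, x_m\}$, for each $i$ pick a positive integer $q_i$ with $q_i x_i \in \N$ (the denominator of $x_i$ in lowest terms works, and $q_i = 1$ when $x_i \in \N$), and set $N = \prod_{i=1}^m q_i$. This is a positive integer with $N x_i \in \N$ for every $i$. I then define $\lambda : \Q_+ \to \Q_+$ by $\lambda(x) = N x$. Next I would verify that $\lambda \in \aut(\Q_+)$: it is a bijection of $\Q_+$ (its inverse is $x \mapsto x/N$), it is strictly increasing and hence preserves $<$, and $\lambda(0) = 0$, so it fixes the constant symbol; thus it is an automorphism of $(\Q_+,\{<,0\})$.

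The three remaining properties are then immediate. We have $\lambda(X) = \{N x_i \mid 1 \leq i \leq m\} \subseteq \N$ by the choice of $N$; we have $\lambda(\N) = N\N \subseteq \N$; and for all $x,y \in \Q_+$, $\size{\lambda(x) - \lambda(y)} = N\size{x - y} \geq \size{x - y}$ since $N \geq 1$, so $\lambda$ is non-contracting.

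I expect there to be no genuine obstacle here: once one sees that a global dilation does everything at once, the proposition reduces to routine verifications. The only conceptual point worth stressing is \emph{why} a global stretch is the right move rather than a bounded local perturbation: moving a non-integer element of $X$ onto an integer necessarily displaces nearby integers too, so an automorphism fixing $\N$ pointwise cannot work; stretching by a factor $N \geq 1$ resolves this and, as a bonus, yields the non-contracting property for free.
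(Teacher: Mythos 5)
Your proof is correct and takes essentially the same approach as the paper: the paper also writes $X = \{p_1/q_1,\dots,p_n/q_n\}$, sets $K = \prod_i q_i$, and takes $\lambda : d \mapsto Kd$. Your version merely spells out the verification (bijectivity, order-preservation, fixing $0$, and non-contraction via $K \geq 1$) that the paper leaves implicit.
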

\begin{proof}
  By writing $X = \left\{\frac{p_1}{q_1}, \dots, \frac{p_n}{q_n}\right\}$, let $K = \prod_i q_i$. Then $\lambda : d \mapsto K d$ is an automorphism satisfying the required properties.
\end{proof}
We then get the following:
\begin{prop}\label{prop:fromQtoN}
  Let $A$ be a $\nra$ over $\Q_+$, and let $\nu$ and $\nu'$ be $\Q$-types.

  If there exist two configurations $C = (p,\bar d), C' = (q,\bar d')$ where $\bar d, \bar d' : R \rightarrow \Q_+$ are such that $\tau(\bar d) = \nu$, $\tau(\bar d') = \nu'$ and if there exists a data word $v \in \Q_+^*$ such that $C \xrightarrow{u} C'$, then there also exist two configurations $D = (p, \bar e)$, $D'= (q, \bar e')$ and a data word $w$ which satisfy the same properties, i.e.\ $\tau(D) = \nu$, $\tau(D') = \nu'$ and $D \xrightarrow{w} D'$; and which belong to $\N$, i.e.\ such that $\bar e,\bar e' : R \rightarrow \N$ and $w \in \N^*$.
\end{prop}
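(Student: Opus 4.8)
The plan is to transport the given $\Q_+$-run into $\N$ by applying a single well-chosen automorphism of $\Q_+$, exploiting the fact that automorphisms preserve both runs and types. The key observation is that the run $C \xrightarrow{v} C'$ is \emph{finite}, so it passes through only finitely many configurations and reads only finitely many input letters. First I would let $X \subset_f \Q_+$ be the finite set collecting every data value occurring in this run: the contents of every register of every configuration along the run (in particular the entries of $\bar d$ and $\bar d'$), together with every letter of $v$. Applying Proposition~\ref{prop:Q_N_fin} to $X$ then yields an automorphism $\lambda \in \aut(\Q_+)$ with $\lambda(X) \subset \N$ (the extra guarantees $\lambda(\N) \subseteq \N$ and non-contraction provided by that proposition are not needed for the present statement). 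I would then set $D = \lambda(C)$, $D' = \lambda(C')$ and $w = \lambda(v)$, where $\lambda$ is applied pointwise to register contents and to the letters of the input word; since all these values lie in $X$, we get $\bar e, \bar e' : R \rightarrow \N$ and $w \in \N^*$ as required.

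It then remains to verify two facts. The first, and the main technical point, is that $\lambda$ turns the run into a legal run, i.e. $D \xrightarrow{w} D'$. This amounts to the equivariance of the successor relation of an \nra under $\aut(\Q_+)$: a transition test $\phi$ is a quantifier-free formula over the signature of $\Q_+$, hence preserved by the automorphism, so $\bar d_i[\indata \leftarrow d] \models \phi$ if and only if $\lambda(\bar d_i)[\indata \leftarrow \lambda(d)] \models \phi$; moreover the register operations commute with $\lambda$, since $\keep$ trivially preserves a value and $\setto$ (setting a register to the current input $d$) becomes setting it to $\lambda(d)$. Applying $\lambda$ to each configuration along the run therefore produces a valid run of $A$ reading $\lambda(v)$. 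The second fact is that the endpoint types are unchanged: as $\lambda$ is an automorphism, $\tau(\bar e) = \tau(\lambda(\bar d)) = \tau(\bar d) = \nu$ and likewise $\tau(\bar e') = \nu'$, because the type of a valuation is precisely the description of its orbit under $\aut(\Q_+)$.

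I expect the equivariance check to be the only delicate step, and it is essentially the general principle—already invoked throughout the oligomorphic section—that $\aut(\D)$ acts on configurations and preserves the semantics of the machine; everything else is bookkeeping about which data values to place into the finite set $X$. Note finally that the statement concerns an \nra, so there are no produced words to account for; had outputs been present they would transform covariantly under $\lambda$ as well, being themselves register contents, so the argument would carry over unchanged.
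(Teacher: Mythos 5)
Your proof is correct and follows essentially the same route as the paper: apply Proposition~\ref{prop:Q_N_fin} to a finite set of data values from the run and transport the run by the resulting automorphism $\lambda$, using equivariance of the transition relation under $\aut(\Q_+)$. The only (harmless) difference is that you place \emph{all} register contents of all intermediate configurations into $X$, whereas the paper takes only $X = C(R) \cup C'(R) \cup \mathsf{data}(u)$; your choice yields the slightly stronger conclusion that intermediate configurations (including guessed values) also lie in $\N$, which the statement as written does not require.
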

\begin{proof}
  Let $A$ be a $\nra$ over $\Q_+$, and assume that $C \xrightarrow{u} C'$. By applying Proposition~\ref{prop:Q_N_fin} to $X = C(R) \cup C'(R) \cup \mathsf{data(u)}$ (states do not play a role here), we get that $\lambda(C) \xrightarrow{\lambda(u)} \lambda(C')$ is also a run of $A$, and $D = \lambda(C)$, $D' = \lambda(C')$ and $w = \lambda(u)$ satisfy the required properties.
\end{proof}
\begin{rem}
  As a corollary, we obtain that for any $\nra$ $A$ over finite words, $L_{\N}(A) \neq \varnothing$ if and only if $L_{\Q_+}(A) \neq \varnothing$.
\end{rem}
Note that such property does not hold over infinite runs, as witnessed by
Examples~\ref{exa:emptyN-nonemptyQ-decr}
and~\ref{exa:emptyN-nonemptyQ-incr-bounded}.
The property $\star$ ensures that a loop can be iterated
in $\mathbb{N}$, as shown in the next key proposition:
\begin{prop}\label{prop:Qwithstar}
Let $T$ be an \nrt and assume that $B \xrightarrow{u} B'$ following
some sequence of transitions $\pi$,
where $\tau(B)=\tau(B')$, $u\in \mathbb{Q}_+^*$ and
$B$ and $B'$ have the property $\star$.
Then there exists an infinite run $D \xrightarrow{x}$
over the sequence of transitions $\pi^\omega$,
with $x\in \mathbb{N}^\omega$ and $\tau(D) = \tau(B)$.
\end{prop}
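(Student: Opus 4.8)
The plan is to realise the loop first as an infinite run in $\Q_+$, by iterating a carefully chosen automorphism, and then to transport this run into $\N$ through an order-embedding of the countable set of data values it uses. Since $\tau(B)=\tau(B')$, the two configurations exhibit the same number $m$ of distinct register values, say $b_1 < \dots < b_m$ in $B$ and $b'_1 < \dots < b'_m$ in $B'$, and a register $r$ realises the $j$-th value in $B$ if and only if it realises the $j$-th value in $B'$. Property $\star$ then gives two facts I would record up front: every register weakly increases ($b_j \leq b'_j$), so each value is either \emph{frozen} ($b_j=b'_j$) or \emph{increasing} ($b_j<b'_j$); and the frozen values form a bottom segment $b_1 < \dots < b_k$, because if $s$ is frozen and $B(r)\leq B(s)$ then $r$ is frozen, so no increasing value can sit below a frozen one.

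First I would build an automorphism $\mu\in\aut(\Q_+)$ that realises the loop, i.e.\ with $\mu(B)=B'$, and that moreover fixes the frozen region while strictly expanding above it. Concretely, take $\mu$ to be the rational piecewise-linear map that is the identity on $[0,b_k]$, interpolates linearly through the constraint points $(b_{k+1},b'_{k+1}),\dots,(b_m,b'_m)$, and is the shift $x\mapsto x+(b'_m-b_m)$ above $b_m$ (in the degenerate all-frozen case, take $\mu$ the identity). This is an order-automorphism of $(\Q_+,<,0)$ fixing $0$, with $\mu(b_j)=b'_j$ for all $j$, hence $\mu(B)=B'$. Since $\mu(x)-x$ is affine on each piece and is $0$ at $b_k$ but strictly positive at every increasing constraint point, one checks that $\mu(x)>x$ for all $x>b_k$ and that $\mu$ has no fixed point above $b_k$; the reduction of the frozen registers to a bottom segment provided by $\star$ is exactly what makes the identity-below / expand-above choice consistent with the constraints $\mu(b_j)=b'_j$. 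Applying $\mu$ repeatedly to the finite run $B\xrightarrow{u}B'$, which follows $\pi$, yields the infinite run $B\xrightarrow{u}\mu(B)\xrightarrow{\mu(u)}\mu^2(B)\xrightarrow{}\cdots$ over $\pi^\omega$ in $\Q_+$, as automorphisms preserve quantifier-free tests and register operations.

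Next I would transport this $\Q_+$-run into $\N$. Let $V$ be the countable set of all data values occurring in it, and let $S_0\subset_f\Q_+$ be the finite set of values of the first block $B\xrightarrow{u}B'$, so that every element of $V$ has the form $\mu^n(s)$ with $s\in S_0$. For $s$ in the frozen region $\mu$ fixes $s$, and for $s>b_k$ the sequence $(\mu^n(s))_n$ is strictly increasing with no fixed point above it, hence tends to $+\infty$ by continuity of the piecewise-linear extension to $\mathbb{R}_{\ge 0}$. Consequently every value of $V$ has only finitely many predecessors in $V$: below a fixed $v$ lie the finitely many frozen values and, for each of the finitely many $s\in S_0$, only finitely many iterates $\mu^n(s)$. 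A countable linear order in which every element has finitely many predecessors embeds into $(\N,<)$; taking such an embedding $\theta$ of $V\cup\{0\}$ with $\theta(0)=0$ (for instance sending each value to its number of strict predecessors, shifted to fix $0$) and applying it to the whole run gives the desired $D\xrightarrow{x}$ with $D=\theta(B)$ and $x\in\N^\omega$. Indeed $\theta$ preserves $<$, $=$ and the predicate ${=}\,0$, hence all quantifier-free tests and register operations, so the transported run still follows $\pi^\omega$, and it preserves types, so $\tau(D)=\tau(B)$.

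The main obstacle is precisely the phenomenon illustrated by Examples~\ref{exa:emptyN-nonemptyQ-decr} and~\ref{exa:emptyN-nonemptyQ-incr-bounded}: a loop iterable in $\Q_+$ need not be iterable in $\N$, because iterating an arbitrary automorphism may create an infinite descending chain or an infinite chain bounded from above, neither of which can be realised over $\N$. The delicate point is to see that property $\star$ rules these out: it forces the frozen registers to the bottom, which simultaneously makes the expanding automorphism $\mu$ available and guarantees that each increasing register diverges to $+\infty$, so that the resulting value set is order-embeddable into $\N$. Establishing that the chosen $\mu$ really satisfies $\mu(x)>x$ with no fixed point above the frozen region, and that this entails the finitely-many-predecessors property of $V$, is the crux I expect to require the most care.
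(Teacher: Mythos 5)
Your proof is correct, but it takes a genuinely different route from the paper's. The paper proceeds by re-realizing the pairwise type $\sigma(B,B')$ inside $\N$: it first shows (Proposition~\ref{prop:starWider}, via an elimination of ``shrinking intervals'') that the $\star$ property allows one to find $\N$-valuations $D,D'$ with $\sigma(D,D')=\sigma(B,B')$ and $D'$ \emph{wider} than $D$; it then transports the finite run to these configurations and clears denominators (Proposition~\ref{prop:Q_N_fin}) so the input word lies in $\N^*$, and finally builds (Proposition~\ref{prop:widerMorphN}, via Lemma~\ref{lem:noshrinkbij}) an automorphism $\mu\in\aut(\Q_+)$ with $\mu(D)=D'$ and $\mu(\N)\subseteq\N$, so that iterating $\mu$ keeps every value in $\N$ and the infinite run is produced directly in $\N$ in the form $v\,\mu(v)\,\mu^2(v)\cdots$. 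You instead iterate first and discretize afterwards: you build an explicit piecewise-linear $\mu$ with $\mu(B)=B'$ that is the identity on the frozen bottom segment (whose existence is exactly what $\star$ gives) and strictly expanding above it, iterate it in $\Q_+$ to get the run over $\pi^\omega$, observe that every value used is either fixed by $\mu$ or has orbit diverging to $+\infty$ --- so the countable set of used values has finite down-sets --- and then order-embed that set into $(\N,<,0)$. The step that makes this discretization legitimate, and which is worth stating explicitly, is that runs of an \nrt only involve \emph{quantifier-free} tests, so a partial order-embedding of the used values (fixing $0$) suffices to transport a run; a full automorphism is not needed (with existential tests, density of $\Q_+$ could be exploited and the embedding step would break). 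Your route avoids the wider-than machinery entirely, which is arguably more elementary and self-contained; what the paper's route buys is that the iterating map is itself an $\N$-preserving automorphism, a stronger by-product that the paper reuses later (e.g.\ in the continuity proof, which needs runs of the form $v'\alpha(v')\alpha^2(v')\cdots$ with $\alpha(\N)\subseteq\N$), whereas your final run is not of that shape. One small edge case to phrase carefully: when no register value is frozen, ``identity on $[0,b_k]$'' should be read as merely fixing $0$; the argument is otherwise unaffected.
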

Before showing this proposition, let us introduce some intermediate notions:
\begin{defi}
Let $\bar d,\bar d'$ be two register valuations over $\Q_+$
such that $\tau(\bar d)=\tau(\bar d')$. We say that $\bar d'$ is \emph{wider} than $\bar d$
whenever for any $r,s\in R$, we have:
\begin{itemize}
\item $\size{\bar d'(s)- \bar d'(r)} \geq \size{\bar d(s)-\bar d(r)}$
\item $\bar d'(r) \geq \bar d(r)$
\end{itemize}
Note that the second item of the definition is required to ensure that the
interval between $\bar d'(r)$ and $0$ is also wider than the interval between $\bar d(r)$
and $0$.

The notion is extended to configurations by saying that $C' = (p',\bar d')$ is wider than $C = (p, \bar d)$ if $\bar d'$ is wider than $\bar d$. In the following, we only apply this notion to configurations $C$ and $C'$ that have the same state, i.e.\ $p = p'$.
\end{defi}
\begin{prop}%
  \label{prop:starWider}
  Let $\sigma$ be a type over $R \uplus R'$ such that $\sigma_{\mid R} =
  \sigma_{\mid R'}$. If $\sigma$ has the $\star$
  property, then there exist two register valuations $\bar d$ and $\bar d'$ in $\N$ such that
  $\sigma(\bar d,\bar d') = \sigma$ and such that $\bar d'$ is wider than $\bar d$.
\end{prop}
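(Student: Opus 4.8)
The plan is to realize the prescribed type $\sigma$ by an explicit pair of integer valuations whose gaps grow geometrically, so that the registers of $\bar d'$ always sit ``further apart'' than those of $\bar d$. First I would set up coordinates. Since $\sigma_{\mid R} = \sigma_{\mid R'}$, the unprimed and primed registers carry the same order type, so the map $r \mapsto r'$ is order-preserving; write $a_1 < a_2 < \dots < a_m$ for the distinct values taken by $\bar d$ and $b_1 < b_2 < \dots < b_m$ for those taken by $\bar d'$, so that each register $r$ occupies a common position $\mathrm{pos}(r) \in \{1,\dots,m\}$ with $\bar d(r) = a_{\mathrm{pos}(r)}$ and $\bar d'(r) = b_{\mathrm{pos}(r)}$. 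Writing $\delta_i = b_i - a_i$, one checks directly from the definition that $\bar d'$ is wider than $\bar d$ if and only if $\delta_i \geq 0$ for all $i$ and the sequence $(\delta_i)_i$ is non-decreasing: for positions $i \leq j$ the inequality $|b_j - b_i| \geq |a_j - a_i|$ unfolds to $\delta_j \geq \delta_i$, while $b_i \geq a_i$ is exactly $\delta_i \geq 0$. So the whole statement reduces to producing integer values realizing $\sigma$ with $(\delta_i)_i$ non-negative and non-decreasing.

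Next I would extract from the $\star$ property the two facts that make this possible. The first clause of $\star$ gives $r \leq r'$ for every $r$, i.e. $a_i \leq b_i$, hence $\delta_i \geq 0$. The second clause says that the set $\{i \mid a_i = b_i\}$ is downward closed: if $a_j = b_j$ and $i \leq j$ then $a_i = b_i$. Being a downward-closed subset of $\{1,\dots,m\}$, this set equals $\{1,\dots,k\}$ for some threshold $k$, so that $\delta_i = 0$ for $i \leq k$ and $\delta_i > 0$ for $i > k$. This downward-closure is crucial: it rules out a strictly positive gap being followed by a zero gap, a configuration that no choice of values could ever turn into a non-decreasing sequence.

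For the construction itself, I would merge the two chains into the linear order of levels $L_1 < \dots < L_p$ dictated by $\sigma$ on the multiset $\{a_1,\dots,a_m\} \cup \{b_1,\dots,b_m\}$ (each level being a class of equal values), and assign to the levels geometrically growing integer values: fix $M \geq 2$ and set the value of level $t$ to $V_t = \sum_{1 \leq u < t} M^{u}$, so that consecutive levels are separated by gaps $M, M^2, M^3, \dots$ and the lowest level gets value $0$ (this is exactly what is needed when $\sigma$ places registers at the constant $0$). Placing each $a_i$ and $b_i$ at its level yields $\bar d, \bar d' : R \to \N$. Since all gaps are positive and the equalities of $\sigma$ are preserved by construction, the joint order type is exactly $\sigma$, i.e. $\sigma(\bar d, \bar d') = \sigma$. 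For monotonicity of $\delta$, for a position $i > k$ the value $\delta_i = V_{\ell(b_i)} - V_{\ell(a_i)}$ is a sum of consecutive gaps ending just below the level of $b_i$, hence $M^{\ell(b_i)-1} \leq \delta_i < M^{\ell(b_i)}$, where $\ell(\cdot)$ denotes the level index; since $i < j$ forces $\ell(b_i) < \ell(b_j)$, a dominant-term comparison gives $\delta_i < M^{\ell(b_i)} \leq M^{\ell(b_j)-1} \leq \delta_j$. Together with $\delta_i = 0$ for $i \leq k$ and the downward-closure property, this shows that $(\delta_i)_i$ is non-negative and non-decreasing, hence $\bar d'$ is wider than $\bar d$.

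The main obstacle, and the reason the $\star$ property enters exactly where it does, is precisely this monotonicity requirement on the gaps: the order type $\sigma$ by itself constrains only the relative order of the $2m$ values and says nothing about the magnitudes of the $\delta_i$, so a naive realization of $\sigma$ may well produce a non-monotonic gap sequence and fail to be wider. The two ingredients that repair this are complementary and independent: the geometric choice of level values forces strict growth of $\delta$ on the positive part $\{k+1,\dots,m\}$, while the downward-closure clause of $\star$ guarantees that the zero gaps form an initial segment, so that prepending them keeps the whole sequence non-decreasing. A minor point to handle with care is the constant $0$: the type pins down which registers must equal $0$, and these necessarily sit at the lowest level, set to value $0$; the clause $r \leq r'$ of $\star$ ensures consistency, as a register forced to $0$ in $\bar d'$ is then also forced to $0$ in $\bar d$.
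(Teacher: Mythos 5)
Your proof is correct in substance but takes a genuinely different route from the paper's. The paper starts from an arbitrary realization $(\bar e,\bar e')$ of $\sigma$ in $\Q_+$, defines the set of \emph{shrinking intervals}, and iteratively repairs them: for each shrinking pair $(r,s)$ it applies a non-contracting piecewise-linear automorphism of $\Q_+$ (identity below a threshold $M$, a stretch on $]M;\bar e'(s)]$, a shift above), using the $\star$ property only to obtain the strict inequality $\bar e'(s) > \bar e(s)$ that makes the stretch possible; after finitely many repairs no interval shrinks, and a final multiplication by a common denominator lands the valuations in $\N$. You instead give a one-shot explicit construction: you first observe that, since $\sigma_{\mid R}=\sigma_{\mid R'}$ aligns the positions of the two chains, ``$\bar d'$ wider than $\bar d$'' is \emph{equivalent} to the gap sequence $\delta_i$ being non-negative and non-decreasing; you then read off from $\star$ exactly the two combinatorial facts needed (every $\delta_i\geq 0$ from the first clause, and downward-closedness of the forced-zero positions from the second), and you realize $\sigma$ with geometrically growing level values so that the positive gaps are strictly increasing by a dominant-term comparison. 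Your argument is more elementary — no automorphism surgery and no denominator-clearing step, since the values are integers by construction, and it even yields explicit exponential bounds on the witnesses — while the paper's repair argument stays within the orbit/automorphism toolkit used throughout the section and never needs to merge the two chains into levels. Your reformulation of ``wider'' as monotonicity of the gap sequence is a clean observation the paper does not state.

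One boundary case needs a patch. You always assign the lowest level the value $0$. This is correct precisely when $\sigma$ forces some register to equal $0$ (that zero class is then the minimum level). But if $\sigma$ forces every register to be strictly positive, your valuation places the lowest-level registers at $0$, and since automorphisms of $(\Q_+,\set{<,0})$ fix $0$, the constructed pair then lies in a different orbit, i.e. $\sigma(\bar d,\bar d')\neq\sigma$. The fix is one line — shift all level values up by $M$; a uniform shift of both valuations changes neither the gaps nor the inequalities $\bar d'(r)\geq \bar d(r)$ — but as written the construction fails to realize $\sigma$ in that case. It is worth noting the paper handles the constant by the opposite manoeuvre: when no register is pinned to $0$ it \emph{adds} a fresh register $r_0$ with $\sigma\wedge r_0=0\wedge r_0'=0$, runs the argument over $R\uplus\set{r_0}$, and restricts the resulting valuations back to $R$ at the end.
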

\begin{proof}
  First, assume that there exists some register $r_0 \in R$ such that $\sigma
  \Rightarrow r_0 = 0$ (thus $\sigma \Rightarrow r_0' = 0$, since we assumed that
  $\sigma_{\mid R} = \sigma_{\mid R'}$). If this is not the case, consider
  instead the type $\sigma_0 = \sigma \wedge r_0 = 0 \wedge r'_0 = 0$ over $R
  \uplus \{r_0\}$. Indeed, if $\bar d$ and $\bar d'$ are two
  valuations in $\N$ such that $\sigma(\bar d,\bar d') = \sigma_0$ and $\bar d'$ is wider
  than $\bar d'$, then $\bar d_{\mid R}$ and $\bar d'_{\mid R}$ are two valuations in $\N$
  such that $\sigma(\bar d,\bar d') = \sigma$ and $\bar d'_{\mid R}$ is wider than $\bar d_{\mid R}$.

  Now, for a pair of valuations $(\bar d,\bar d')$, define its shrinking intervals:
  $S(\bar d,\bar d') = \{(r,s) \mid \size{\bar d'(s) - \bar d'(r)} < \size{\bar d(s) - \bar d(r)}\}$, and say
  that $(\bar d,\bar d')$ has $k = \size{S(\bar d,\bar d')}$ shrinking intervals. We need to show
  that given a pair of valuations $\bar e$ and $\bar e'$ such that $\sigma(\bar e,\bar e') =
  \sigma$, if they have $k > 0$ intervals that shrink, we can exhibit a pair of
  valuations $\bar d$ and $\bar d'$ such that $\sigma(\bar d,\bar d') = \sigma$ and which has
  $l < k$ intervals that shrink.

  Thus, let $\bar e$ and $\bar e'$ be two valuations such that $\sigma(\bar e,\bar e') = \sigma$
  and which have $k > 0$ shrinking intervals.
  Then, let $(r,s) \in S(\bar e,\bar e')$; w.l.o.g.\ assume that $\bar e'(s) \geq \bar e'(r)$. As $\bar e$ and
  $\bar e'$ have the same type, we get $\bar e(s) \geq \bar e(r)$. Moreover, as $(r,s) \in S$,
  $\bar e(s) \neq \bar e(r)$, so $\bar e(s) > \bar e(r)$, which implies $\bar e'(s) > \bar e'(r)$. Finally, we
  have that $\bar e'(s) > \bar e(s)$. Indeed, since $\sigma$ has the $\star$ property,
  we have that $\bar e'(s) \geq \bar e(s)$, and
  moreover if we had $\bar e'(s) = \bar e(s)$, we would get that
  $\bar e(r) = \bar e'(r)$ as $\bar e(r) \leq \bar e(s)$, which would mean that $(r,s) \notin
  S(\bar e,\bar e')$.

  Finally, let $M = \max \left(\{\bar e(t) \mid t \in R, \bar e(t) < \bar e'(s)\} \cup \{\bar e'(t)
    \mid t \in R, \bar e'(t) < \bar e'(s)\}  \right)$ be the maximum value seen in $\bar e$ and
  $\bar e'$ which is lower than $\bar e'(s)$. Note that we have $M \geq \bar e(r)$, $M \geq
  \bar e'(r)$ and $M \geq \bar e(s)$.

  Now, let $c = \big(\bar e(s) - \bar e(r)\big) - \big(\bar e'(s) - \bar e'(r)\big)$. As $(r,s) \in
  S$, $c > 0$. Then, consider the automorphism $\mu \in \aut(\Q_+)$ defined as
  $\left\{
    \begin{array}{l}
      x \in [0;M] \mapsto x \\
      x \in ]M;\bar e'(s)] \mapsto M + \frac{\bar e'(s) + c - M}{\bar e'(s) - M}(x - M) \\
      x \in ]\bar e'(s);+\infty[ \mapsto x + c
    \end{array}
  \right.$

It can be checked that for all $x,y \in \Q_+$, $\size{\mu(y) - \mu(x)} \geq \size{y -
  x}$, so $\size{S(\mu(\bar e),\mu(\bar e'))} \leq \size{S(\bar e,\bar e')}$. Now,
we have that $(r,s) \notin S(\mu(\bar e),\mu(\bar e'))$. Indeed,
$\mu(\bar e(s)) = \bar e(s)$, $\mu(\bar e(r)) = \bar e(r)$ and $\mu(\bar e'(r)) = \bar e'(r)$ since
$\bar e(s),\bar e(r),\bar e'(r) \in [0;M]$. Finally, $\mu(\bar e'(s)) = \bar e'(s) + c$.
Overall, we get that $\big(\mu(\bar e(s)) - \mu(\bar e(r))\big) - \big(\mu(\bar e'(s)) -
\mu(\bar e'(r))\big) = \big(\bar e(s) - \bar e(r)\big) - \big(\bar e'(s) + c -
\bar e'(r)\big) = c - c = 0$, which means that $\size{\bar e'(s) - \bar e'(r)} =
\size{\bar e(s) - \bar e(r)}$, so $(r,s) \notin S(\mu(\bar e),\mu(\bar e'))$:
$\size{S(\mu(\bar e),\mu(\bar e'))} < \size{S(\bar e,\bar e')}$.

Since $\mu \in \aut(\Q_+)$, we get that $(\mu(\bar e),\mu(\bar e'))$ is such that
$\sigma(\mu(\bar e),\mu(\bar e')) = \sigma$, so we exhibited a pair of valuations
which has $l < k$ intervals that shrink.

By iteratively applying this process to $(\bar e,\bar e')$ until no shrinking intervals remain, we
get a pair $(\bar d,\bar d')$ which is such that $\sigma(\bar d,\bar d') = \sigma$ and $\bar d'$ is wider
than $\bar d$. Now, by
multiplying all data values in $\bar d$ and $\bar d'$ by the product of their denominators, we
get two valuations $\bar f$ and $\bar f'$ which are in $\N$ such that $\sigma(\bar f,\bar f') =
\sigma$ and $\bar f'$ is wider than $\bar f$.
\end{proof}

We finally need the following technical result:
\begin{prop}%
  \label{prop:widerMorphN}
  Let $\bar d$ and $\bar d'$ be two configurations in $\N$ such that $\bar d'$ is wider than
  $\bar d$. Then, there exists some automorphism $\mu \in \aut(\Q_+)$ such that $\bar d' =
  \mu(\bar d)$ and $\mu(\N) \subseteq \N$.
\end{prop}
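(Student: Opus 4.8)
The plan is to build $\mu$ \emph{explicitly} as a continuous, strictly increasing, piecewise-affine bijection of $\Q_+$ all of whose breakpoints are integers and all of whose affine pieces have integer slope and send integers to integers. States play no role, so I work directly with the register valuations $\bar d, \bar d' : R \rightarrow \N$. First I fix control points: let $0 = c_0 < c_1 < \dots < c_k$ enumerate $\{0\} \cup \{\bar d(r) \mid r \in R\}$, and let $c'_j$ be the value on the $\bar d'$ side matching $c_j$ (so $c'_0 = 0$, and $c'_j = \bar d'(r)$ whenever $c_j = \bar d(r)$). Since $\tau(\bar d) = \tau(\bar d')$ is a precondition of the notion ``wider'', the correspondence $c_j \mapsto c'_j$ is well-defined, strictly increasing, and fixes $0$; and all $c_j, c'_j$ lie in $\N$. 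Writing $g_j = c_j - c_{j-1}$ and $h_j = c'_j - c'_{j-1}$, the two clauses of ``$\bar d'$ wider than $\bar d$'' give $h_j \geq g_j \geq 1$ for every $j$ (gap expansion) and $c'_k \geq c_k$ (the endpoint moves right; this also follows by summing the gap inequalities).

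Next I define $\mu$ piecewise. Beyond the last control point I set $\mu(x) = x + (c'_k - c_k)$, an integer translation since $c'_k - c_k \in \N$. On each interval $[c_{j-1}, c_j]$ I use the integer lattice points $c_{j-1}, c_{j-1}+1, \dots, c_j$ as breakpoints, sending $c_{j-1}+i \mapsto c'_{j-1}+i$ for $0 \leq i \leq g_j - 1$ and $c_j \mapsto c'_j$, then interpolating affinely between consecutive breakpoints. This assignment is strictly increasing exactly because $h_j \geq g_j$: the interior unit steps have slope $1$, and the final step from $c_j - 1$ to $c_j$ has integer slope $h_j - g_j + 1 \geq 1$. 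The pieces agree at the shared control points $c_j \mapsto c'_j$, so $\mu$ is well-defined and globally strictly increasing.

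It then remains to verify the three requirements. Since $\mu$ is a continuous, strictly increasing, piecewise-affine map with integer coefficients running from $\mu(0) = 0$ to $+\infty$, it restricts to an order-preserving bijection of $\Q_+$ fixing $0$ (its inverse is again piecewise-affine with rational coefficients), hence $\mu \in \aut(\Q_+)$. By construction $\mu(\bar d(r)) = \bar d'(r)$ for every $r \in R$, i.e. $\mu(\bar d) = \bar d'$. Finally, for $\mu(\N) \subseteq \N$: any integer either exceeds $c_k$, where $\mu$ is an integer translation, or lies in some $[c_{j-1}, c_j]$, where it coincides with one of the integer breakpoints (there is no integer strictly between two consecutive breakpoints, which are consecutive integers) and is therefore sent to an integer. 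I would also dispatch the degenerate cases quickly, namely $0$ already being a register value (then $c_0 = 0$ is genuinely the least value, with $c'_0 = 0$ forced by type equality) and $R$ consisting of a single value.

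The main obstacle, and the single place where the ``wider'' hypothesis is truly needed, is the condition $\mu(\N) \subseteq \N$. The naive choice of mapping $[c_{j-1}, c_j]$ onto $[c'_{j-1}, c'_j]$ by one affine map fixes the control points but has slope $h_j / g_j$, which is generally non-integer and throws interior integers off the lattice. The gap-expansion inequality $h_j \geq g_j$ is precisely what provides enough room in the image interval to route all $g_j + 1$ integer lattice points of the source to distinct integer targets among the $h_j + 1$ available integers, while keeping the map increasing and integer-sloped. Once this routing is set up correctly, the remaining verifications are routine.
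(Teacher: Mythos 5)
Your proof is correct and takes essentially the same approach as the paper: you decompose $\Q_+$ at the sorted register values together with $0$, map each gap by slope-$1$ unit steps followed by a single integer-slope affine stretch on the last unit subinterval --- exactly the construction the paper factors out as Lemma~\ref{lem:noshrinkbij} --- and finish with an integer translation beyond the last control point. The only difference is presentational: you inline the per-interval bijection via lattice breakpoints where the paper invokes the lemma and glues the resulting maps.
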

\begin{proof}
  Let $\{a_0, \dots, a_k\} = \bar d(R) \cup \{0\}$ and $\{a'_0, \dots, a'_k\} = \bar d'(R)
  \cup \{0\}$, with $0 = a_0 < \dots <
  a_k$ and $a'_0 < \dots < a'_k$. Note that since $\tau(\bar d) = \tau(\bar d')$, we indeed have
  that $\size{\bar d(R) \cup \{0\}} = \size{\bar d'(R) \cup \{0\}}$; moreover, since $\bar d'$
  is wider than $\bar d$, we have that for all $0 \leq i < k$, $a'_{i+1} - a'_i >
  a_{i+1} - a_i$. Consider the following easy lemma (the proof is in Appendix~\ref{app:prooflemnoshrinkbij} for completeness):
\begin{restatable}{lem}{lemnoshrinkbij}\label{lem:noshrinkbij}
  Let $a,b,c,d \in \N$ be such that $a < b$, $c < d$ and $d-c \geq b-a$. Then,
  there exists a function $f: [a;b] \rightarrow [c;d]$ which is increasing and
  bijective, and such that $f([a;b] \cap \N) \subseteq \N$.
\end{restatable}
Then, apply it to each interval $[a_i;a_{i+1}]$ to get a family of increasing
and bijective functions $(\mu_i)_{0 \leq i < k}$ which are such that
$\mu_i([a_i;a_{i+1}]) = [a'_i;a'_{i+1}]$ and $\mu([a_i;a_{i+1}] \cap \N)
\subseteq \N$. Then, let $\mu_k: x \in [a_k;+\infty[ \mapsto x + a'_k - a_k$.
We get that $\mu = \cup_{0 \leq i \leq k} \mu_i \in \aut(\Q_+)$ is such that $\bar d'
= \mu(\bar d)$ and satisfies $\mu(\N) \subseteq \N$.
\end{proof}
We are now ready to prove Proposition~\ref{prop:Qwithstar}:
\begin{proof}[Proof of Proposition~\ref{prop:Qwithstar}]
Let $T$ be an \nrt and assume that $B \xrightarrow[\pi]{u} B'$ following
some sequence of transitions $\pi$,
where $\tau(B)=\tau(B')$, $u\in \mathbb{Q}_+^*$ and
$B$ and $B'$ have the property $\star$.

Let $\sigma = \sigma(B,B')$. By Proposition~\ref{prop:starWider}, we know that
there exist two configurations $C$ and $C'$ in $\N$ such that $\sigma(C,C') = \sigma$
and $C'$ is wider than $C$. Let $\nu \in \aut(\Q_+)$ be some automorphism such that
$\nu(B) = C$ and $\nu(B') = C'$ (such an automorphism exists since $\sigma(B,B') =
\sigma(C,C')$). Then, we have that $C \xrightarrow[\pi]{\nu(u)} C'$. By multiplying
all involved data values by their common denominator (cf
Propostion~\ref{prop:Q_N_fin}), we get two configurations $D$ and $D'$, along
with some data word $v$, all belonging to $\N$, such that $D \xrightarrow[\pi]{v}
D'$, and such that $D'$ is wider than $D$. By
Proposition~\ref{prop:widerMorphN}, there exists an automorphism $\mu \in \aut(\Q_+)$
such that $\mu(D) = D'$ and $\mu(\N) \subseteq \N$. Thus, by letting $x = v
\mu(v) \mu^2(v) \dots$, we get that $D \xrightarrow[\pi]{v} \mu(D)
\xrightarrow[\pi]{\mu(v)} \mu^2(D) \dots$ is a run over $x \in \N^\omega$ over
the sequence of transitions $\pi^\omega$, and $\tau(D) = \tau(B)$.
\end{proof}
A last important property is the following:
\begin{prop}\label{prop:glueN}
  Let $A$ be a $\nra$ over $\Q_+$, and assume that $C \xrightarrow{u} C'$ and that $C'' \xrightarrow{v}$, where $\tau(C') = \tau(C'')$, $u \in \Q_+^*$ and $v \in \N^\omega$. Then there exist $w \in \N^*$, $x \in \N^\omega$ and two configurations $D, D'$ whose valuations take their values in $\N$ such that $D \xrightarrow{w} D' \xrightarrow{x}$, $\tau(C)=\tau(D)$ and $\tau(C')=\tau(D')$.
\end{prop}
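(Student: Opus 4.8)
The plan is to carry each of the two runs into $\N$ separately and then glue them at a common intermediate configuration, using the wideness machinery of Propositions~\ref{prop:starWider} and~\ref{prop:widerMorphN} to guarantee that the infinite part stays over natural inputs. Throughout I assume, as is the case when the infinite run is produced by Proposition~\ref{prop:Qwithstar}, that the run $C'' \xrightarrow{v}$ lies entirely in $\N$.

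First I would bring the finite run $C \xrightarrow{u} C'$ into $\N$ while simultaneously forcing its endpoint to be \emph{wide}. Let $X = C(R) \cup C'(R) \cup \mathsf{data}(u)$ be the finite set of data values occurring along this run and consider the scaling automorphism $\lambda : d \mapsto Kd$, where $K$ is a large multiple of the product of the denominators of the elements of $X$. Clearing denominators guarantees that $D := \lambda(C)$, $D' := \lambda(C')$ and $w := \lambda(u)$ all take values in $\N$, while $\lambda$ being an automorphism gives a valid run $D \xrightarrow{w} D'$ with $\tau(D) = \tau(C)$ and $\tau(D') = \tau(C')$. It remains to see that taking $K$ large makes $D'$ wider than $C''$: since $\tau(C') = \tau(C'')$, for every pair of registers we have $C'(r) = 0 \Leftrightarrow C''(r) = 0$ and $C'(s) = C'(r) \Leftrightarrow C''(s) = C''(r)$, so each of the finitely many constraints $K\,C'(r) \geq C''(r)$ and $K\,\size{C'(s) - C'(r)} \geq \size{C''(s) - C''(r)}$ is either trivially satisfied (both sides vanish) or holds for $K$ large enough, the left-hand quantity being a positive multiple of $K$.

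Next I would transport the infinite run so that it starts exactly at $D'$. As $D'$ and $C''$ both lie in $\N$, share the type $\tau(C')$, and $D'$ is wider than $C''$, Proposition~\ref{prop:widerMorphN} provides an automorphism $\mu \in \aut(\Q_+)$ with $\mu(C'') = D'$ and $\mu(\N) \subseteq \N$. Applying $\mu$ to $C'' \xrightarrow{v}$ yields a valid run $D' = \mu(C'') \xrightarrow{\mu(v)}$, and since $v \in \N^\omega$ and $\mu(\N) \subseteq \N$ the word $x := \mu(v)$ lies in $\N^\omega$ and every configuration along this run stays in $\N$. Concatenating at $D'$ produces $D \xrightarrow{w} D' \xrightarrow{x}$ with $w \in \N^*$, $x \in \N^\omega$, valuations of $D$ and $D'$ in $\N$, and $\tau(D) = \tau(C)$, $\tau(D') = \tau(C')$, which is exactly what is claimed.

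The crux of the argument, and the only delicate step, is keeping the infinite continuation over $\N$: a generic automorphism sending $C''$ to $D'$ would in general map the natural-valued input $v$ to rationals, so one really needs the $\N$-preserving automorphism of Proposition~\ref{prop:widerMorphN}, whose hypothesis is precisely that $D'$ be wider than $C''$. The point that makes this achievable is that the finite run involves only finitely many data values and can therefore be freely scaled up without leaving $\N$, which suffices to overshoot the fixed configuration $C''$ in every coordinate and in every gap.
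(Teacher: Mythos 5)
Your gluing strategy is sound in outline, but it rests on an assumption the proposition does not grant you: that $C''$ takes its values in $\N$. The statement only requires $v \in \N^\omega$; the configuration $C''$ itself may be genuinely rational (because of guessing, a register can hold, say, $1/2$ while the run still reads a natural input), and nothing forces it into $\N$. This matters because the one tool you invoke to move the infinite run, Proposition~\ref{prop:widerMorphN}, requires \emph{both} valuations to lie in $\N$ --- its proof goes through Lemma~\ref{lem:noshrinkbij}, whose integer-preserving interpolation needs integer endpoints on both sides. Your justification (``as is the case when the infinite run is produced by Proposition~\ref{prop:Qwithstar}'') appeals to the \emph{proof} of that proposition rather than its statement, which only asserts $x \in \N^\omega$ and $\tau(D)=\tau(B)$; and in any case Proposition~\ref{prop:glueN} has to be established as stated, for arbitrary $C''$ over $\Q_+$. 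Fortunately the gap closes with the very trick you already use: first apply a scaling $d \mapsto K''d$, where $K''$ clears the denominators of $C''(R)$, to the run $C'' \xrightarrow{v}$; this preserves the type, keeps the input in $\N^\omega$ (integer scalings map $\N$ into $\N$), and puts $C''$ in $\N^R$, after which your argument applies verbatim. Your wideness computation itself is correct: since $\tau(C')=\tau(C'')$ records both equalities between registers and equality with the constant $0$, every constraint $K\,C'(r)\geq C''(r)$ and $K\size{C'(s)-C'(r)}\geq\size{C''(s)-C''(r)}$ either degenerates to $0\geq 0$ or is met for $K$ large.

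It is worth comparing with the paper's proof, which avoids the issue entirely by gluing in the opposite order. Since $\tau(C')=\tau(C'')$ means $C'$ and $C''$ lie in the same orbit, there is an \emph{arbitrary} automorphism $\mu$ with $\mu(C')=C''$; applying it to the finite run yields the glued run $\mu(C) \xrightarrow{\mu(u)} C'' \xrightarrow{v}$, and then a \emph{single} denominator-clearing scaling $\lambda$ from Proposition~\ref{prop:Q_N_fin}, applied to the whole run at once, lands the finite part (and both endpoint configurations, rational or not) in $\N$ while $\lambda(\N)\subseteq\N$ keeps the infinite natural input natural for free. That route needs neither the wideness machinery nor any choice of a ``large enough'' constant: the automorphism used for alignment never has to preserve $\N$, because the $\N$-preservation is delegated entirely to the final scaling. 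Your patched proof is correct, but it deploys Proposition~\ref{prop:widerMorphN} to do a job that a single multiplication accomplishes.
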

\begin{proof}
  Since $\tau(C') = \tau(C'')$, there exists some $\mu \in \aut(\Q_+)$ such that $\mu(C') = C''$, thus $\mu(C) \xrightarrow{\mu(u)} C''$. Now, by applying Proposition~\ref{prop:Q_N_fin} to $X = (\mu(C))(R) \cup C''(R) \cup \mathsf{data}(\mu(u))$, we get that $\lambda(\mu(C)) \xrightarrow{\lambda(\mu(u))} \lambda(C'') \xrightarrow{\lambda(v)}$ satisfies the required property.
\end{proof}

\subsection{Emptiness of automata}

\begin{prop}[Non-emptiness]%
  \label{prop:emptinessN}
  Let $A$ be an \nra over $\N^\omega$. The following are equivalent:
  \begin{enumerate}
    \item $L(A)$ is non-empty
    \item there exist two runs whose input words
    belong to $\mathbb{Q}_+^*$, which are as follows:
    \begin{enumerate}
    \item\label{itm:reachable} $C_0 \xrightarrow{u} C$
    \item\label{itm:star} $D\xrightarrow{v} D'$
   with $\tau(D)=\tau(D')=\tau(C)$, $D$ is a final configuration, and
   $\sigma = \sigma(D,D')$ satisfies $\star$ property.
    \end{enumerate}
  \end{enumerate}
  \end{prop}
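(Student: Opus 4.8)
The plan is to prove both implications of the equivalence. The reverse direction $(2) \Rightarrow (1)$ is essentially an assembly of the propositions already established in this section, whereas the forward direction $(1) \Rightarrow (2)$ requires a genuine combinatorial argument leveraging the well-foundedness of $\N$.

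For $(2) \Rightarrow (1)$, suppose we are given the two witnessing runs. First I would feed the loop $D \xrightarrow{v} D'$ --- which satisfies $\tau(D) = \tau(D')$ and whose relational type $\sigma(D,D')$ has the $\star$ property --- into Proposition~\ref{prop:Qwithstar}. This yields an infinite run $E \xrightarrow{x}$ with $x \in \N^\omega$, $\tau(E) = \tau(D)$, and $E$ valued in $\N$, obtained by iterating the underlying transition sequence $\pi^\omega$. Since every block boundary of this run carries the state of $D$, which is final, the run is final (it visits an accepting state infinitely often). I would then glue the finite reachability run $C_0 \xrightarrow{u} C$ onto this infinite run using Proposition~\ref{prop:glueN}, which applies because $\tau(C) = \tau(D) = \tau(E)$; this produces an initial segment $w \in \N^*$ prolonging into the infinite $\N$-run. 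The one point requiring care is that the glued run must start \emph{exactly} at $C_0$, not merely at a configuration of the same type: inspecting the constructions of Propositions~\ref{prop:glueN} and~\ref{prop:Q_N_fin}, the starting configuration is the image of $C_0$ under automorphisms of $\Q_+$ and a scaling map, all of which fix the constant $0$; since over $\N$ every register of $C_0$ is initialised to $0$, the initial configuration is left untouched, so the run is genuinely initial. Being both initial and final, it witnesses $L(A) \neq \varnothing$.

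For $(1) \Rightarrow (2)$, take an accepting run $\rho = C_0 \xrightarrow{d_1} C_1 \xrightarrow{d_2} \cdots$ over some $y \in \N^\omega \subseteq \Q_+^\omega$; it visits final configurations infinitely often. Since $\Q_+$ is oligomorphic there are finitely many $\tau$-types, so by the pigeonhole principle some type $t$ is realised by infinitely many final configurations $E_0, E_1, \dots$ of $\rho$. I would then colour each pair $\{a,b\}$ with $a<b$ by the relational type $\sigma(E_a, E_b)$ (finitely many colours, again by oligomorphicity) and apply the infinite Ramsey theorem to extract an infinite monochromatic subsequence $E_{a_0}, E_{a_1}, \dots$ with common pairwise type $\sigma^*$. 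The crux is to show $\sigma^*$ satisfies $\star$, which I would argue by contradiction: if the first clause of $\star$ failed, some register $r$ would satisfy $\sigma^* \Rightarrow r > r'$, forcing $E_{a_0}(r) > E_{a_1}(r) > \cdots$, an impossible infinite descending chain in $\N$; if the second clause failed, there would be registers $r,s$ with $s$ constant along the subsequence (value $V$), $r \leq s$ in the common type $t$, yet $\sigma^* \Rightarrow r < r'$, forcing $E_{a_0}(r) < E_{a_1}(r) < \cdots$ bounded above by $V$, an impossible bounded ascending chain in $\N$. Hence $\sigma^*$ has the $\star$ property, and setting $C = D = E_{a_0}$ (reachable and final) and $D' = E_{a_1}$, with $u,v$ the corresponding input factors, gives exactly the two required runs.

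The main obstacle is the forward direction, specifically the extraction of a pair of configurations whose relational type satisfies $\star$: this is where the non-oligomorphicity of $\N$ really bites, since we cannot simply appeal to a repeated configuration orbit as in the oligomorphic case, and must instead combine Ramsey's theorem with the two well-foundedness properties of $\N$ (no infinite descending chain, no infinite bounded ascending chain) that the $\star$ property precisely encodes. In the reverse direction the only delicate point, already noted above, is anchoring the assembled run at the genuine initial configuration rather than merely one of the correct type.
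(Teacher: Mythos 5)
Your proof is correct and takes essentially the same route as the paper: the forward direction via pigeonhole on the types of final configurations, Ramsey's theorem on pairwise types, and the two well-foundedness arguments (no infinite descending chain, no bounded infinite increasing chain) showing $\sigma$ satisfies $\star$; the reverse direction by assembling Propositions~\ref{prop:Qwithstar} and~\ref{prop:glueN}. Your extra care in anchoring the glued run at the genuine initial configuration $C_0$ (automorphisms of $\Q_+$ and the scaling map fix the all-zero valuation) is a detail the paper's one-line proof of $(2)\Rightarrow(1)$ leaves implicit, and you handle it correctly.
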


\begin{proof}
Assume~(2) holds. The result easily follows from Propositions~\ref{prop:fromQtoN},
~\ref{prop:Qwithstar}
and~\ref{prop:glueN}.

Assume now that $L(A)$ is non-empty.
Let $\rho =
C_0 \xrightarrow{d_0} C_1 \xrightarrow{d_1} C_2 \dots$ be an accepting run over input $x = d_0 d_1 \dots$ in $A$ (where $C_0 =
(q_0,\overline{0})$). For each $i \geq 0$, let $\nu_i = \tau(C_i)$. As $\rho$ is
acccepting and there are only finitely many types, we get that there exists some
type $\nu$ such that the state is accepting and $(q_i,\nu_i) = (q,\nu)$ for infinitely
many $i \in \N$. Let $(C_j)_{j \in \N}$ be an infinite subsequence of
$C_i$ such that for all $j$, $\tau(C_j) = \nu$. Now, colour the
set of unordered pairs as follows:
$c\left(\left\{\tau(C_j),\tau(C_k)\right\}\right) = \sigma(C_j,C_k)$
(where we assume w.l.o.g. that $j < k$).
By Ramsey's theorem, there is an infinite subset such
that all pairs have the same colour $\sigma$. Let $(C_k)_{k \in \N}$ be an infinite
subsequence such that for all $j < k$, $\sigma(C_j,C_k) = \sigma$. Now,
assume that $\sigma$ breaks the $\star$ property. There are two cases:
\begin{itemize}
\item There exists some $r$ such that $\sigma \Rightarrow r > r'$. Then, it means
  that for all $j < k$, $C_j(r) > C_k(r)$. In particular, this means that
  $C_0(r) > C_1(r) > \dots > C_n(r) \dots$, which yields an infinite descending
  chain in $\N$, and leads to a contradiction.
\item There exists some $s$ such that $\sigma \Rightarrow s = s'$ and some $r$ which
  satisfies $\sigma \Rightarrow r < s$ and $\sigma \not \Rightarrow r = r'$. If $\sigma
  \Rightarrow r > r'$, we are back to the first case. Otherwise, it means $\sigma
  \Rightarrow r < r'$. Then, on the one hand, $C_0(r) < C_1(r) < \dots < C_n(r) <
  \dots$. On the other hand, $C_0(s) = C_1(s) = \dots = C_n(s) = \dots$. But we also
  have that for all $k \in \N, C_k(r) < C_k(s) = C_0(s)$. Overall, we get an
  infinite increasing chain which is bounded from above by $C_0(s)$, which
  again leads to a contradiction.
\end{itemize}
Thus, $\sigma$ satisfies the $\star$ property. So, this is in particular the
case for some pair of configurations $C = D = C_k$ and $D' = C_{k'}$ for some $k < k'$ taken
from the last extracted subsequence. Such configurations are such that (recall that the $C_k$ are configurations of an
  accepting run over some input, which is in particular initial):
\begin{enumerate}[(a)]
\item $C_0 \xrightarrow{u} C$
\item $D \xrightarrow{v} D'$.
\end{enumerate}
Moreover, $\tau(D) = \tau(D') = \tau(C)$ and $D$
  is final, by definition of $(C_k)_{k \in \N}$.
\end{proof}
\begin{cor}\label{cor:emptinessN}
  Emptiness for \nra over $\N^\omega$ is decidable in $\PSpace$.
\end{cor}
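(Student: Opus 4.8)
The plan is to turn the structural characterisation of Proposition~\ref{prop:emptinessN} into a nondeterministic polynomial-space search, and then invoke Savitch's theorem. First recall that $\Q_+$ is oligomorphic and polynomially decidable: as for $(\Q,\set{<})$, it admits quantifier elimination, so every $k$-type is described by an \fo-formula of size polynomial in $k$ (it merely records the linear order between the registers together with which ones equal $0$), the number $\rn(k)$ of orbits is therefore at most exponential in $k$, and satisfiability of \fo over $\Q_+$ is in \PSpace. By Proposition~\ref{prop:emptinessN}, $L(A)$ is non-empty iff there is a type $\nu$ such that (a) some configuration of type $\nu$ is reachable from $C_0$, and (b) there is a run $D \xrightarrow{v} D'$ with $\tau(D)=\tau(D')=\nu$, $D$ final, and $\sigma(D,D')$ satisfying the $\star$ property. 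Accordingly, the algorithm first guesses $\nu$ (a polynomial-size formula) and then verifies (a) and (b) independently.

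For (a), I would simulate $A$ at the level of types only: the search state is a pair $(q,\mu)$ consisting of a control state and a $k$-type $\mu$, initialised to $\tau(C_0)$, and at each step we guess a transition together with a successor type, checking its validity by an \fo-satisfiability query of the form $\exists\, \indata\, \exists\, R'.\ \mu(R) \wedge \psi(R,\indata,R') \wedge \mu'(R')$, which is decidable in \PSpace. Since there are at most $|Q|\cdot\rn(k)$ such search states, a counter bounded by this quantity---representable in polynomial space because $\log \rn(k)$ is polynomial---detects whether a configuration of type $\nu$ is reachable. For (b), the key is to track the relation $\sigma$ between the \emph{start} configuration $D$ and the \emph{current} configuration along the run, by carrying a $2k$-type $\sigma_t$ over $R \uplus R'$ in which the $R'$-part (a frozen copy of $D$) never changes, while the $R$-part evolves with the simulated transitions. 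We initialise $\sigma_t$ to the diagonal type asserting $r = r'$ for all $r$ with both projections equal to $\nu$, and start in a final control state; each transition updates the $R$-part via an \fo-satisfiability query as above; the target condition is reached when the current projection $\sigma_{t\mid R}$ equals $\nu$ again and $\sigma_t$ (which is then exactly $\sigma(D,D')$) satisfies $\star$---a condition that can be read directly off the quantifier-free description of $\sigma_t$. Reachability in this graph of $|Q|\cdot\rn(2k)$ states is again handled by a polynomial-size counter. Both (a) and (b) are thus \NPSpace procedures, and \NPSpace $=$ \PSpace by Savitch's theorem.

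The delicate point is not the complexity bookkeeping but the faithful representation of $\sigma(D,D')$ by the frozen-register technique: because the computation manipulates only types and never concrete data values, its correctness rests entirely on oligomorphicity---any two configurations with the same $k$-type are related by an automorphism, and any two pairs with the same $2k$-type by a single automorphism---so that a path found at the level of types lifts to a genuine run over $\Q_+$ witnessing conditions (a) and (b), and conversely. Granting this, the $\star$ test and the two reachability searches are purely syntactic on \fo-formulas, and Proposition~\ref{prop:emptinessN} guarantees that their joint success is equivalent to $L(A)\neq\varnothing$.
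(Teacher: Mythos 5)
Your proposal is correct and follows essentially the same route as the paper's proof: guess the repeating type, symbolically simulate runs of $A$ over $\Q$-types (relying on $\Q_+$ being oligomorphic and polynomially decidable) to check reachability, and for the loop part track the evolving pairwise $2k$-type over $R \uplus R'$ with a frozen copy of $D$, reading the $\star$ property directly off that type, with polynomially bounded counters and \NPSpace $=$ \PSpace closing the argument. The only cosmetic difference is that the paper guesses the target pairwise type $\sigma$ upfront whereas you let it emerge from the diagonal initialisation and test $\star$ on arrival, which changes nothing of substance.
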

\begin{proof}
  The algorithm is similar to the one for deciding non-emptiness for \nra over
  oligomorphic domains. Indeed, the sought witness lies in $\Q_+$, which is oligomorphic; it suffices to additionally
  check that the pairwise type of $D$ and $D'$ satisfies the star property.
  Thus, the algorithm initially guesses $\tau(C)$ and $\sigma$. Then, checking
  that there indeed exists a configuration whose type is $\tau(C)$  and that can
  be reached from $C_0$ (item~\ref{itm:reachable} of
  Propostion~\ref{prop:emptinessN}) can be done in the same way as for
  Theorem~\ref{thm:decoligo}, by simulating symbolically (i.e.\ over $\Q$-types)
  a run of the automaton. Now, for item~\ref{itm:star}, the algorithm
  again symbolically simulates a run from $D$, by keeping track of the type of
  the current configuration $\tau(D')$, and additionally of the pairwise type
  $\sigma(D,D')$. Since $\sigma(D,D')$ is a $\Q$-type over $2\size{R}$
  registers, it can be stored in polynomial space; moreover, given a transition
  test $\phi$, it can also be updated in polynomial space.
\end{proof}

\subsection{Functionality}

Following the study of the relationships between $\mathbb{N}$
and $\mathbb{Q}$, we are now ready to provide a characterization
of non functionality over $\mathbb{N}$. Intuitively, it amounts to
finding two pairs of runs whose inputs are in $\mathbb{Q}$: first,
a prefix witnessing a mismatch, and second, an accepting loop
satisfying the $\star$ property to ensure its iterability over
 $\mathbb{N}$.

\begin{prop}[Functionality]
  Let $R\subseteq \N^\omega \times \N^\omega$ be given by an \nrt $T$. The following are equivalent:
  \begin{enumerate}
    \item\label{itm:RnotFun} $R$ is not functional
    \item\label{itm:witnessNotFun} there exist two pairs of runs whose input words
    belong to $\mathbb{Q}_+^*$, which are as follows:
    \begin{enumerate}
    \item\label{itm:funMismatch} $C_0 \xrightarrow{u|u_1} C_1$
    and $C_0 \xrightarrow{u|u_2} C_2$ with $u_1\mismatch u_2$,
    \item\label{itm:funIterable} $D_1\xrightarrow{v} D'_1$
   and $D_2\xrightarrow{v} D'_2$ with $\tau(D_1 \uplus D_2)=\tau(D'_1 \uplus
   D_2)=\tau(C_1 \uplus C_2)$, both runs visit a final state of $T$, and
   $\sigma_i = \tau((D_1 \uplus D_2) \uplus (D'_1 \uplus D'_2))$ satisfies property $\star$.
    \end{enumerate}
  \end{enumerate}
  \end{prop}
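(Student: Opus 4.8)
The plan is to mirror the structure of Proposition~\ref{prop:func} (functionality in the oligomorphic case), but to replace the ``pump a configuration loop'' step, which is no longer valid over $\N$, by the $\star$-based iteration machinery developed for emptiness over $\N$ (Propositions~\ref{prop:Qwithstar} and~\ref{prop:glueN}). Throughout I would work in the product transducer $T \times T$, which reads a single input word and feeds it synchronously to two copies of $T$; its register set is $R_1 \uplus R_2$ (two disjoint copies of $R$), so a configuration of $T \times T$ is a pair of configurations of $T$, and its $\Q$-types are exactly the $\tau(D_1 \uplus D_2)$ and $\sigma(\cdot,\cdot)$ appearing in the statement. Non-functionality of $R$ is then precisely the existence of an input $x \in \N^\omega$ carrying an accepting run of $T \times T$ (\ie both components accepting) whose two output components mismatch.

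For the direction $(\ref{itm:witnessNotFun}) \Rightarrow (\ref{itm:RnotFun})$, I would first turn the loop $(D_1,D_2) \xrightarrow{v} (D_1',D_2')$ into an infinite $\N$-run. Since $\tau(D_1 \uplus D_2) = \tau(D_1' \uplus D_2')$ and the pairwise type $\sigma$ satisfies property $\star$, Proposition~\ref{prop:Qwithstar} applied to $T \times T$ yields an infinite run over the transition sequence $\pi^\omega$ with input in $\N^\omega$, starting from a configuration of type $\tau(D_1 \uplus D_2)$; because each traversal of the loop visits a final state in \emph{each} component, both components of this infinite run are accepting. It then remains to prepend the mismatching prefix: from $\tau(C_1 \uplus C_2) = \tau(D_1 \uplus D_2)$ and the prefix run $(C_0,C_0) \xrightarrow{u \mid (u_1,u_2)} (C_1,C_2)$ over $\Q_+$, Proposition~\ref{prop:glueN} (applied to $T \times T$) glues the prefix to the infinite iterated loop inside $\N$. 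The gluing only applies automorphisms of $\Q_+$ to the prefix, and since an automorphism acts as a bijection on data values it preserves $u_1 \mismatch u_2$; hence the resulting $\N$-input carries two accepting runs of $T$ with mismatching outputs, witnessing non-functionality.

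For $(\ref{itm:RnotFun}) \Rightarrow (\ref{itm:witnessNotFun})$, I would start from two accepting runs of $T$ over a common $x \in \N^\omega$ whose outputs satisfy $y_1 \mismatch y_2$, and read them as a single accepting run $\rho$ of $T \times T$ over $x$, whose output components mismatch at some finite position. I would then run the Ramsey argument of Proposition~\ref{prop:emptinessN}, but on the product and on the tail of $\rho$ lying \emph{after} that mismatch position: among the finitely many combined types $\tau(\cdot \uplus \cdot)$, some type $\nu$ occurs at infinitely many positions, and, thinning this subsequence so that both components visit a final state between consecutive kept positions (possible as both components are B\"uchi-accepting), Ramsey's theorem on the pairwise types $\sigma(\cdot,\cdot)$ extracts an infinite monochromatic sub-subsequence. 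Exactly as in Proposition~\ref{prop:emptinessN}, the common colour $\sigma$ must satisfy property $\star$: otherwise some register of the $2k$-tuple $R_1 \uplus R_2$ would form, along the actual $\N$-run, an infinite descending chain or an increasing chain bounded by a fixed register value, neither of which exists in $\N$. Picking two positions $j<k$ of this sub-subsequence gives the loop $(D_1,D_2) \xrightarrow{v} (D_1',D_2')$ with $v \in \N^* \subseteq \Q_+^*$, $\tau(D_1 \uplus D_2) = \tau(D_1' \uplus D_2') = \nu$, both components final inside the loop, and $\sigma$ satisfying $\star$. Finally I would take the mismatch witness to be the \emph{whole} run up to position $j$, setting $(C_1,C_2) = (D_1,D_2)$; this forces $\tau(C_1 \uplus C_2) = \tau(D_1 \uplus D_2)$, and since the mismatch already occurred before position $j$, the produced outputs $u_1,u_2$ still satisfy $u_1 \mismatch u_2$ by monotonicity of mismatch under extension.

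The main obstacle is the direction $(\ref{itm:RnotFun}) \Rightarrow (\ref{itm:witnessNotFun})$: one must simultaneously (i) obtain a repeated combined type and a pairwise type satisfying $\star$ via Ramsey, (ii) guarantee that \emph{both} components, not just one, visit a final state within the extracted loop, and (iii) align the start of the loop with the mismatch witness so that the rigid type equality $\tau(C_1 \uplus C_2) = \tau(D_1 \uplus D_2)$ demanded by the statement holds. Points (i) and (iii) are handled by placing the mismatch witness at the loop's starting configuration, exactly as $C = D = C_k$ is chosen in Proposition~\ref{prop:emptinessN}; point (ii) is handled by the extra thinning step, and the soundness of $\star$ rests, as before, on the fact that the extracted configurations lie on an \emph{actual} $\N$-run, so that a violation of $\star$ would produce a forbidden chain in $\N$.
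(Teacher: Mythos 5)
Your proposal is correct. The direction $(\ref{itm:witnessNotFun}) \Rightarrow (\ref{itm:RnotFun})$ is essentially identical to the paper's: apply Proposition~\ref{prop:Qwithstar} to the product $T \times T$ to $\omega$-iterate the $\star$-loop inside $\N$, then glue the mismatching prefix with Proposition~\ref{prop:glueN}, observing that automorphisms preserve mismatches. Where you genuinely diverge is the converse direction. The paper splits the two $\N$-runs at the first mismatch, packages the two suffixes into an auxiliary \nra $A$ (with registers $R \uplus R'$) that guesses a starting configuration pair of type $\tau(B_1 \uplus B_2)$ and simulates $T \times T$, invokes the emptiness characterization (Proposition~\ref{prop:emptinessN}) on $A$ as a black box, and then re-glues the mismatch prefix to the reachability witness via an automorphism matching $B_1 \uplus B_2$ to the guessed configuration. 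You instead inline the Ramsey/$\star$ argument from the proof of Proposition~\ref{prop:emptinessN} directly on the tail of the actual product run after the mismatch, and sidestep the re-gluing entirely by taking $(C_1,C_2) = (D_1,D_2)$, so the type equality $\tau(C_1 \uplus C_2) = \tau(D_1 \uplus D_2)$ is free and the mismatch persists by monotonicity under extension; your thinning step also handles double finality more explicitly than the paper does. Your route is more self-contained and even yields witnesses lying in $\N^*$ rather than merely $\Q_+^*$; what the paper's detour buys is the auxiliary automaton $A$ itself, which is reused verbatim in the \PSpace decision procedure of Corollary~\ref{thm:funN} (the algorithm checks emptiness of exactly that $A$), so with your proof one would have to set up that reduction separately when proving the corollary.
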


\begin{proof}
  First, assume that~(\ref{itm:witnessNotFun}) holds. By applying
  Proposition~\ref{prop:Qwithstar} to the product transducer $T \times T$, there
  exist two configurations $E_1$ and $E_2$ and an infinite data word $x \in
  \N^\omega$ such that $E_1 \xrightarrow{x \mid y_1}$ and $E_2 \xrightarrow{x \mid y_2}$. Moreover, both runs are accepting as
  each finite run is required to visit a final state of $T$.

  Now, since $\tau(E_1 \uplus E_2) = \tau(D_1 \uplus D_2) = \tau(C_1 \uplus
  C_2)$, we can apply Proposition~\ref{prop:glueN} to get two runs $C_0
  \xrightarrow{u' \mid u'_1} F_1 \xrightarrow{x' \mid y_1'}$ and $C_0
  \xrightarrow{u' \mid u'_2} F_2 \xrightarrow{x' \mid y_2'}$. Moreover, since automorphisms
  preserve mismatches, we know that $u'_1 \mismatch u'_2$. Thus, we obtained a
  witness of non-functionality, since we have $(u'x,u'_1y_1'), (u'x,u'_2y_2') \in
  T$, with $u'_1 y_1' \neq u'_2 y_2'$ since $u'_1 \mismatch u'_2$.

  We now assume that $R$ is not functional. By definition, and as we assume $R$
  only contains infinite output words, there are two runs on some input word $x$
  whose outputs mismatch. By splitting both runs after the first mismatch, we
  get two runs $C_0 \xrightarrow{t \mid t_1} B_1 \xrightarrow{w \mid y}$ and
  $C_0 \xrightarrow{t \mid t_2} B_2 \xrightarrow{w \mid z}$ with $t_1 \mismatch
  t_2$ (note that we do not necessarily have that $y = z$). Now, from the
  product transducer $T \times T$, one can define an \nra $A$ with registers $R
  \uplus R'$ recognising the language $L(A) = \left\{w' \mid E_1 \xrightarrow[T]{w' \mid
    y'}, E_2 \xrightarrow[T]{w' \mid z'} \text{ and } \tau(E_1 \uplus E_2) =
  \tau(B_1 \uplus B_2)\right\}$ which starts by guessing some configuration $E_1
  \uplus E_2$ and checks that $\tau(E_1 \uplus E_2) = \tau(B_1 \uplus B_2)$,
  then simulates $T \times T$. Such language is non-empty, since it at least
  contains $w$. By Proposition~\ref{prop:emptinessN}, we get that there exist
  runs whose input words belong to $\Q_+^*$ which are $C_0 \uplus C_0
  \xrightarrow{u'} C_1 \uplus C_2$ and $D_1 \uplus D_2 \xrightarrow{v} D'_1
  \uplus D'_2$ with $\tau(D_1 \uplus D_2) = \tau(D'_1 \uplus D'_2) = \tau(C_1
  \uplus C_2)$, $D_1$ and $D'_1$ final
  and $\sigma(D_1 \uplus D_2,D'_1 \uplus D'_2)$ satisfies the $\star$ property,
  which immediately yields item~\ref{itm:funIterable}.

  Since $C_0 \uplus C_0 \xrightarrow{u'} C_1 \uplus C_2$, by definition of the
  considered $\nra$, we have that $E_1 \xrightarrow[T]{u' \mid u'_1} C_1$ and $E_2
  \xrightarrow[T]{u' \mid u'_2} C_2$, with $\tau(E_1 \uplus E_2)  = \tau(B_1
  \uplus B_2)$. Thus, by applying an automorphism $\mu$ such that $\mu(B_1) = E_1$,
  $\mu(B_2) = E_2$, such runs can be glued with the runs $C_0 \xrightarrow{t
    \mid t_1} B_1$ and $C_0 \xrightarrow{t \mid t_2} B_2$ to yield two runs $C_0
  \xrightarrow{u \mid u_1} C_1$ and $C_0 \xrightarrow{u \mid u_2} C_2$, with $u
  = \mu(t) u'$ and $u_1 = \mu(t_1) u'_1$, $u_2 = \mu(t_2) u'_2$, with $u_1
  \mismatch u_2$ since $\mu(t_1) \mismatch \mu(t_2)$ (recall that automorphisms
  preserve mismatches, since they are bijective), so we get item~\ref{itm:funMismatch}.
\end{proof}
\begin{cor}\label{thm:funN}
  Functionality for relations over $\N^\omega \times \N^\omega$ given by \nrt is decidable in $\PSpace$.
\end{cor}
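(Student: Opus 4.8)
The plan is to decide the complement problem, \emph{non}-functionality, and then conclude, since \PSpace{} is closed under complement. By the characterization of non-functionality established in the preceding proposition, $R$ fails to be functional precisely when there is a witness pattern consisting of (a) a mismatching prefix $C_0 \xrightarrow{u\mid u_1} C_1$, $C_0 \xrightarrow{u\mid u_2} C_2$ with $u_1 \mismatch u_2$, and (b) an accepting loop $D_1 \uplus D_2 \xrightarrow{v} D'_1 \uplus D'_2$ whose pairwise type $\sigma$ satisfies the $\star$ property, with $\tau(D_1 \uplus D_2) = \tau(C_1 \uplus C_2)$, all over input data in $\Q_+$. The crucial point is that, because the entire witness is expressed over $\Q_+$, which is oligomorphic and polynomially decidable (it admits quantifier elimination, so every $k$-type is described by a formula of size linear in $k$, and its \fo-satisfiability is in \PSpace), one never needs to manipulate concrete data values: it suffices to simulate the product transducer $T \times T$ \emph{symbolically}, tracking only $\Q$-types of the relevant tuples of registers, each of size polynomial in the number $|R|$ of registers of $T$.

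First I would detect part (a). This is the same mismatch detection as in the oligomorphic case (Theorem~\ref{thm:decoligo}): I guess two runs of $T$ in parallel, maintaining the \emph{joint} $\Q$-type of the pair of current configurations over $2|R|$ registers, updating it at each step by a \PSpace{} \fo-satisfiability test, and I use two counters to line up the positions of the two output streams. At a nondeterministically chosen step I record the data value output by the first run by enlarging the tracked type to $2|R|+1$ registers and freezing the corresponding counter; later, when the second run produces the aligned output position (the two counters agree), I read off from the stored type that the two values differ, certifying $u_1 \mismatch u_2$. By the small-mismatch-witness argument applied to $\Q_+$ via its oligomorphicity (Lemma~\ref{lem:smallmis}), such a mismatch can be realised by runs of length polynomial in $|Q|$, $\rn(4|R|)$ and the maximal output length, so the two alignment counters stay bounded by a value representable in polynomial space.

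Next I would detect part (b), mirroring the non-emptiness procedure of Corollary~\ref{cor:emptinessN} applied to $T \times T$. Starting from a pair of configurations whose joint type equals the type $\tau(C_1 \uplus C_2)$ reached at the end of part (a), I continue the symbolic simulation of $T \times T$, now additionally maintaining the pairwise type $\sigma$ between the starting pair $D_1 \uplus D_2$ and the current pair $D'_1 \uplus D'_2$; this is a $\Q$-type over $4|R|$ registers, hence again of polynomial size. Along the way I check that both components visit a final state, that the final joint type equals the initial one, and that $\sigma$ satisfies the $\star$ property, the last check being a purely syntactic lookup in $\sigma$. By Proposition~\ref{prop:emptinessN} (together with Propositions~\ref{prop:Qwithstar} and~\ref{prop:glueN}), the existence of such a $\star$-loop over $\Q_+$ is exactly what guarantees that the whole pattern lifts to genuine infinite runs over $\N$, so detecting it certifies non-functionality.

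Combining the two phases yields a nondeterministic procedure that uses only polynomially many bits for the tracked types, the two alignment counters, and a cycle-detection counter bounded by the (exponentially many, hence polynomially representable) number of pairwise types; Savitch's theorem then gives the claimed \PSpace{} upper bound. The matching lower bound follows from the same reduction from \nra{} emptiness over $(\N,\{=\})$ used in Theorem~\ref{thm:decoligo}. The hard part is not the symbolic simulation itself but the book-keeping that ties the two phases together: one must carry the \emph{joint} type of both runs throughout (rather than the two types separately), so that the loop guessed in phase (b) is provably compatible with the mismatch prefix of phase (a) through the condition $\tau(D_1 \uplus D_2) = \tau(C_1 \uplus C_2)$, and one must ensure the counters certifying the mismatch remain polynomially bounded, which is precisely where the small-mismatch-witness bound for $\Q_+$ is essential.
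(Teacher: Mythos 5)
Your proposal is correct and follows essentially the same route as the paper: apply the non-functionality characterization over $\Q_+$, bound the mismatching prefix via Lemma~\ref{lem:smallmis} so the alignment counters fit in polynomial space, and reduce the iterable-loop condition to the \PSpace{} emptiness check over $\Q$-types with the $\star$ property (Proposition~\ref{prop:emptinessN} and Corollary~\ref{cor:emptinessN}). Your write-up merely spells out the symbolic type-tracking details that the paper leaves implicit (plus an unneeded lower-bound remark, since the corollary only claims membership in \PSpace).
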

\begin{proof}
  By Lemma~\ref{lem:smallmis}, if item~\ref{itm:funMismatch} holds, then we can
  assume that the length of $u$ is bounded by $P(\rn(4k), \size{Q}, L)$, where
  $\rn$ denotes the Ryll-Nardzewski function of $\Q_+$, which is exponential.
  Thus, the existence of $u$ can be checked with a counter that is polynomially
  bounded. Then, item~\ref{itm:funIterable} can be checked in polynomial space,
  since it reduces to checking emptiness of the \nra $A$ that we described in
  the above proof.
\end{proof}

\subsection{Next-letter problem}

   We now show that for any function definable by
an \nrt over $\mathbb{N}$, the next-letter problem is
computable.
\begin{lem}\label{lm:nextletterN}
  Let $f: \N^\omega\rightarrow \N^\omega$ be a function defined by an
  \nrt  over $\mathbb{N}$. Then, its next-letter problem is computable.
\end{lem}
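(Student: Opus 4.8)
The plan is to follow the proof of Lemma~\ref{lem:nextletter} for the oligomorphic case, replacing every appeal to co-reachability and to ``guess an accepting continuation'' by the $\N$-specific iterability machinery developed in this section. As there, the problem splits into two parts: first decide whether a next letter exists at all, then, when it does, compute its value. Writing $S = \set{f(uy) \mid uy \in \dom(f)}$, a next letter exists exactly when $v \prefix \bigwedge S$ and $\size{\bigwedge S} > \size{v}$, in which case it equals the $(\size{v}+1)$-th symbol of $\bigwedge S$. Hence no next letter exists iff one can exhibit accepting runs of $T$ over inputs $uy$ with $y \in \N^\omega$ witnessing one of two failure modes: either a single run whose output $o$ satisfies $\size{o \wedge v} < \size{v}$, so that $v$ is not a common prefix; or two runs with outputs $o_1,o_2$ such that $\size{o_1 \wedge o_2} \leq \size{v}$ while both extend past position $\size{v}$, so that there is no common next letter.

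For the first part I would reduce the \emph{non-existence} of a next letter to the emptiness of an \nra over $\N$, which is decidable by Corollary~\ref{cor:emptinessN}. I would build an \nra $A$ that simulates the product $T \times T$ (or a single copy, for the first mode) over a finite input whose $\Q$-type matches that of $u$, uses one extra register to remember the candidate mismatching output value and two counters bounded by $\size{v}$ (recall $v$ is part of the input) to align produced positions with $v$, and accepts as soon as the relevant mismatch is detected. The essential difference with the oligomorphic setting is that the witnessing continuation must be realisable \emph{over $\N$} and not merely over $\Q_+$: co-reachability is no longer automatic. Here I would invoke Proposition~\ref{prop:emptinessN}, requiring that the configurations reached after the mismatch admit continuations going through a final configuration along a loop whose pairwise $\Q$-type satisfies the $\star$ property, which guarantees $\omega$-iterability over $\N$ by Proposition~\ref{prop:Qwithstar}. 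Concretely, $A$ additionally guesses and checks such $\star$-iterable accepting loops, exactly as in the proof of Corollary~\ref{cor:emptinessN}, and the finite mismatching prefix, first obtained symbolically over $\Q$-types, is concretised over $\N$ through Propositions~\ref{prop:fromQtoN} and~\ref{prop:glueN}. Since all these objects are $\Q$-types over $O(\size{R})$ registers together with counters bounded by $\size{v}$, $A$ is an effective \nra over $\N$, and deciding its emptiness settles the existence of a next letter.

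Finally, once a next letter is known to exist (and after checking that $u$ is extendable into $\dom(f)$, returning \texttt{none} otherwise), I would compute it by simulating one accepting run of $T$ over some $uw$ with $uw \in \dom(f)$ and reading off the $(\size{v}+1)$-th emitted output value. Such a run is produced by guessing a $\star$-iterable accepting continuation (Proposition~\ref{prop:Qwithstar}) and, using the representability and decidability of $\N$, instantiating each transition with an explicit data value; since $\epsilon$-producing loops can be avoided, the $(\size{v}+1)$-th value is emitted after finitely many steps.

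The main obstacle is precisely this bookkeeping around $\N$-realisability. Over oligomorphic domains one may freely continue through any repeated orbit, so the existence of a suitable continuation reduces directly to non-emptiness; over $\N$ every guessed continuation must be accompanied by a $\star$-certificate of iterability, and finite witnesses that live a priori only in $\Q_+$ must be transported back to $\N$. The work therefore consists in weaving Propositions~\ref{prop:fromQtoN}, \ref{prop:Qwithstar} and~\ref{prop:glueN} into both the emptiness reduction of the existence test and the simulation used to read off the letter, while keeping all symbolic data (types and counters) within the bounded space already used for Corollary~\ref{cor:emptinessN}.
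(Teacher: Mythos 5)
Your overall toolbox is the right one (two phases; Corollary~\ref{cor:emptinessN}, Proposition~\ref{prop:emptinessN}, and the transfer Propositions~\ref{prop:fromQtoN}, \ref{prop:Qwithstar}, \ref{prop:glueN}), but two steps of your existence test would fail. First, you simulate ``the product $T \times T$ \ldots over a finite input'', i.e.\ two runs of $T$ reading the \emph{same} word. Because $f$ is functional, that cannot witness non-existence of a next letter in general: the decisive witnesses are two accepting runs over two \emph{different} continuations $uy_1 \neq uy_2$ of $u$. Take the deterministic \nrt computing $f(x) = x[1]^\omega$ (store the first letter, output it forever): no input carries two distinct runs, so your \nra is empty, yet the instance $(u,v)=(\varepsilon,\varepsilon)$ has answer \texttt{none}, since $f(dy) = d^\omega$ varies with $d$. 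Allowing partial runs with co-reachable endpoints does not repair this, as the example still has a single run per prefix. This is exactly why the paper does not reduce to emptiness of a same-input product but introduces the convolution $ux_1 \otimes ux_2$ of two inputs and a transducer $T_{uv}$ sending $ux_1\otimes ux_2$ to $vd^\omega$, where $d$ is the $(\size{v}+1)$-th output letter of the run on $ux_i$; existence of the next letter is then equivalent to \emph{functionality} of $T_{uv}$, decided by Corollary~\ref{thm:funN}. (You inherit the same-input phrasing from the oligomorphic Lemma~\ref{lem:nextletter} that you set out to follow; the proof of the present lemma deliberately departs from it on this point.)

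Second, letting your \nra read ``a finite input whose $\Q$-type matches that of $u$'' is unsound over $(\N,\set{<,0})$, even though it is the correct abstraction in the oligomorphic case where types coincide with orbits. Over $\N$ the answer depends on the \emph{concrete} values of $u$, because $\N$-realizability of continuations depends on the integer gaps they leave. Concretely, consider a transducer that stores $u[1]$ in $r$ and then either (branch A) reads a strictly decreasing pair of values strictly between $0$ and $r$ and outputs $0^\omega$, or (branch B) reads a value $\geq r$ or equal to $0$ and outputs $r^\omega$; the two branches have disjoint domains, so this is a function. For $u=5$ both branches are realizable over $\N$ and there is no next letter for $(5,\varepsilon)$, while for $u=1$ branch A is unrealizable over $\N$ and the next letter is $1$; yet $u=5$ and $u=1$ have the same $\Q$-type. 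The paper avoids this by hardcoding the actual data values of $u$ and $v$ into $T_{uv}$ (equality tests against those values), never abstracting $u$ to its type. The same concreteness issue resurfaces in your second phase: a $\star$-certificate yields an iterable accepting continuation from \emph{some} configuration of the right type, not from the configuration actually reached on the concrete $u$; the paper instead enumerates the concrete configurations reached on concrete extensions of $u$, filters them by co-reachability (tested as an \nra emptiness), and waits until one surviving run emits the $(\size{v}+1)$-th output value.
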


\begin{proof}
  The algorithm is in two phases:
  \begin{enumerate}
  \item decide whether there exists a next letter
  \item if there exists a next letter, compute it
  \end{enumerate}

\noindent
  Recall that as input to the next-letter problem, there are
  two finite data words $u,v\in\N^*$ and the goal is to decide whether
  there exists some $d\in\N$ such that for all $uy\in\dom(f)$,
  $v\leq f(uy)$ implies $vd\leq f(uy)$. Let us assume that $f$ is defined by some \nrt $T =
  \tuple{Q,R,\Delta,q_0,\overline{\cd},F}$. Let us explain how to algorithmically realize the two
  latter steps.

  1. To decide the existence of such a $d$, we reduce the problem to a
  functionality problem for an \nrt $T_{uv}$. First, let us define the
  convolution $x_1\otimes x_2$ of two data words $x_1=d_1d_2\dots, x_2=d'_1d'_2\dots \in\N^\omega$ as the
  data word $d_1d'_1d_2d'_2\dots$.   The intuitive idea is to
  construct $T_{uv}$ in such a way that it defines the relation
  $R_{uv} = R_{uv}^1\cup R_{uv}^2$ defined by
  $R_{uv}^i = \{ (ux_1\otimes ux_2, vd^\omega)\mid ux_1,ux_2\in\dom(f),
  vd\preceq f(ux_i)\}$. It is not difficult to see that $R_{uv}$ is a
  function iff the next-letter has a positive answer for $u$ and $v$.
  Once $T_{uv}$ is constructed, checking its functionality is
  possible thanks to Corollary~\ref{thm:funN}.
  Let us now explain how to construct $T_{uv}$. It is the union of two
  transducers $T_{uv}^1$ and $T_{uv}^2$ defining $R_{uv}^1$ and $R_{uv}^2$ respectively. The
  constructions are similar for both: $T_{uv}^1$ ignores the even
  position of the input word after $u$ has been read while $T_{uv}^2$ ignores the odd
  position after $u$ has been read, but they otherwise are defined in the same way. Let us
  explain how to construct $T_{uv}^1$. First, $T_{uv}^1$ makes sure
  that its input $\alpha\otimes \beta$ is such that $u\leq \alpha$ and
  $u\leq \beta$, or equivalently that $\alpha \otimes \beta =
  (u\otimes u)(x\otimes y)$ for some $x,y$. This is possible by
  hardcoding $u$ in the transitions of $T$. Likewise, $T_{uv}^1$ also
  makes sure that $v$ is a prefix of $f(ux)$. It is also possible by
  hardcoding in $T$ the output $v$ (to check for instance that a run
  of $T$ output the $i$th data value $d_i$ of $v$, it suffices when $T$
  outputs a register $r$ on its transitions, to add the test $r =
  d_i$). So, $T_{uv}^1$ simulates a run of $T$ on $ux$ (the odd
  positions of $\alpha\otimes \beta$) and a run of $T$
  on $uy$ (the even positions of $\alpha\otimes \beta$). Once $v$ has
  been consumed by the simulated run on $ux$, the first time this
  simulated run outputs something, the data value is kept in a special
  register and outputted all the time by $T_{uv}^1$.

    2. If we know that there exists a next letter $d$, the only thing
    that remains to be done is to compute it. To do so, we can again
    construct a transducer $T'_{uv}$ which simulates $T$ but makes
    sure to accept only input words that start with $u$, accepted by
    runs which outputs words starting with $v$. This can again be done
    by hardcoding $u$ and $v$ in $T$. Then, to compute a data value $d$, it
    suffices to execute $T'_{uv}$ by computing, at any point $i$, all
    the configurations reached by $T'_{uv}$, and by keeping only those
    which are co-reachable. Testing whether a configuration is
    co-reachable can be done by testing the non-emptiness of an \nra
    starting in this initial configuration. Doing so, the algorithm
    computes all prefixes of accepting runs. Eventually, since there
    exists a next letter $d$, one of the run will output it.
\end{proof}

As a direct corollary of Lemma~\ref{lm:nextletterN}, Theorem~\ref{thm:cont2comp}
and Theorem~\ref{thm:comp2cont}, we obtain:

\begin{thm}\label{thm:compcontN}
    Let $f: \N^\omega\rightarrow \N^\omega$ be a function defined by an
  $\nrt$ over $\mathbb{N}$, and let
  $m:\N\rightarrow \N$ be a total function. Then,
  \begin{enumerate}
  \item $f$ is computable iff $f$ is continuous
  \item $f$ is uniformly computable iff $f$ is uniformly continuous
  \item $f$ is $m$-computable iff $f$ is $m$-continuous
  \end{enumerate}
\end{thm}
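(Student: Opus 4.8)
The plan is to observe that this theorem is an immediate consequence of the two generic implication theorems of Section~\ref{sec:compvscont} together with the computability of the next-letter problem established in Lemma~\ref{lm:nextletterN}. The key preliminary point is that $\N$ equipped with $\{<,0\}$ is a representable data domain: its elements are finitely encodable and the order and constant tests are decidable. Hence both Theorem~\ref{thm:comp2cont} and Theorem~\ref{thm:cont2comp} apply to any function $f : \N^\omega \rightarrow \N^\omega$, and in particular to any $f$ defined by an \nrt over $\N$.

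First I would dispatch the three ``computability $\Rightarrow$ continuity'' directions. These follow directly from Theorem~\ref{thm:comp2cont}, which holds unconditionally for every representable domain: it gives that computability implies continuity, uniform computability implies uniform continuity, and $m$-computability implies $m$-continuity, each for the same modulus $m$. No hypothesis on $f$ beyond its being a data-word function is needed here, so these directions require no further argument in the setting of $\N$.

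For the converse ``continuity $\Rightarrow$ computability'' directions, the only missing ingredient is the hypothesis of Theorem~\ref{thm:cont2comp}, namely that $f$ has a computable next-letter problem. This is precisely what Lemma~\ref{lm:nextletterN} supplies: since $f$ is definable by an \nrt over $\N$, its next-letter problem is computable. Feeding this into Theorem~\ref{thm:cont2comp} yields that continuity implies computability, uniform continuity implies uniform computability, and $m$-continuity implies $m$-computability. Combining the two directions gives the three claimed equivalences.

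There is essentially no obstacle at this final stage, since all the genuine work has already been carried out upstream: Lemma~\ref{lm:nextletterN} reduces the next-letter problem to functionality (Corollary~\ref{thm:funN}) and to emptiness of \nra over $\N$ (Corollary~\ref{cor:emptinessN}), both decided through the $\Q_+$-type machinery and the $\star$-property characterisation, while Theorems~\ref{thm:comp2cont} and~\ref{thm:cont2comp} provide the generic correspondence. The one point I would verify carefully is that the modulus $m$ is transported \emph{identically} across both implications, so that the $m$-computable/$m$-continuous equivalence is exact rather than up to a change of modulus. Inspecting the two proofs, the machine $M_f$ constructed in Theorem~\ref{thm:cont2comp} uses exactly the same $m$, and Theorem~\ref{thm:comp2cont} likewise preserves $m$; hence the third equivalence holds verbatim for every total $m : \N \rightarrow \N$.
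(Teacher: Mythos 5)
Your proposal is correct and follows exactly the paper's own route: the paper states Theorem~\ref{thm:compcontN} as a direct corollary of Lemma~\ref{lm:nextletterN}, Theorem~\ref{thm:cont2comp} and Theorem~\ref{thm:comp2cont}, which is precisely your combination of the generic implications with the computability of the next-letter problem over $\N$. Your additional checks (representability of $(\N,\{<,0\})$ and the identical transport of the modulus $m$) are sound and consistent with the constructions in those theorems.
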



\subsection{Uniform continuity}

We now turn to uniform continuity over $\N$, and show that it is enough to consider
uniform continuity over $\Q_+$ and restrict our attention to configurations that are co-reachable
w.r.t. data words over $\N$.

Given a configuration $\Q$-type $\tau$, we say that it is co-reachable in $\N$ if there is an actual configuration $C$ of type $\tau$ which is co-reachable.
\begin{prop}%
  \label{prop:characUnifContN}
Let $T$ be an \nrt over $(\N,\set{<,0})$ and $f= \rel T$.
The following are equivalent:
\begin{itemize}
\item $f$ is uniformly continuous,
\item There is a critical pattern (see Definition~\ref{def:criticalPattern}) with $D_1,D_2$ co-reachable in $\N$.
\end{itemize}
\end{prop}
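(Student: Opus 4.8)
The plan is to establish the equivalence by transferring the oligomorphic characterization of Proposition~\ref{prop:crit} from the superstructure $\Q_+$ down to $\N$. Since a critical pattern is, by construction, a witness that two runs sharing an arbitrarily long common input produce outputs with a bounded mismatch, its presence is exactly what refutes uniform continuity; accordingly I prove that $f$ \emph{fails} to be uniformly continuous precisely when a critical pattern with $D_1,D_2$ co-reachable in $\N$ exists. Throughout I view $T$ over $\N$ as a transducer over $\Q_+$, so that every finite $\N$-run is a finite $\Q_+$-run and repeated configuration types yield, over $\Q_+$, automorphisms relating the configurations.

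For the direction from a critical pattern to non-uniform continuity, I take the two runs $C_0\xrightarrow{u}C_1\xrightarrow{v}\mu(C_1)\xrightarrow{w}D_1$ and $C_0\xrightarrow{u}C_2\xrightarrow{v}\mu(C_2)\xrightarrow{z}D_2$ of Definition~\ref{def:criticalPattern} and iterate the shared loop in the product $T\times T$: for each $k$, reading $v\,\mu(v)\cdots\mu^{k-1}(v)$ drives the product from $(C_1,C_2)$ to $(\mu^k(C_1),\mu^k(C_2))$ over $\Q_+$. This is a \emph{finite} $\Q_+$-run, hence by Proposition~\ref{prop:Q_N_fin} applied to the product it can be rescaled to a single $\N$-run that both components read identically, producing a common $\N$-prefix of length at least $k\,\size{v}$. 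I then complete each branch to an accepting run over $\N$: because $D_1$ and $D_2$ are co-reachable in $\N$, there are infinite $\N$-runs from configurations of types $\tau(D_1),\tau(D_2)$, which I attach to the finite branches via Proposition~\ref{prop:glueN}. Conditions~\ref{itm:critMismatch}--\ref{itm:critTwoEmpty} ensure that the outputs mismatch at a position independent of $k$, so for every $k$ we obtain two inputs in $\dom(f)$ agreeing on a prefix of length $\geq k$ whose images mismatch within a fixed bound; letting $k\to\infty$ contradicts uniform continuity.

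For the converse, I start from a failure of uniform continuity: some bound $i$ and, for every $n$, accepting $\N$-runs $\rho_n^1,\rho_n^2$ over inputs sharing a prefix of length $\geq n$ with outputs mismatching before position $i$. Reading these as $\Q_+$-runs, I apply a Ramsey argument to the sequence of pairwise $\sigma$-types visited along the common prefix (exactly as in Proposition~\ref{prop:emptinessN} and the extraction in Proposition~\ref{prop:crit}) to obtain a repeated type, hence a synchronized loop $(C_1,C_2)\xrightarrow{v}(\mu(C_1),\mu(C_2))$, together with continuations realizing the mismatch in one of the forms~\ref{itm:critMismatch}--\ref{itm:critTwoEmpty}. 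The point specific to $\N$ is that the closing configurations $D_1,D_2$ are co-reachable in $\N$: they lie on the \emph{accepting} $\N$-runs $\rho_n^1,\rho_n^2$, whose suffixes are infinite $\N$-runs, so their types are witnessed by genuinely $\N$-co-reachable configurations. This yields a critical pattern with $D_1,D_2$ co-reachable in $\N$.

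The main obstacle is the $\N$-realizability of the shared part. Unlike the oligomorphic case, a loop freely iterable over $\Q_+$ need not be iterable over $\N$ (Examples~\ref{exa:emptyN-nonemptyQ-decr} and~\ref{exa:emptyN-nonemptyQ-incr-bounded}), and rescaling the two branches independently would destroy their common prefix. This is precisely why the argument must be routed through the product transducer and a single rescaling (Proposition~\ref{prop:Q_N_fin}) for the shared prefix, reserving Proposition~\ref{prop:glueN} for the attachment of the two, possibly independent, accepting $\N$-tails supplied by co-reachability in $\N$; keeping the mismatch position, the common-prefix length, and the integrality of all data values simultaneously consistent as $k$ grows is the delicate bookkeeping of the proof.
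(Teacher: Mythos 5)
Your overall route is the same as the paper's. The paper packages the extraction direction by defining an auxiliary transducer $T'$ (namely $T$ read over $\Q$, restricted to configurations co-reachable in $\N$), showing uniform continuity of $T$ and $T'$ coincide, and then invoking Proposition~\ref{prop:crit} on $T'$; this is exactly your direct pigeonhole/Ramsey extraction over the witnessing $\N$-runs, combined with the observation that configurations lying on accepting $\N$-runs are co-reachable in $\N$. For the other direction the paper, like you, iterates the shared loop $n$ times and appeals to co-reachability in $\N$ together with Proposition~\ref{prop:glueN} to land the witnesses in $\N$; your insistence on a single rescaling of the product run via Proposition~\ref{prop:Q_N_fin}, and your remark that rescaling the two branches independently would destroy the common prefix, is the right reading of the paper's terse ``we can use Proposition~\ref{prop:glueN} and we obtain the result''. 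You also correctly interpret the proposition as characterizing the \emph{failure} of uniform continuity, as the paper's proof does.

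There is, however, one step that would fail as written. You claim that conditions~\ref{itm:critMismatch}--\ref{itm:critTwoEmpty} of Definition~\ref{def:criticalPattern} ``ensure that the outputs mismatch at a position independent of $k$''. The \emph{position} is indeed stable, since automorphisms preserve lengths, but the mismatch itself is not automatic under iteration: in cases~\ref{itm:critOneEmpty} and~\ref{itm:critTwoEmpty}, one of the two mismatching letters may be produced before the loop (inside $u_1$ or $u_2$, hence fixed when the loop is pumped) while the other is produced after it (inside $w_1$ or $w_2$, hence replaced by its image under $\mu^k$). For particular values of $k$ one can have $\mu^k(d)=d'$ even though $d\neq d'$, cancelling the mismatch. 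The paper handles this explicitly in one sentence: if $\mu^n$ cancels the mismatch, consider $\mu^{n+1}$ instead. This repair is sound because if both $\mu^n(d)=d'$ and $\mu^{n+1}(d)=d'$ held, then $\mu(d')=d'$, hence $\mu^{-n}(d')=d'$ and so $d=\mu^{-n}(d')=d'$, a contradiction; thus at least one of two consecutive iterates preserves the mismatch, and passing from $n$ to $n+1$ only lengthens the common input prefix. Add this one-line ``double loop'' argument and your proof reaches the same level of rigour as the paper's.
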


\begin{proof}
Let $T'$ denote the same transducer as $T$ except that (1) it is over $\Q$ and~(2) it is restricted to configurations which are co-reachable in $\N$.

We claim that $T$ realizes a uniformly continuous function if and only if $T'$ does. From Proposition~\ref{prop:crit}, this is enough to show the expected result.

Any run over $T$ is in particular a run over $T'$, hence if $T$ is not uniformly continuous then, in particular, $T'$ is not either.
Conversely, let us assume that $T'$ is not uniformly continuous. According to Proposition~\ref{prop:crit}, this means that we can exhibit a critical pattern $C_0 \xrightarrow{u|u_1} C_1 \xrightarrow {v|v_1} \mu(C_1)\xrightarrow{w|w_1}D_1$, $C_0 \xrightarrow{u|u_2} C_2 \xrightarrow {v|v_2} \mu(C_2)\xrightarrow{z|w_2}D_2$ such that $D_1,D_2$ are co-reachable. By definition we have that $D_1,D_2$ are co-reachable in $\N$.
Let $i$ denote the mismatching position in the pattern. Let $n>0$, we want to exhibit two inputs that have a common prefix of length at least $n$ but outputs that have a longest common prefix smaller than $i$. We consider the two runs $C_0 \xrightarrow{u|u_1} C_1 \xrightarrow {v|v_1} \mu(C_1)\cdots \mu^n(C_1)\xrightarrow{\mu^n(v)|\mu^n(v_1)}\mu^{n+1}(C_1)\xrightarrow{\mu^n(w)|\mu^n(w_1)}\mu^n(D_1)$ and $C_0 \xrightarrow{u|u_2} C_2 \xrightarrow {v|v_2} \mu(C_2)\cdots \mu^n(C_2)\xrightarrow{\mu^n(v)|\mu^n(v_2)}\mu^{n+1}(C_2)\xrightarrow{\mu^n(w)|\mu^n(w_1)}\mu^n(D_2)$. Note that it could be that $\mu^n$ cancels the mismatch, in which case we can consider $\mu^{n+1}$.
Since $\mu^n(D_1)$ and $\mu^n(D_2)$ are co-reachable, we can use Proposition~\ref{prop:glueN} and we obtain the result.
\end{proof}

As a corollary, we obtain:
\begin{thm}%
  \label{thm:decUnifContN}
  Let $f: \N^\omega\rightarrow \N^\omega$ be a function defined by an
  \nrt  over $\mathbb{N}$. Deciding its uniform continuity is $\PSpace$-c.
\end{thm}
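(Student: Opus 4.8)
The plan is to combine the structural characterization of Proposition~\ref{prop:characUnifContN} with the symbolic, type-based decision techniques already developed for the oligomorphic case and for $\N$-emptiness. By Proposition~\ref{prop:characUnifContN} (read together with Proposition~\ref{prop:crit}), $f$ fails to be uniformly continuous exactly when $T$ admits a critical pattern whose terminal configurations $D_1,D_2$ are co-reachable in $\N$. Since \PSpace is closed under complement (via Savitch's theorem), it suffices to give an \NPSpace procedure detecting such a pattern; accepting its negation then decides uniform continuity in \PSpace.

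For the upper bound, I would first invoke the small critical pattern property. The critical patterns of Proposition~\ref{prop:characUnifContN} live in $\Q_+$ (the auxiliary transducer $T'$ is over $\Q_+$), which is oligomorphic, so Claim~\ref{clm:smallmis} applies and bounds the lengths of $u,v,w,z$ by $P'(\rn(4k),\size{Q},L)$. As $\Q_+$ is polynomially decidable, $\rn(4k)$ is exponential in $k$, so this bound fits in a counter using polynomial space. The algorithm then symbolically guesses the two parallel runs of the pattern by tracking the $\Q$-types of configurations, which are of polynomial size thanks to quantifier elimination in $\Q_+$, exactly as in the proofs of Theorem~\ref{thm:decoligo} and Corollary~\ref{thm:funN}. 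Two additional bounded counters align the output positions witnessing the mismatch (cases a)--c) of Definition~\ref{def:criticalPattern}), and a $(2k{+}1)$-type records the mismatching data value, just as in the functionality and continuity decision procedures.

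The one genuinely new ingredient, which is also the main obstacle, is verifying that the terminal configurations $D_1,D_2$ are co-reachable \emph{in $\N$} rather than merely in $\Q_+$. Here co-reachability in $\N$ from a configuration type reduces to non-emptiness of an \nra over $\N$ launched from that type, which by Corollary~\ref{cor:emptinessN} is decidable in \PSpace; crucially this check is governed by the $\star$ property of Proposition~\ref{prop:emptinessN}, so the pairwise type $\sigma(D,D')$ over $2\size{R}$ registers must be tracked and updated in polynomial space. I would run these two co-reachability tests in parallel with the pattern search, so the delicate part is purely the simultaneous bookkeeping: several $\Q$-types, the alignment counters, and the $\star$-tracking of pairwise types must all coexist within a polynomial space budget. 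Since each ingredient is individually polynomial, the composed procedure is \NPSpace, hence \PSpace.

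For the lower bound, I would reduce from non-emptiness of \nra over $\N$, which is \PSpace-hard since $(\N,\set{<,0})$ can simulate $(\N,\set{=})$ (equality is first-order definable from $<$) and the latter has \PSpace-complete emptiness~\cite{DBLP:journals/tocl/DemriL09}. Mimicking the reduction at the end of the proof of Theorem~\ref{thm:decoligo}, given an \nra $A$ and a fixed \nrt realizing some non-uniformly-continuous function $g$, I would build an \nrt for the function $w\#x \mapsto g(x)$ defined only for $w \in L(A)$. This function is uniformly continuous precisely when its domain is empty, i.e.\ when $L(A) = \varnothing$, which yields \PSpace-hardness and hence \PSpace-completeness.
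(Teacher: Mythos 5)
Your proposal is correct and takes essentially the same approach as the paper: the paper likewise combines the characterization of Proposition~\ref{prop:characUnifContN} with the symbolic \PSpace procedure underlying Theorem~\ref{thm:decoligo}, the key trick being exactly the one you identify, namely checking co-reachability in $\N$ on the fly via Proposition~\ref{prop:emptinessN} and Corollary~\ref{cor:emptinessN} rather than building the restricted transducer explicitly, and it obtains the lower bound by the same reduction from emptiness of register automata over $(\N,\{=\})$. Your write-up merely spells out the bookkeeping (small patterns via Claim~\ref{clm:smallmis}, $\Q$-type tracking, mismatch counters) that the paper leaves implicit by citing Theorem~\ref{thm:decoligo}.
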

\begin{proof}
We are left to show that uniform continuity of $T'$ is in \PSpace.
This problem is in \PSpace from Theorem~\ref{thm:decoligo}. We have to be careful however, since computing $T'$ explicitly may be too costly. The trick is that checking if a configuration is co-reachable can be done on the fly in \PSpace, using Proposition~\ref{prop:emptinessN} and Corollary~\ref{cor:emptinessN}.

The complexity lower bound can again be obtained by reducing the problem to emptiness of register automata over $(\N, \{=\})$, which is \PSpace-c~\cite{DBLP:journals/tocl/DemriL09}.
\end{proof}

\subsection{Continuity}
We end our study with the property of continuity of \nrt over $\N$.

\begin{asm}
To simplify the following statement, we assume that transducers are equipped
with a distinguished register $r_m$ which along the run stores the maximal data value
seen in the input so far.
\end{asm}
\begin{prop}[Continuity]
  Let $f:\N^\omega \rightarrow \N^\omega$ be given by an \nrt $T$ over
  $\mathbb{N}$. The following are equivalent:
  \begin{enumerate}
    \item\label{itm:NfNotCont} $f$ is not continuous
    \item\label{itm:NfPattern} there exists a critical pattern in
      $\cp(u,v,w,z,C_1,C_2,C_1',C_2',D_1,D_2)$
      with $u,v,w,z \in \mathbb{Q}_+^*$, where $C_1$ is final, $D_1$ and $D_2$
      are co-reachable in $\N$ and such that
      $\sigma = \tau((C_1 \uplus C_2) \uplus (C'_1 \uplus C'_2))$ satisfies the
      $\star$ property for $X = R_1 \cup R_2^l$, where $R_2^l = \{r_2 \in R_2 \mid
      \exists r'_1 \in R'_1, \sigma \Rightarrow r_2 \leq r'_1\}$.
    \end{enumerate}
  \end{prop}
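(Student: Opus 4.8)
The plan is to mirror the two directions of Proposition~\ref{prop:crit}, using the $\N$-specific machinery developed in this section in place of the orbit-repetition argument available in the oligomorphic case. Throughout, the governing intuition is the asymmetry of the pattern: only the first run must be \emph{accepting} ($C_1$ final) and is genuinely iterated \emph{infinitely} to realize the point of discontinuity, whereas the second run only needs to be iterated finitely before branching off through the co-reachable configuration $D_2$. This is exactly why the $\star$ property is demanded on $X = R_1 \cup R_2^l$ rather than on all of $R_1 \uplus R_2$.

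For $(\ref{itm:NfPattern}) \Rightarrow (\ref{itm:NfNotCont})$, I would first build the point of discontinuity. Since $\sigma$ has the $\star$ property on $R_1$ (as $R_1 \subseteq X$), its restriction to the coordinates $R_1 \uplus R_1'$ is the pairwise type $\sigma(C_1,C_1')$ and still has the full $\star$ property, so Proposition~\ref{prop:Qwithstar} applied to the first component of $T$ yields an infinite run $E_1 \xrightarrow{x}$ over some $x \in \N^\omega$ with $\tau(E_1) = \tau(C_1)$; since $C_1$ is final this run is accepting, and gluing the prefix $C_0 \xrightarrow{u} C_1$ via Proposition~\ref{prop:glueN} gives an accepting $\N$-run, so $x \in \dom(f)$ and $f(x)$ is fixed. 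Then, for each $n$, I would iterate the \emph{product} loop $n$ times over the same $\N$-input prefix of $x$: here the $\star$ property on $R_1 \cup R_2^l$ guarantees that the second component can follow the (growing) first-run-driven input, the registers of $R_2 \setminus R_2^l$ sitting above all input values being chosen large enough to survive $n$ steps. After the $n$-th iteration the second component branches off to $D_2$, which is co-reachable in $\N$, and Proposition~\ref{prop:glueN} provides an $\N$-continuation, yielding $x_n \in \dom(f)$ that agrees with $x$ on a prefix of length tending to infinity, hence $x_n \to x$. Finally, whichever of the mismatch alternatives~(\ref{itm:critMismatch}),~(\ref{itm:critOneEmpty}) or~(\ref{itm:critTwoEmpty}) of Definition~\ref{def:criticalPattern} holds ensures that $f(x_n)$ and $f(x)$ disagree at a position bounded independently of $n$, so $(f(x_n))_n$ does not converge to $f(x)$ and $f$ is discontinuous at $x$.

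For $(\ref{itm:NfNotCont}) \Rightarrow (\ref{itm:NfPattern})$, I would start from a point $x \in \dom(f)$ of discontinuity, a fixed accepting (hence infinite, over $\N$) run $\rho$ over $x$, and a sequence $x_n \to x$ in $\dom(f)$ whose accepting runs $\rho_n$ produce outputs mismatching $f(x)$ at a uniformly bounded position. For $n$ large, $\rho_n$ shares with $\rho$ the input prefix $x[{:}k_n]$ with $k_n \to \infty$, so I can locate a finite portion of the product run $(\rho,\rho_n)$ over a common $\N$-input on which the output mismatch has already occurred. I would then extract the iterable loop by a Ramsey/pigeonhole argument exactly as in Proposition~\ref{prop:emptinessN}: along the infinite run $\rho$ the first-component $\Q$-type repeats on an infinite set of final positions, and refining by the product $\Q$-type on a long enough common prefix yields two positions carrying the same product type with a fixed pairwise type $\sigma$. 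The $\star$ property on $R_1$ then follows from $\rho$ being a genuine infinite run over $\N$ (no infinite descending chain, no bounded infinite increasing chain), just as in the proof of Proposition~\ref{prop:emptinessN}; the $\star$ property on $R_2^l$ follows because those second-run registers are squeezed below the non-decreasing first-run registers and hence cannot themselves induce a descending or a bounded increasing chain, while the registers of $R_2 \setminus R_2^l$ lie above the input values and are therefore left unconstrained. Co-reachability of $D_1,D_2$ in $\N$ comes from the tails of $\rho$ and $\rho_n$, and the output behaviour on the loop and on $w,z$ yields one of the three mismatch shapes by the same case analysis as in the proof of Proposition~\ref{prop:crit} (whether each component produces output inside the loop or not).

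The main obstacle I anticipate is pinning down \emph{exactly} the set $X = R_1 \cup R_2^l$ and verifying the $\star$ property there in both directions. In the forward direction the difficulty is to realize, over a \emph{single} common $\N$-input, the infinite first run together with arbitrarily long finite prefixes of the second run; this requires that the dominated registers $R_2^l$ iterate in lockstep with the growing first-run registers (which the $\star$ property ensures), while the registers of $R_2 \setminus R_2^l$, sitting above the whole input, are re-chosen large for each $n$. In the backward direction the symmetric subtlety is to argue that the extracted pairwise type $\sigma$ satisfies $\star$ precisely on $R_1 \cup R_2^l$ and that nothing more is forced, the second run being only finitely constrained on the common prefix. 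Everything else (the Ramsey extraction, the gluing, and the output case analysis) should be a routine adaptation of Propositions~\ref{prop:emptinessN},~\ref{prop:glueN} and~\ref{prop:crit}.
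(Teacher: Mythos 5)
Your overall plan follows the same route as the paper's proof: the forward direction $\omega$-iterates the first component, iterates the second component arbitrarily but finitely many times by re-choosing the guessed values of the registers in $R_2 \setminus R_2^l$, and glues everything with Proposition~\ref{prop:glueN}; the backward direction extracts a loop with constant pairwise type and establishes $\star$ by chain arguments. One imprecision in your forward direction: the word $x$ cannot be built from the first component alone (via Proposition~\ref{prop:Qwithstar} applied to $\sigma(C_1,C_1')$) and then merely ``followed'' by the second component, because the input data inside the loop must satisfy the tests of \emph{both} components; the widening construction of Propositions~\ref{prop:starWider} and~\ref{prop:Qwithstar} has to be replayed on the \emph{joint} type restricted to $X$, producing a single word $v'$ and a single automorphism serving both runs simultaneously. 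This is what the paper does, and it is a repair of presentation rather than of substance, since you correctly identify the lockstep/re-guessing mechanism.

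The backward direction, however, has a genuine gap. You extract $\sigma$ by Ramsey from a \emph{single} product run $(\rho,\rho_n)$ over a long common prefix, and claim that $\star$ on $R_2^l$ follows because these registers are ``squeezed below the non-decreasing first-run registers''. This step fails: inside one finite run, a register $r \in R_2^l$ may perfectly well satisfy $\sigma \Rightarrow r > r'$ between two positions of equal product type --- a finite strictly decreasing sequence of natural numbers is not a contradiction, so nothing prevents the pairwise type extracted from a single $\rho_n$ from violating $\star$. A violation of $\star$ only becomes contradictory when the \emph{same} pairwise type governs arbitrarily long windows, yielding a descending (or bounded increasing) chain longer than a fixed bound read off from $\rho$ (in the paper, $E_1(r'_M)$); and no single $\rho_n$ supplies such windows, since each $\rho_n$ is only finitely constrained by $\rho$. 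The paper inserts an extra compactness layer precisely here: first extract subsequences of $(\rho_n)_n$ so that, for every fixed pair of positions $(i,j)$, the pairwise types $\sigma_{n,i,j} = \sigma(E_i \uplus C_{n,i}, E_j \uplus C_{n,j})$ stabilize as $n \to \infty$; then apply Ramsey to the \emph{limit} family $(\sigma_{i,j})$; and finally refute any violation of $\star$ by returning to a concrete run $\rho_N$ with $N$ large enough that $\sigma_{N,i,j} = \sigma$ on a window whose length exceeds the relevant bound. Note also that this detour through limit \emph{types} rather than a limit \emph{run} is forced by guessing: as the paper remarks, the limit of the transition sequences of the $\rho_n$ need not be a run over any concrete data word, so your product run $(\rho,\rho_n)$ cannot be upgraded to an infinite object to which the argument of Proposition~\ref{prop:emptinessN} would apply verbatim.
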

\begin{proof}[Proof of $(\ref{itm:NfNotCont})~\Rightarrow~(\ref{itm:NfPattern})$]
    Assume first that $f$ is not continuous. Let $x \in \N^\omega$, and let
    $(x_n)_{n \in \N}$ be such that $\forall n \in \N, x_n \in \dom(f)$ $x_n
    \xrightarrow[n \infty]{} x$ but $f(x_n) \not \xrightarrow[n \infty]{} f(x)$.
    Up to extracting subsequences, we can assume w.l.o.g. that there exists some
    $k \in \N$ such that for all $n \in \N$,
    $\size{f(x) \wedge f(x_n)} \leq k$.
    We denote by $\rho$ a run of $T$ over $x$ yielding output $f(x)$, and by
    $(\rho_n)_{n \in \N}$ a sequence of runs of $T$ such that $\rho_n$ yields
    output $f(x_n)$ over input $f(x)$.

    First, since the set $\Delta$ of transitions of $T$ is finite,
    $\Delta^\omega$ is compact (cf Remark~\ref{rem:infiniteNotCompact}), so
    $(\rho_n)_{n \in \N}$ admits a converging subsequence, so we can assume
    w.l.o.g. that $(x_n)_{n \in \N}$ is such that $(\rho_n)_{n \in \N}$
    converges to some infinite sequence of transitions $\pi$.
    \begin{rem}
      Note however that $\pi$ is not necessarily a run over some concrete data
      word: since transducers are allowed to guess arbitrary data value, we do not
      have the property that after reading input $u$, the configuration of the
      transducer takes its values in $\mathsf{data}(u) \cup \{0\}$, which would
      allow to build a concrete run by extracting subsequences of runs whose
      configurations coincide on longer and longer prefixes. If we disallow
      guessing, such property is restored, which greatly simplifies the proof.
      Indeed, then, we can require that $\sigma$ has the $\star$ property,
      without allowing some wild registers to go astray (namely those that are
      above $r'_k$, which necessarily correspond to data values that have been guessed,
      otherwise their content is at most equal to the one of $r_m$), and having
      two runs that can be instantiated by actual data words allow to extract
      $\sigma$ with a Ramsey argument which is similar to the one used for
      emptiness (see Proposition~\ref{prop:emptinessN}).
    \end{rem}
    Now, let $(E_i)_{i \in \N}$ be the sequence of configurations of $\rho$,
    and, for each $\rho_n$, let $(C_{n,i})_{i \in \N}$ be its corresponding
    sequence of configurations. Then, for all $0 \leq i < j$, let $\tau_{n,i} =
    \tau(C_{n,i})$ and $\sigma_{n,i,j} = \sigma(E_i \uplus C_{n,i}, E_j \uplus
    C_{n,j})$. Since the $\tau_{n,i}$ and $\sigma_{n,i,j}$ take their values in
    a finite set, by using compactness arguments, we can extract two
    subsequences $(\tau'_{n,i})_{n,i \in \N}$ and $(\sigma'_{n,i,j})_{n \in \N,
      0 \leq i < j}$ which respectively converge to $(\tau'_i)_{i \in \N}$ and
    to a family $(\sigma_{i,j})_{0 \leq i < j}$. Formally, we require that for
    all $M \geq 0$, there exists some $N \geq 0$ such that for all $n \geq N$,
    we have that $\tau'_{n,i} = \tau'_i$ and $\sigma'_{n,i,j} = \sigma_{i,j}$
    for all $0 \leq i < j \leq M$ (note that for types $\tau'_{n,i}$, this
    corresponds to convergence for infinite words). By first restricting to
    final configurations in $\rho$ (we know there are infinitely many since
    $\rho$ is accepting), we can assume that all $E_i$ are final. We further
    restrict to $E_i$ which all have the same type $\nu$ (there is at least one
    type which repeats infinitely often). To avoid
    cluttering the notations, we name again $(\rho_n)_{n \in \N}$ the
    corresponding subsequence of runs, $(\tau_{n,i})_{n,i \in \N}$ the
    corresponding types, and $(\sigma_{n,i,j})_{n,i,j}$ their associated family
    of pairwise types. Finally, by applying Ramsey's theorem to $(\tau_i)_{i \in
      \N}$ and $(\sigma_{i,j})_{0 \leq i < j}$, we know that there exists some
    type $\tau$, some pairwise type $\sigma$ and some infinite subset $I
    \subseteq \N$ such that for all $i,j \in I$ such that $i < j$, $\tau_i =
    \tau$ and $\sigma_{i,j} = \sigma$. For simplicity, we reindex the $(E_j)_{j
      \in I}$ and the $(C_{n,j})_{j \in I}$ over $\N$, that we again name
    $(E_j)_{j \in \N}$ and $(C_{n,i})_{i \in \N}$.

    Now, assume by contradiction that $\sigma$ does not have the property
    $\star$ for $X = R_1 \cup R_2^l$. There are two cases, that we treat
    iteratively:
    \begin{itemize}
    \item There exists some $r \in X$ such that $\sigma \Rightarrow r > r'$.
      There are two subcases:
      \begin{itemize}
      \item If there exists such $r \in R_1$, then we get that for all $0 \leq i
        < j$, $E_i(r) > E_j(r)$, which immediately yields an infinite descending
        chain in $\N$, and hence a contradiction.
      \item If there exists such $r \in R_2^l$, then let $r'_M$ be such that
        $\sigma \Rightarrow r \leq r'_M$. Since $r_M \in R_1$, we know that
        $\sigma \Rightarrow r_M \leq r'_M$, otherwise we are back to the
        previous case. Then, let $N \in \N$ be such that $\sigma_{N,i,j} =
        \sigma$ for all $0 \leq i < j \leq B = E_1(r'_M) + 1$. Such $N$
        necessarily exists since we took the $(\rho_n)_{n \in \N}$ such that the
        $(\sigma_{n,i,j})_{0 \leq i < j}$ converges. Thus, $E_1(r'_M) \geq
        C_{N,0}(r) > C_{N,1}(r) > \dots C_{N,B}(r)$, which again yields a
        contradiction.
      \end{itemize}
    \item Assume now that there exists $r,s \in X$ such that $\sigma \Rightarrow
      s = s'$, $\sigma \Rightarrow r \leq s$ but $\sigma \Rightarrow r < r'$.
      There are again two subcases:
      \begin{itemize}
      \item If $s \in R^1$, then we know that for all $i \in \N$, $E_i(s) =
        E_0(s)$. Now, if $r \in R^1$, then we get that $E_0(r) < E_1 (r) <
        \dots$ but for all $i \in \N$, $E_i(r) < E_i(s)$, which leads to a
        contradiction. If $r \in R_2^l$, then let $N$ be such that $\sigma_{N,i,j}
        = \sigma$ for all $0 \leq i < j \leq B = E_0(s) + 1$. We then get a
        strictly increasing chain $C_{N,0}(r) < C_{N,1}(r) < \dots < C_{N,B}(r)$
        of length $B$ which is bounded from above by $B - 1$, which does not
        exist in $\N$.
      \item If $s \in R_2^l$ then let $r'_M \in R_1$ be such that $\sigma
        \Rightarrow s \leq r'_M$. Then, let $B = E_1(r'_M) + 1$, and let $N$ be
        such that $\sigma_{N,i,j} = \sigma$ for all $0 \leq i < j \leq B$. If $r
        \in R^1$ then we have that $E_0(r) < E_1(d) < \dots < E_B(r) \leq E_B(s)
        = E_0(s) \leq E_1(r'_M)$; similarly if $r \in R_2^l$ we get $C_{N,0}(r) <
        C_{N,1}(r) < \dots < C_{N,B}(r) \leq C_{N,B}(s) = C_{N,0}(s) \leq
        E_1(r_M')$. In both cases, this leads to a contradiction.
      \end{itemize}
    \end{itemize}
    As a consequence, $\sigma$ indeed has the property $\star$ for $X = R_1 \cup
    R_2^l$.

    Now, it remains to exhibit a critical pattern from $\rho$ and the (last)
    extracted subsequence $(\rho_n)_{n \in \N}$ and the corresponding inputs
    $(x_n)_{n \in \N}$. Let $k \in \N$ be such that for all $n \in \N$,
    $\size{f(x) \wedge f(x_n)} \leq k$. On the one hand, we have $\rho = E_0
    \xrightarrow{u_0 \mid v_0} E_1 \xrightarrow{u_1 \mid v_1} E_2 \dots$, where
    $x = u_0 u_1 \dots$ and such
    that all $(E_i)_{i > 0}$ are final and they all have the same type. On the
    other hand, each $\rho_n$ can be written $\rho_n = C_{n,0}
    \xrightarrow{u^n_0 \mid v^n_0} C_{n,1} \xrightarrow{u^n_1 \mid v^n_1}
    C_{n,2} \dots$, where for all $j \geq 0, \size{u^n_j} = \size{u_j}$ (the
    configurations $C_{n,i}$ are synchronised with those of $E_i$, as we always
    extracted subsequences in a synchronous way) and $u^n_0 u^n_1 \dots = x_n$.
    Now, let $M$ be such that $\forall n \geq M$, we have $\forall 0 \leq i < j
    \leq B + 3, \sigma_{n,i,j} = \sigma$, where $B = 2k$.
    Finally,
    take some $l \geq M$ such that $\size{x_l \wedge x} \geq \size{u_0 u_1 \dots
      u_{B+1}}$. There are two cases:
    \begin{itemize}
    \item $v_0 \dots v_B \mismatch v^l_0 \dots v^l_B$. Then, we exhibited a
      critical parttern with two runs $E_0 \xrightarrow{u_0 \dots u_B \mid v_0
        \dots v_B} E_{B+1} \xrightarrow{u_{B+1} \mid v_{B+1}} E_{B+2}
      \xrightarrow{u_{B+2} \mid v_{B+2}} E_{B+3}$ along with $C_{l,0} \xrightarrow{u^l_0
        \dots u^l_B \mid v^l_0 \dots v^l_B} C_{l,B+1} \xrightarrow{u^l_{B+1}
        \mid v^l_{B+1}} C_{l,B+2} \xrightarrow{u^l_{B+2} \mid v^l_{B+2}}
      C_{l,B+3}$. First, the outputs indeed mismatch, by hypothesis, so we are
      in case~\ref{itm:critMismatch} of the definition of a critical pattern (see Definition~\ref{def:criticalPattern}).
    \item $v_0 \dots v_B \match v^l_0 \dots v^l_B$. Then, since $v_0 \dots v_B
      \leq f(x)$ and $v^l_0 \dots v^l_B \leq f(x_l)$, and since $\size{f(x)
        \wedge f(x_l)} \leq k$, we get that $\size{v_0 \dots v_B} \leq k$ or
      $\size{v^l_0 \dots v^l_B} \leq k$. Now, there are again two cases:
      \begin{itemize}
      \item There exists some $i \leq B$ such that $v_i = v^l_i = \varepsilon$. Then, we exhibited a critical pattern with two runs $E_0 \xrightarrow{u_0 \dots u_{i-1} \mid v_0 \dots v_{i-1}} E_i \xrightarrow{u_i \mid \varepsilon} E_{i+1} \xrightarrow{u_{i+1} \dots u_m \mid v_{i+1} \dots v_m} E_{m+1}$ and $C_{l,0} \xrightarrow{u^l_0 \dots u^l_{i-1} \mid v^l_0 \dots v^l_{i-1}} C_{l,i} \xrightarrow{u^l_i \mid \varepsilon} C_{l,i+1} \xrightarrow{u^l_{i+1} \dots u^l_m \mid v^l_{i+1} \dots v^l_m} C_{l,m+1}$, where $m$ is such that $v^l_0 \dots v^l_m \mismatch v_0 \dots v_m$ (such $m$ exists since $f(x_n) \mismatch f(x)$). We are then in case~\ref{itm:critTwoEmpty} of the definition of a critical pattern (see Definition~\ref{def:criticalPattern}).
      \item Otherwise, there necessarily exists some $i \leq B$ such that $v_i = \varepsilon$ or $v^l_i = \varepsilon$ since $\size{v_0 \dots v_B} \leq k$ or $\size{v^l_0 \dots v^l_B} \leq k$. We assume that $v_i = \varepsilon$, the reasoning is symmetric if instead $v^l_i = \varepsilon$. Necessarily, $v^l_0 \dots v^l_{i-1} \mismatch f(x)$, otherwise we can find some $i$ such that both $v_i = v^l_i$ and we are back to the previous case. Then, we exhibited a critical pattern with two runs $E_0 \xrightarrow{u_0 \dots u_{i-1} \mid v_0 \dots v_{i-1}} E_i \xrightarrow{u_i \mid \varepsilon} E_{i+1} \xrightarrow{u_{i+1} \dots u_m \mid v_{i+1} \dots v_m} E_{m+1}$ and $C_{l,0} \xrightarrow{u^l_0 \dots u^l_{i-1} \mid v^l_0 \dots v^l_{i-1}} C_{l,i} \xrightarrow{u^l_i \mid v^l_i} C_{l,i+1} \xrightarrow{u^l_{i+1} \dots u^l_m \mid v^l_{i+1} \dots v^l_m} C_{l,m+1}$, where $m$ is such that $v^l_0 \dots v^l_m \mismatch v_0 \dots v_m$ (such $m$ exists since we assumed that $v^l_0 \dots v^l_{i-1} \mismatch f(x)$). We are then in case~\ref{itm:critOneEmpty} of the definition of a critical pattern (see Definition~\ref{def:criticalPattern}).
      \end{itemize}
    \end{itemize}
    Finally, in all the considered cases, the last configuration $D_1 = E_j$ of the first run of the critical pattern (the one which is a prefix of $\rho$) is final, since we extracted the $(E_i)$ so that we only kept the final ones. Also, both $D_1 = E_j$ and $D_2 = C_{l,j}$ are co-reachable in $\N$ since they are configurations of accepting runs in $\N$, which concludes the proof of $(\ref{itm:NfNotCont})~\Rightarrow~(\ref{itm:NfPattern})$.
  \end{proof}
\begin{proof}[Proof of $(\ref{itm:NfPattern})~\Rightarrow~(\ref{itm:NfNotCont})$]

We assume now that $(\ref{itm:NfPattern})$ holds and show that $f$ is not
continuous. Intuitively, following notations of the critical pattern, we will
prove that $f$ is not continuous at some input $x=u'v'\mu(v')\mu(v')\ldots $,
where $u',v'$ elements of $\N^+$, images of $u,v$ by some automorphism.

Thus, let us consider two runs $C_0
\xrightarrow{u|u_1}C_1\xrightarrow{v|v_1}\mu(C_1)$ with $C_1$ final, and $C_0
\xrightarrow{u|u_2}C_2\xrightarrow{v|v_2}\mu(C_2) \xrightarrow{z|w_2}D_2$ with
$D_2$ co-reachable in $\N$. Two cases can occur: either $u_1\mismatch
u_2$, or $v_2=\epsilon$ and $u_1\mismatch u_2w_2$. As in the proof in the
oligomorphic setting, we want to show that it is possible to build from the
first run an infinite run looping forever along (some renaming of) $v$, and from
the second run a family of infinite runs, looping more and more. While this is
directly true in some oligomorphic data domain, as we saw before, iteration in
$\N$ is more tricky.

These runs can be seen as finite runs in the transducer $T\times T$, with twice
as many registers as $T$, which we denote by $R_1$ for the first copy, and $R_2$
for the second. By assumption, $\sigma = \tau((C_1 \uplus C_2) \uplus (C'_1
\uplus C'_2))$ satisfies the $\star$ property for $X = R_1 \cup R_2^l$. If we had
that $\sigma$ satisfies the $\star$ property for $R_1\cup R_2$, then we could
immediately deduce by Proposition~\ref{prop:Qwithstar} that the two loops on $v$
could be iterated infinitely many times. However, the weaker hypothesis we have
will only allow us to show that the loop from $C_1$ can indeed by
$\omega$-iterated, while the loop from $C_2$ can be iterated as many times as we
want, but finitely many times. To this end, we have to take care of registers in
$R_2\setminus R_2^l$ to show that these runs indeed exist.

As we assumed that there is a register $r_m$ storing the maximal data value appearing
in the input word, the definition of the set $R_2^l$ ensures that every register
not in $R_2^l$ has its value coming from a guess along the run $C_0
\xrightarrow{u}C_2$. This is directly linked with the technical difficulty
inherent to the presence of guessing. In particular, this allows us to consider
different runs, in which the input data word is not modified, but the guessed
values are.

As $C_2$ and $\mu(C_2)$ have the same type, and registers not in $R_2^l$ have
``large'' values, their behaviour along the cycle (w.r.t. types) $C_2
\xrightarrow{v}\mu(C_2)$ is very restricted: they can only increase or decrease
at each step. In particular, if one starts from some configuration $C_2$ in
which values of registers not in $R_2^l$ are very far apart, this will allow to
iterate the cycle several times. More precisely, for any integer $n$, we can
compute an integer $n'$ and values for the guesses of registers not in $R_2^l$
which are pairwise $n'$ far apart, ensuring that the cycle can be repeated at
least $n$ times.

Thus, the previous reasoning allows us to replicate the proof of
Proposition~\ref{prop:Qwithstar} to show that there exist a word $v'\in\N^*$ and
an automorphism $\alpha$ preserving $\N$, such that:
\begin{itemize}
\item $C'_1 \xrightarrow{x|y}$, with $\tau(C'_1)=\tau(C_1)$ and $x =
  v'\alpha(v')\alpha^2(v')\ldots$
\item there exists $C'_2$ and, for all $n$, some $C'_2(n)$ with $C'_2(n)_{|R_2^l}
  = C'_{2|R_2^l}$, $\tau(C'_2(n)) = \tau(C_2)$, and $C'_2(n) \xrightarrow{v'}
  \alpha(C'_2(n)) \ldots \xrightarrow{\alpha^n(v')} \alpha^{n+1}(C'_2(n))$
\end{itemize}

\noindent
Using previous remark on guessed values, together with
Proposition~\ref{prop:glueN} applied on $C_0\xrightarrow{u}C_1$ and
$C_0\xrightarrow{u}C_2$, and Proposition~\ref{prop:fromQtoN} applied on
$\mu(C_2)\xrightarrow{z|w_2} D_2\xrightarrow{z_l}$, we end up with:
\begin{itemize}
\item a run $C_0 \xrightarrow{u'|u'_1}E_1\xrightarrow{x'|y'} $, with $x' =
  v''\alpha(v'')\alpha^2(v'')\ldots$
\item for every $n$, there is a run $C_0 \xrightarrow{u'|u'_2}E_2
  \xrightarrow{v''} \alpha(E_2) \ldots \xrightarrow{\alpha^n(v'')}
  \alpha^{n+1}(E_2)\xrightarrow{z'|w'_2} D'_2 \xrightarrow{z'_l} $
\item $u'_1 \mismatch u'_2$ or $u'_1 \mismatch u'_2w'_2$
\end{itemize}
Hence, this proves the non functionality as we have a sequence of runs whose
inputs converge towards $u'x'$ but whose outputs mismatch.
\end{proof}

\begin{thm}\label{thm:contN}
Continuity of \nrt on $\N$ is $\PSpace$-c.
\end{thm}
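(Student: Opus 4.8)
The plan is to read off the result from the characterisation of non-continuity established in the previous proposition, combined with a small-witness argument and on-the-fly co-reachability tests, to get membership in \PSpace; \PSpace-hardness then follows by the same reduction used for Theorem~\ref{thm:decoligo} and Theorem~\ref{thm:decUnifContN}. So I would split the proof into an upper bound and a lower bound.

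For the upper bound, I would first argue that the critical pattern of the characterisation admits a \emph{small} witness, namely that $u,v,w,z$ may be taken of length at most $P'(\rn(4k),|Q|,L)$ for the polynomial $P'$ of Claim~\ref{clm:smallmis}, where $\rn$ is the Ryll-Nardzewski function of $\Q_+$; since $\Q_+$ is polynomially decidable, $\rn(4k)$ is exponential and $\log\rn(4k)$ is polynomial, so such a bound is representable in polynomial space. The point is that loop removal can be performed exactly as in Claim~\ref{clm:smallmis}: colour the prefix, the $v$-segment and the suffixes with three distinct colours and apply Lemma~\ref{lem:smallmis} to the pair of runs, so that only synchronous loops internal to a single monochromatic segment are removed while the relevant mismatch (case a, b or c of Definition~\ref{def:criticalPattern}) is preserved. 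The two \N-specific side conditions survive as well: removing a synchronous loop in the $v$-segment alters the pair $(C_1 \uplus C_2, C'_1 \uplus C'_2)$ only by a common automorphism of $\Q_+$, hence leaves $\sigma = \tau((C_1 \uplus C_2) \uplus (C'_1 \uplus C'_2))$ unchanged and thus preserves the $\star$ property for $X = R_1 \cup R_2^l$; and co-reachability in $\N$ of a configuration depends only on its $\Q$-type (by Proposition~\ref{prop:emptinessN} applied with that configuration as initial state, reachability in $\Q_+$ being type-determined), so it too is invariant under automorphisms.

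Given the small witness, I would then run an \NPSpace\ (hence, by Savitch, \PSpace) procedure that symbolically simulates the two runs of $T \times T$ over $\Q$-types, exactly as in the emptiness and functionality algorithms (Corollary~\ref{cor:emptinessN}, Corollary~\ref{thm:funN}): it guesses and updates the pairwise $\Q$-type of the current configuration pair using the \PSpace\ satisfiability of \fo\ over $\Q_+$, bounds the total number of steps by a counter of polynomial bit-length, and uses a constant number of further polynomially bounded counters together with a few stored data values to certify that the outputs mismatch at the same position. At the appropriate moments it checks that $C_1$ is final, that $\sigma$ satisfies the $\star$ property for $X = R_1 \cup R_2^l$ (a polynomial-space test on the type, from which $R_2^l$ is read off), and that the guessed types of $D_1$ and $D_2$ are co-reachable in $\N$; this last test is carried out on the fly in \PSpace\ by invoking the emptiness procedure of Corollary~\ref{cor:emptinessN} from a configuration of the relevant type. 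Since every stored object ($\Q$-types over $O(k)$ registers and a constant number of counters of polynomial bit-length) occupies polynomial space, the procedure runs in \PSpace.

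Finally, \PSpace-hardness is obtained by reduction from emptiness of register automata over $(\N,\{=\})$, which is \PSpace-complete~\cite{DBLP:journals/tocl/DemriL09}; note that $(\N,\{=\})$ is simulated by $(\N,\{<\})$ since equality is \fo-definable from the order. Fixing once and for all some \nrt-definable function $h$ over $\N$ that is not continuous, the reduction maps an \nra\ $A$ to the \nrt\ realising $w\#x \mapsto h(x)$ for $w \in L(A)$ and $x \in \N^\omega$: its domain is empty when $L(A) = \varnothing$, in which case the function is vacuously continuous, and otherwise it contains a copy of $h$ and is therefore not continuous. The main obstacle in the whole argument is the small-witness step, and specifically verifying that the \N-specific constraints ($\star$ and co-reachability in $\N$) survive loop removal; once the witness is known to be polynomial and these constraints are recognised as $\Q$-type-level properties, the simulation is a routine adaptation of the earlier \PSpace\ algorithms.
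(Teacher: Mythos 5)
Your proposal is correct and follows essentially the same route as the paper: adapt the small critical pattern property (Claim~\ref{clm:smallmis}) to $\N$ by observing that loop removal preserves the extremal configurations and hence the type $\sigma$ (and thus the $\star$ condition and co-reachability in $\N$, both being $\Q$-type-level properties), then guess a small pattern and verify it symbolically in \PSpace\ with counters and on-the-fly co-reachability tests, with hardness by the same reduction from emptiness of register automata over $(\N,\{=\})$. Your write-up is in fact more explicit than the paper's terse argument on the two $\N$-specific side conditions, but the ideas coincide.
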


\begin{proof}
Small critical pattern property (Claim~\ref{clm:smallmis}) on oligomorphic data
can easily be adapted to $\N$. Indeed, the only difference lies in the
loops on $v$, but is is important to notice that loop removal
used to reduce length of runs preserves the extremal configurations
(see Proposition~\ref{prop:loop}), hence it preserves the type of the
global run, here
the type $\sigma= \tau((C_1 \uplus C_2) \uplus (C'_1 \uplus C'_2))$.

Then, one can proceed similarly by guessing such a small critical
pattern, and checking mismatches with counters.
Note that we have to verify in addition that $D_2$ is co-reachable
in $\N$, but again, as already detailed in previous proofs, this property
can be decided on-the-fly.

Finally, the \PSpace lower bound can again be obtained by a reduction to the emptiness problem of register automata over $(\N, \{=\})$, which is \PSpace-c~\cite{DBLP:journals/tocl/DemriL09}.
\end{proof}

\subsection{Transfer result}
We have extended our result to one non-oligomorphic structure, namely $(\N,\set{<,0})$, which is a substructure of $(\Q,\set{<,0})$. If we want to extend the result to other substructures of $(\Q,\set{<})$, say $(\Z,\set{<})$, we could do the same work again and study the properties of iterable loops in $\Z$. However, a simpler way is to observe that $(\Z,\set{<})$ can be simulated by $\N$, using two copies of the structure.
We show a quite simple yet powerful result: given a structure $\D$, if the structure $\D'$ can be defined as a quantifier-free interpretation over $\D$ then the problems of emptiness, functionality, continuity, \etc reduce to the same problems over $\D'$.

A \emph{quantifier-free interpretation} of dimension $l$ with signature $\Gamma$ over a structure $(\D,\Sigma)$ is given by $\qf[\Sigma]$ quantifier-free formulas: a formula $\phi_{\text{domain}}(x_1,\ldots,x_l)$, for each constant symbol $\mathsf c$ of $\Gamma$ a formula $\phi_{\mathsf c}(x_1,\ldots,x_l)$ and for each relation symbol $R$ of $\Gamma$ of arity $r$ a formula $\phi_{R}(x_1^1,\ldots,x_l^1,\ldots,x_1^r,\ldots,x_l^r)$.
The structure $\D'$ is defined by the following:
a domain $D'=\set{(d_1,\ldots,d_l)|\ \D\models \phi_{\text{domain}}(d_1,\ldots,d_l)}$; an interpretation $(d_1,\ldots,d_l)\in D'$ for each constant symbol $\mathsf c$ such that $\D\models \phi_{\mathsf c}(d_1,\ldots,d_l)$ (we assume that there is a unique possible tuple satisfying the formula, which can be syntactically ensured); an interpretation $R^\D=\set{(x_1^1,\ldots,x_l^1,\ldots,x_1^r,\ldots,x_l^r)|\ \D\models \phi_{R}(x_1^1,\ldots,x_l^1,\ldots,x_1^r,\ldots,x_l^r)}$ for each relation symbol $R$.

\begin{thm}%
\label{thm:transfer}
Let $\D'$ be a quantifier-free interpretation over $\D$. Let $P$ denote a decision problem among non-emptiness, functionality, continuity, Cauchy continuity or uniform continuity.
There is a \PSpace reduction from $P$ over $\D'$ to $P$ over $\D$.
\end{thm}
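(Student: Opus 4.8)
The plan is to reduce $P$ over $\D'$ to $P$ over $\D$ by \emph{unfolding} each $\D'$-value into its $l$ coordinates in $\D$. Concretely, let $\eta : \D'^\infty \to \D^\infty$ be the letterwise expansion which replaces a $\D'$-letter $(d_1,\dots,d_l)$ by the $\D$-block $d_1 d_2 \cdots d_l$; then $\eta$ is injective and its image $\mathrm{im}(\eta)$ is exactly the set of $\D$-words whose every length-$l$ block satisfies $\phi_{\text{domain}}$. Given an \nrt $T'$ over $\D'$ with states $Q'$ and registers $R'$, I would build an \nrt $T$ over $\D$ whose registers are $l$ disjoint copies $\set{r^1,\dots,r^l \mid r \in R'}$ of $R'$, together with $l$ auxiliary registers $t^1,\dots,t^l$ used to read one incoming block. $T$ simulates $T'$ in rounds of $l$ micro-steps: the $j$-th micro-step reads the $j$-th $\D$-letter of the current block into $t^j$, the last micro-step additionally tests $\phi_{\text{domain}}(t^1,\dots,t^l)$ (so that $\dom(T) \subseteq \mathrm{im}(\eta)$) and fires the simulated $T'$-transition. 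The states of $T$ are pairs $(q,j)$ recording the current $T'$-state and micro-step index, so $\size{Q} = \size{Q'}\cdot l$ and $\size{R} = \size{R'}\cdot l + l$.

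Next I would translate transitions. A transition $(q,\phi,q',\regop,v)$ of $T'$ is simulated as follows. Its test $\phi$, a quantifier-free formula over $\Gamma$ with variables in $R' \uplus \set{\indata}$, is translated into a test $\widehat\phi$ over $\Sigma$ by replacing every atom $R(\bar s)$ with the formula $\phi_R$ of the interpretation, each $\D'$-register $r$ being substituted by its $l$ coordinates $r^1,\dots,r^l$ and the input letter by $t^1,\dots,t^l$; since the interpretation is \emph{quantifier-free}, $\widehat\phi$ is again a quantifier-free $\Sigma$-formula, hence a legal test of $T$ (this is precisely where quantifier-freeness is used). The operation $\regop(r) = \setto$ is realised by copying $t^1,\dots,t^l$ into $r^1,\dots,r^l$; $\keep$ keeps all $l$ copies; and $\guess$ guesses $l$ fresh values into $r^1,\dots,r^l$ while adding the test $\phi_{\text{domain}}(r^1,\dots,r^l)$, so that only valid $\D'$-values are guessed. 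An output $v = r_{i_1}\cdots r_{i_m} \in R'^*$ is expanded to $r_{i_1}^1\cdots r_{i_1}^l\cdots r_{i_m}^1\cdots r_{i_m}^l$, so that $T$ outputs exactly $\eta$ of the output of $T'$. For initialisation, each initial register of $T'$ holds a $\Gamma$-constant $\mathsf c$ interpreted in $\D'$ as the unique tuple $(d^{\mathsf c}_1,\dots,d^{\mathsf c}_l)$ solving $\phi_{\mathsf c}$; as $\phi_{\mathsf c}$ is quantifier-free with a unique solution, this tuple is fixed by every $\Sigma$-automorphism, so we may harmlessly add its coordinates as fresh constants of $\D$ (exactly as with $\bot$, this affects none of the considered problems) and initialise $r^1,\dots,r^l$ accordingly. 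Finally, a configuration $((q,l),\cdot)$ is declared final iff $q \in F'$, matching the B\"uchi condition of $T'$. All these modifications blow up the size polynomially in $\size{T'}$ and the interpretation, so $T$ is computable from $T'$ in polynomial space.

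For correctness, the construction yields a step-by-step correspondence between accepting runs of $T'$ on $x \in \D'^\omega$ and accepting runs of $T$ on $\eta(x)$, with outputs related by $\eta$; hence $\rel{T} = \set{(\eta(x),\eta(y)) \mid (x,y) \in \rel{T'}}$ and $\dom(T) = \eta(\dom(T'))$. Non-emptiness and functionality transfer immediately, the latter because $\eta$ is injective. For the three metric notions, the key point is that $\eta$ is a \emph{uniform} homeomorphism onto the closed set $\mathrm{im}(\eta)$: for all $x \neq y$, $l\cdot\size{x \wedge y} \le \size{\eta(x)\wedge\eta(y)} < l\cdot(\size{x\wedge y}+1)$, so both $\eta$ and $\eta^{-1}$ are uniformly continuous. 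Writing $f = \rel{T'}$ and $g = \rel{T} = \eta \circ f \circ \eta^{-1}$, conjugation by a uniform homeomorphism preserves continuity at a point, uniform continuity, and the property of mapping Cauchy sequences to Cauchy sequences; hence $g$ is continuous (resp.\ uniformly continuous, resp.\ Cauchy continuous) iff $f$ is. Thus $T$ has property $P$ iff $T'$ does, and since $T' \mapsto T$ is polynomial it is the desired \PSpace reduction.

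The constructions are essentially routine bookkeeping; the two genuinely delicate points are (i) checking that the test translation remains \emph{within} the quantifier-free fragment — which fails for a general first-order interpretation and is exactly why the hypothesis restricts to quantifier-free interpretations — and (ii) verifying that all five properties are stable under conjugation by $\eta$, i.e.\ that the $l$-to-$1$ block expansion preserves convergence, Cauchyness and uniform continuity. I expect (ii) to be the main obstacle: one must make the correspondence uniform in the factor $l$, which is precisely what lets the single map $\eta$ work simultaneously for non-emptiness, functionality and the three continuity notions.
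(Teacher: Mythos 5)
Your proposal is correct and follows essentially the same route as the paper: encode $\D'$-words as $\D$-words by length-$l$ block expansion, simulate the given transducer with $l$ copies of each register plus auxiliary registers reading one block per round, and translate tests by substituting the (quantifier-free) interpretation formulas for the predicates. You additionally spell out details the paper only asserts — notably that the five properties transfer because the block encoding is a uniform homeomorphism onto its image, and the handling of guessing, outputs and constants — which is a welcome elaboration rather than a different argument.
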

\begin{proof}
Let $R\subseteq \D'^\omega\times \D'^\omega$ be given by an \nrt $T$.
If we assume that $\D'$ is an $l$-dimension interpretation of $\D$, then we can view $R$ as a relation $P\subseteq (\D^l)^\omega\times (\D^l)^\omega$. Note that $P$ is empty (resp.~functional, continuous, Cauchy continuous, uniformly continuous) if and only if $R$ is.

Moreover, since $\D'$ is an interpretation, one can construct an \nrt $S$ which realizes $P$. It uses $l$ registers for every register of $T$ plus $l-1$ registers to store the input. In its first $l-1$ transitions it just stores the input and every $l$ transition, it simulates a transition of $T$, just by substituting the formulas of the interpretation for the predicates. As usual, we do not construct $S$ explicitly, but we are able to simulate it using only polynomial space.
\end{proof}

As a direct corollary we can, in particular, transfer our result over $(\N,\{<,0\})$ to $(\Z,\{<,0\})$:
\begin{cor}
The problems of non-emptiness, functionality, continuity, Cauchy continuity, uniform continuity over $(\Z,\{<,0\})$ are in \PSpace.
\end{cor}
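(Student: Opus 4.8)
The plan is to realise $(\Z,\set{<,0})$ as a quantifier-free interpretation of dimension $2$ over $(\N,\set{<,0})$, and then to invoke Theorem~\ref{thm:transfer} together with the \PSpace membership of each of the listed problems over $\N$, which is already established in this section. Since \PSpace is closed under composition with \PSpace reductions, this immediately yields the statement.

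First I would exhibit the interpretation. I encode an integer $n$ by a canonical pair in $\N^2$: a nonnegative integer $n \geq 0$ by $(n,0)$ and a negative integer $n < 0$ by $(0,-n)$, so that a pair $(p,q)$ denotes the integer $p-q$ and every integer has a unique representative in which at least one coordinate equals $0$. Writing $p=q$ for the quantifier-free formula $\neg(p<q)\wedge\neg(q<p)$, the defining formulas are
\begin{align*}
  \phi_{\text{domain}}(p,q) &\;=\; (p=0)\vee(q=0), \\
  \phi_{0}(p,q) &\;=\; (p=0)\wedge(q=0),
\end{align*}
and, for the order between a first pair $(p,q)$ and a second pair $(r,s)$,
\[
  \phi_{<}(p,q,r,s)\;=\;\big(p=0\wedge 0<q\wedge s=0\big)\vee\big(q=0\wedge s=0\wedge p<r\big)\vee\big(p=0\wedge r=0\wedge s<q\big).
\]
All three formulas are quantifier-free over the signature $\set{<,0}$, and $\phi_0$ is satisfied only by $(0,0)$, as required for a constant symbol of the interpreted structure.

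The only real content of the proof is checking that $\phi_{<}$ correctly captures $p-q < r-s$ on the domain cut out by $\phi_{\text{domain}}$, which I expect to be the main (though routine) obstacle. This is a short case analysis on the signs of the two encoded integers: the first disjunct handles a strictly negative value lying below a nonnegative one, the second handles two nonnegative values (where the order reduces to $p<r$), and the third handles two nonpositive values (where $-q < -s$ amounts to $s<q$). One verifies that these cases are exhaustive and introduce no false positives; combined with the uniqueness of the canonical representatives, this shows that the resulting quantifier-free interpretation is isomorphic to $(\Z,\set{<,0})$.

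It then remains to chain the reductions. By Theorem~\ref{thm:transfer}, for each problem $P$ among non-emptiness, functionality, continuity, Cauchy continuity and uniform continuity, there is a \PSpace reduction from $P$ over $(\Z,\set{<,0})$ to $P$ over $(\N,\set{<,0})$. Each such problem is in \PSpace over $\N$: non-emptiness by Corollary~\ref{cor:emptinessN}, functionality by Corollary~\ref{thm:funN}, continuity by Theorem~\ref{thm:contN}, uniform continuity by Theorem~\ref{thm:decUnifContN}, and Cauchy continuity by the analogous \PSpace decision procedure for $\N$. Composing a \PSpace reduction with a \PSpace decision procedure remains in \PSpace, which concludes the proof.
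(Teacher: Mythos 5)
Your proof is correct and follows essentially the same route as the paper: exhibit a two-dimensional quantifier-free interpretation of $(\Z,\set{<,0})$ over $(\N,\set{<,0})$ (yours mirrors the paper's encoding, putting nonnegative integers in the first coordinate instead of the second) and conclude via Theorem~\ref{thm:transfer} and the \PSpace{} results over $\N$. The extra case analysis you give for $\phi_<$ is the same routine verification the paper leaves implicit.
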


\begin{proof}
From Theorem~\ref{thm:transfer}, using the fact that $(\Z,\{<,0\})$ is given by the following two-dimensional interpretation over $(\N,\{<,0\})$:
$\phi_{\text{domain}}:=(x=0)\vee (y=0)$, $\phi_0:=(x=y=0), \phi_<:=(x_1=x_2=0\wedge(y_1<y_2))\vee(y_1=y_2=0\wedge(x_1>x_2))\vee (x_2=y_1=0\wedge\neg(x_1=y_2=0))$.
\end{proof}

\begin{rem}
Of course our transfer result applies to many other substructures of $(\Q,\{<,0\})$, such as the ordinals $\omega +\omega$, $\omega\times \omega$, \etc.
\end{rem}

\begin{rem}
  Considering first-order interpretation (i.e.\ where $\phi_{\text{domain}}$, etc are first-order formulas instead of quantifier-free ones) would yield too much expressive power. Indeed, $(\N,\set{+1,0})$, where $+1(x,y)$ holds whenever $y = x+1$, can easily be defined as a (one-dimensional) first-order interpretation of $(\N,\set{<,0})$, by letting $\phi_{+1}(x,y) = y > x \wedge \neg \exists z, x < z < y$. However, over such a domain, register automata coincide with counter machines, which is a Turing-complete model (with two or more counters); in particular, emptiness of such register automata is easily shown undecidable (with two or more registers).
\end{rem}

\section*{Future work}
We have given tight complexity results in a family of oligomorphic structures, namely polynomially decidable ones. An example of an exponentially decidable oligomorphic structure is the one of finite bitvectors with bitwise xor operation. Representing the type of $k$ elements may require exponential sized formulas (for example stating that a family of $k$ elements is free, \ie any non-trivial sum is non-zero). The same kind of proof would give \ExpSpace algorithms over this particular data set (for the transducer problems we have been studying). One could try to classify the different structures based on the complexity of solving these kinds of problems.

We have been able to show decidability of several transducer problems over the data set $\N$ with the linear order. This was done using two properties: (1) that $\N$ is a substructure of $\Q$ and~(2) that we were able to characterize the iterable loops in $\N$. Moreover we can transfer the result to other substructures of $\Q$, \eg $\Z$.
One possible extension would be to investigate data sets which have a tree-like structure, \eg the infinite binary tree. There exists a tree-like oligomorphic structure, of which the binary tree is a substructure. Studying the iterable loops of the binary tree may yield a similar result as in the linear order case. But, as it turns out, the notion of quantifier-free interpretation can directly yield decidability, as the binary tree is a quantifier-free interpretation of $\N$, thanks to a result of Demri and Deters~\cite[Section 3]{DBLP:journals/logcom/DemriD16}. Indeed, they show that words over a finite alphabet of size $k$ with the prefix relation (which corresponds to a $k$-ary tree) can be encoded in $\N$. However, the question remains open for non-tree-like structures that are substructures of an oligomorphic structure, for which we might be able to characterize iterable loops.

There are several classical ways of extending synthesis results, for
instance considering larger classes of specifications, larger classes
of implementations or both. In particular, an interesting direction is
to consider non-functional specifications. As mentioned in
Introduction however, and as a motivation for studying the functional
case,
enlarging both (non-functional) specifications and implementations to an asynchronous setting leads to
undecidability. Indeed, already in the finite alphabet setting, the
synthesis problem of deterministic transducers over $\omega$-words
from specifications given by non-deterministic
transducers is undecidable~\cite{CarayolL14}. A simple adaptation of
the proof of~\cite{CarayolL14} allows to show that in this finite
alphabet setting, enlarging the class of implementations to any
computable function also yields an undecidable synthesis problem. An
interesting case however is yet unexplored already in the finite
alphabet setting: given a synchronous specifications, as an
$\omega$-automaton, is to possible to synthesise a computable function
realizing it? In~\cite{DBLP:journals/corr/abs-1209-0800,DBLP:conf/csl/FridmanLZ11},
this question has been shown to be decidable for specifications with
\emph{total} input domain (any input word has at least one correct
output by the specification). More precisely, it is shown that
realizability by a continuous function is decidable, but it turns out
that the synthesised function is definable by a
deterministic transducer (hence computable). When the domain of the
specification is partial, the situation changes drastically:
deterministic transducers may not suffice to realize a
specification realizable by a computable function. This can be seen by considering the (partial) function $g$ of
Introduction, seen as a specification and casted to a finite alphabet
$\{a,b,c\}$: it is not computable by a deterministic transducer, since
it requires an unbounded amount of memory to compute this function.

\bibliographystyle{alphaurl}
\bibliography{Bibliography}

\begin{thebibliography}{CHVB18}

\bibitem[BKL14]{DBLP:journals/corr/BojanczykKL14}
Mikolaj Bojanczyk, Bartek Klin, and Slawomir Lasota.
\newblock Automata theory in nominal sets.
\newblock {\em Log. Methods Comput. Sci.}, 10(3), 2014.

\bibitem[Boj19]{Bojanczyk19}
Miko{\l}aj Boja{\' n}czyk.
\newblock {\em Atom Book}.
\newblock 2019.
\newblock URL: \url{https://www.mimuw.edu.pl/~bojan/paper/atom-book}.

\bibitem[CHVB18]{DBLP:reference/mc/2018}
Edmund~M. Clarke, Thomas~A. Henzinger, Helmut Veith, and Roderick Bloem,
  editors.
\newblock {\em Handbook of Model Checking}.
\newblock Springer, 2018.
\newblock \href {https://doi.org/10.1007/978-3-319-10575-8}
  {\path{doi:10.1007/978-3-319-10575-8}}.

\bibitem[CL14]{CarayolL14}
Arnaud Carayol and Christof L{\"o}ding.
\newblock {Uniformization in Automata Theory}.
\newblock In {\em Proceedings of the 14th Congress of Logic, Methodology and
  Philosophy of Science Nancy, July 19-26, 2011}, pages 153--178. London:
  College Publications, 2014.

\bibitem[DD16]{DBLP:journals/logcom/DemriD16}
St{\'{e}}phane Demri and Morgan Deters.
\newblock Temporal logics on strings with prefix relation.
\newblock {\em J. Log. Comput.}, 26(3):989--1017, 2016.
\newblock \href {https://doi.org/10.1093/logcom/exv028}
  {\path{doi:10.1093/logcom/exv028}}.

\bibitem[DFKL20]{DBLP:conf/concur/DaveFKL20}
Vrunda Dave, Emmanuel Filiot, Shankara~Narayanan Krishna, and Nathan Lhote.
\newblock Synthesis of computable regular functions of infinite words.
\newblock In Igor Konnov and Laura Kov{\'{a}}cs, editors, {\em 31st
  International Conference on Concurrency Theory, {CONCUR} 2020, September 1-4,
  2020, Vienna, Austria (Virtual Conference)}, volume 171 of {\em LIPIcs},
  pages 43:1--43:17. Schloss Dagstuhl - Leibniz-Zentrum f{\"{u}}r Informatik,
  2020.

\bibitem[DL09]{DBLP:journals/tocl/DemriL09}
St{\'{e}}phane Demri and Ranko Lazic.
\newblock {LTL} with the freeze quantifier and register automata.
\newblock {\em {ACM} Trans. Comput. Log.}, 10(3):16:1--16:30, 2009.
\newblock \href {https://doi.org/10.1145/1507244.1507246}
  {\path{doi:10.1145/1507244.1507246}}.

\bibitem[EFK21]{STACS21}
L{\'{e}}o Exibard, Emmanuel Filiot, and Ayrat Khalimov.
\newblock Church synthesis on register automata over infinite ordered data
  domains.
\newblock {\em STACS}, 2021.
\newblock URL: \url{http://arxiv.org/abs/1905.03538}.

\bibitem[EFR19]{DBLP:conf/concur/ExibardFR19}
L{\'{e}}o Exibard, Emmanuel Filiot, and Pierre{-}Alain Reynier.
\newblock Synthesis of data word transducers.
\newblock In {\em 30th International Conference on Concurrency Theory, {CONCUR}
  2019, August 27-30, 2019, Amsterdam, the Netherlands.}, pages 24:1--24:15,
  2019.
\newblock \href {https://doi.org/10.4230/LIPIcs.CONCUR.2019.24}
  {\path{doi:10.4230/LIPIcs.CONCUR.2019.24}}.

\bibitem[EFR20]{DBLP:conf/fossacs/ExibardFR20}
L{\'{e}}o Exibard, Emmanuel Filiot, and Pierre{-}Alain Reynier.
\newblock On computability of data word functions defined by transducers.
\newblock In Jean Goubault{-}Larrecq and Barbara K{\"{o}}nig, editors, {\em
  Foundations of Software Science and Computation Structures - 23rd
  International Conference, {FOSSACS} 2020, Held as Part of the European Joint
  Conferences on Theory and Practice of Software, {ETAPS} 2020, Dublin,
  Ireland, April 25-30, 2020, Proceedings}, volume 12077 of {\em Lecture Notes
  in Computer Science}, pages 217--236. Springer, 2020.

\bibitem[ESKG14]{Ehlers:2014:SI:2961203.2961226}
R\"{u}diger Ehlers, Sanjit~A. Seshia, and Hadas Kress-Gazit.
\newblock Synthesis with identifiers.
\newblock In {\em Proceedings of the 15th International Conference on
  Verification, Model Checking, and Abstract Interpretation - Volume 8318},
  VMCAI 2014, pages 415--433, 2014.

\bibitem[Exi21]{exibard:tel-03409602}
L{\'e}o Exibard.
\newblock {\em {Automatic Synthesis of Systems with Data}}.
\newblock Theses, {Aix Marseille Universit{\'e} (AMU) ; Universit{\'e} libre de
  Bruxelles (ULB)}, September 2021.
\newblock URL: \url{https://hal.archives-ouvertes.fr/tel-03409602}.

\bibitem[FJLW16]{DBLP:conf/icalp/FiliotJLW16}
Emmanuel Filiot, Isma{\"{e}}l Jecker, Christof L{\"{o}}ding, and Sarah Winter.
\newblock On equivalence and uniformisation problems for finite transducers.
\newblock In {\em 43rd International Colloquium on Automata, Languages, and
  Programming, ICALP 2016, July 11-15, Rome, Italy}, pages 125:1--125:14, 2016.
\newblock \href {https://doi.org/10.4230/LIPIcs.ICALP.2016.125}
  {\path{doi:10.4230/LIPIcs.ICALP.2016.125}}.

\bibitem[FLZ11]{DBLP:conf/csl/FridmanLZ11}
Wladimir Fridman, Christof L{\"{o}}ding, and Martin Zimmermann.
\newblock Degrees of lookahead in context-free infinite games.
\newblock In Marc Bezem, editor, {\em Computer Science Logic, 25th
  International Workshop / 20th Annual Conference of the EACSL, {CSL} 2011,
  September 12-15, 2011, Bergen, Norway, Proceedings}, volume~12 of {\em
  LIPIcs}, pages 264--276. Schloss Dagstuhl - Leibniz-Zentrum f{\"{u}}r
  Informatik, 2011.

\bibitem[HKT12]{DBLP:journals/corr/abs-1209-0800}
Michael Holtmann, Lukasz Kaiser, and Wolfgang Thomas.
\newblock Degrees of lookahead in regular infinite games.
\newblock {\em Logical Methods in Computer Science}, 8(3), 2012.
\newblock \href {https://doi.org/10.2168/LMCS-8(3:24)2012}
  {\path{doi:10.2168/LMCS-8(3:24)2012}}.

\bibitem[JL69]{BuLa69}
{J.R. B\"{u}chi} and {L.H. Landweber}.
\newblock Solving sequential conditions finite-state strategies.
\newblock {\em Transactions of the American Mathematical Society},
  138:295--311, 1969.

\bibitem[KK19]{DBLP:conf/concur/KhalimovK19}
Ayrat Khalimov and Orna Kupferman.
\newblock Register-bounded synthesis.
\newblock In {\em 30th International Conference on Concurrency Theory, {CONCUR}
  2019, August 27-30, 2019, Amsterdam, the Netherlands.}, pages 25:1--25:16,
  2019.
\newblock \href {https://doi.org/10.4230/LIPIcs.CONCUR.2019.25}
  {\path{doi:10.4230/LIPIcs.CONCUR.2019.25}}.

\bibitem[KMB18]{DBLP:conf/atva/KhalimovMB18}
Ayrat Khalimov, Benedikt Maderbacher, and Roderick Bloem.
\newblock Bounded synthesis of register transducers.
\newblock In {\em Automated Technology for Verification and Analysis, 16th
  International Symposium, ATVA 2018, Los Angeles, October 7-10, 2018.
  Proceedings}, 2018.

\bibitem[KZ10]{DBLP:journals/ijfcs/KaminskiZ10}
Michael Kaminski and Daniel Zeitlin.
\newblock Finite-memory automata with non-deterministic reassignment.
\newblock {\em Int. J. Found. Comput. Sci.}, 21(5):741--760, 2010.

\bibitem[MH94]{MajewskiH94}
Bohdan~S. Majewski and George Havas.
\newblock The complexity of greatest common divisor computations.
\newblock In Leonard~M. Adleman and Ming{-}Deh~A. Huang, editors, {\em
  Algorithmic Number Theory, First International Symposium, ANTS-I, Ithaca, NY,
  USA, May 6-9, 1994, Proceedings}, volume 877 of {\em Lecture Notes in
  Computer Science}, pages 184--193. Springer, 1994.
\newblock \href {https://doi.org/10.1007/3-540-58691-1_56}
  {\path{doi:10.1007/3-540-58691-1_56}}.

\bibitem[Pri02]{DBLP:journals/tcs/Prieur02}
Christophe Prieur.
\newblock How to decide continuity of rational functions on infinite words.
\newblock {\em Theor. Comput. Sci.}, 276(1-2):445--447, 2002.
\newblock \href {https://doi.org/10.1016/S0304-3975(01)00307-3}
  {\path{doi:10.1016/S0304-3975(01)00307-3}}.

\bibitem[WZ20]{DBLP:journals/iandc/WinterZ20}
Sarah Winter and Martin Zimmermann.
\newblock Finite-state strategies in delay games.
\newblock {\em Inf. Comput.}, 272:104500, 2020.

\end{thebibliography}

\appendix

\section{Proof of Lemma~\ref{lem:bezout}}%
\label{sec:bezout}
In order to show the lemma we use the result from~\cite[Theorem 9]{MajewskiH94}:
\begin{thm}%
  \label{thm:bezout}
  Let $p_1,\ldots,p_k\in \Z{\setminus}{\set{0}}$ be non-zero integers.
  Then there exist integers $z_1,\ldots, z_k \leq \max(|p_1|,\ldots,|p_k|)$ such that:
  \[z_1p_1+\ldots +z_{k}p_k=\gcd(p_1,\ldots,p_k)\]
\end{thm}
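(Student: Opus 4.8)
The plan is to prove the bounded Bézout identity directly, combining the \emph{balanced} extended Euclidean algorithm with a coefficient-reduction step, rather than treating it as a black box. First I would normalise the instance: dividing every $p_i$ by $g := \gcd(p_1,\ldots,p_k)$ yields integers with gcd $1$ and target value $1$, and since this only divides $\max_i|p_i|$ by $g \ge 1$, any solution $z$ with $|z_i| \le \max_i|p_i|$ for the normalised instance (note $\sum_i z_i (p_i/g) = 1$ gives $\sum_i z_i p_i = g = \gcd$) is \emph{a fortiori} a solution satisfying the stated one-sided bound $z_i \le \max_i|p_i|$ for the original tuple. So I may assume $\gcd(p_1,\ldots,p_k)=1$, and I write $M = \max_i|p_i|$, reordering so that $|p_1| = M$.

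The base case $k=2$ is the balanced extended Euclidean algorithm: running Euclid on $|p_1|\ge|p_2|$ while always choosing remainders in the symmetric range $(-|p_2|/2,\,|p_2|/2]$ produces $u,v\in\Z$ with $u p_1 + v p_2 = 1$ and $|u|\le |p_2|/2$, $|v|\le |p_1|/2 \le M/2 \le M$. The general case I would handle by induction processing the entries through their successive gcds $g_i := \gcd(p_1,\ldots,p_i)$ (so that $g_k=1$): at step $i$, the two-variable case applied to $(g_{i-1},p_i)$ combines a Bézout combination of $p_1,\ldots,p_{i-1}$ representing $g_{i-1}$ with $p_i$ to obtain one representing $g_i$.

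The difficulty is that this naive unrolling multiplies all previously accumulated coefficients by the cofactors $u_i$ arising at each step, so the resulting multipliers may grow exponentially. The key step is therefore a \emph{reduction} performed after each combination. The set of all integer solutions of $\sum_{j\le i} z_j p_j = g_i$ is a coset of the relation lattice $L_i = \set{w\in\Z^i \mid \sum_{j\le i} w_j p_j = 0}$, which has rank $i-1$ and admits a triangular basis: for each $j\le i$ the vector $(p_j/g_j)\,c^{(j-1)} - (g_{j-1}/g_j)\,e_j$ lies in $L_i$, where $c^{(j-1)}$ is a combination of $p_1,\ldots,p_{j-1}$ representing $g_{j-1}$ (its $j$-th coordinate is nonzero and all later ones vanish). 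I would size-reduce the current solution against such a basis, i.e. a Gaussian/LLL-style reduction, so as to pull each coordinate back into a bounded window.

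The main obstacle is the quantitative control: I need the reduced coefficients to be bounded by $M$, not merely by some $O(M)$. The subtlety is that an isolated pairwise move $z_i \mapsto z_i + t\,p_j$, $z_j \mapsto z_j - t\,p_i$ (which keeps $\sum z_i p_i$ invariant) can shrink one coordinate only at the cost of enlarging another, since $|p_1|$ may equal $M$; hence independent coordinatewise reductions do not suffice. Controlling this requires a \emph{simultaneous} reduction equipped with a monotone potential (for instance $\|z\|_2$, or $\max_i|z_i|$ with a tie-break), shown to decrease strictly until every coordinate lies in $(-M/2,M/2]$. This potential analysis is precisely the delicate part of the extended-gcd reduction of Majewski and Havas; my plan is to adapt their argument, and to observe that for the weaker stated bound $z_i \le \max_i|p_i|$ it suffices to check that any fully size-reduced coset representative already satisfies $|z_i|\le M$.
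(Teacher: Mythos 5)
You should first be aware that the paper does not prove this statement at all: it is imported as a black box, cited verbatim as Theorem~9 of Majewski and Havas~\cite{MajewskiH94}, and only the signed variant (Lemma~\ref{lem:bezout}) is proved in the appendix, by shifting the cited coefficients to make them positive. So the question is whether your argument stands on its own --- and it does not. The preparatory steps are fine: the normalisation to $\gcd = 1$ is sound and correctly exploits the one-sided bound; the balanced extended Euclidean base case is standard; and the triangular vectors $(p_j/g_j)\,c^{(j-1)} - (g_{j-1}/g_j)\,e_j$ do lie in the relation lattice. But the entire quantitative content of the theorem is concentrated in the one step you defer: you assert that a ``fully size-reduced'' coset representative satisfies $|z_i| \le M$, without specifying the reduction procedure, exhibiting the monotone potential, or proving termination with the claimed final bound --- and you explicitly state that the plan is to adapt the argument of Majewski and Havas, i.e.\ the very proof of the theorem being claimed. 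As written, the proposal is circular relative to the citation: everything routine is done, and the sole non-routine step is outsourced to the source.

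Moreover, this gap is not a formality that any reasonable reduction would close. You yourself identify why coordinatewise reduction fails (shrinking $z_i$ via $z_i \mapsto z_i + t p_j$, $z_j \mapsto z_j - t p_i$ can enlarge $z_j$ by a multiple of $p_i$, and $|p_1| = M$), but then the claim that \emph{some} simultaneous reduction drives every coordinate into $(-M/2, M/2]$ is exactly the theorem, and it is delicate: the related problem of computing a multiplier vector of \emph{minimal} max-norm is NP-hard (shown by the same authors), so ``an LLL-style size-reduction obviously lands in the window'' is not a safe heuristic --- one must verify that the specific potential strictly decreases whenever some $|z_i| > M$, which is where the actual work lies. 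To make the proof self-contained you would need to carry out that potential analysis explicitly; until then the proposal is a correct high-level plan with a genuine hole at its core step, whereas the paper, more honestly, simply cites the result.
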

It is more convenient to split the integers into two groups depending on their signs.
Let $p_0,\ldots,p_k$ be positive integers and $q_0,\ldots, q_l$ be negative integers, let $d=\gcd(p_0,\ldots,p_k,q_0,\ldots,q_l)$ and let $M=\max(p_0,\ldots,p_k,-q_0,\ldots, -q_l)$. Using the previous theorem, we know there are integers $m_0,\ldots, m_k$ and $n_0,\ldots,n_l$ all in $[-M,M]$ so that $\sum_{0\leq i\leq k} m_{i}p_i+\sum_{0\leq j\leq l} n_{i}q_i=d$. We modify the coefficients in the following way:
$m_0'=m_0-M\sum_{1\leq j\leq l}q_j$, $m_i'=m_i-Mq_0$ for $i\in \set{1,\ldots,k}$, $n_0'=n_0+M\sum_{1\leq i\leq k}p_i$, $n_j'=n_j+Mp_0$ for $j\in \set{1,\ldots,l}$. Note that all the new coefficients are positive. We only have left to check that the sum is unchanged.

$\begin{array}{l}\sum_{0\leq i\leq k} m_i'p_i+\sum_{0\leq j\leq l} n_i'q_i=\\
  \sum_{0\leq i\leq k} m_{i}p_i-p_0M\sum_{1\leq j\leq l}q_j-\sum_{0\leq i\leq k}Mq_0p_i\\+\sum_{0\leq j\leq l} n_{i}q_i+q_0M\sum_{1\leq i\leq k}p_i+\sum_{0\leq j\leq l}Mp_0q_j\\
  =\sum_{0\leq i\leq k} m_{i}p_i+\sum_{0\leq j\leq l} n_{i}q_i
\end{array}$

We have managed to obtain positive coefficients, all smaller than $M^3$.

\section{Proof of Lemma~\ref{lem:noshrinkbij}}%
\label{app:prooflemnoshrinkbij}
\lemnoshrinkbij*
\begin{proof}
 There are two cases:
 \begin{itemize}
 \item If $d - c = b - a$, then take $f: x \rightarrow x + c - a$.
 \item Otherwise, define $f$ as: for all $x \in [a;b-1]$ (note that such
   interval can be empty), let $f(x) = x + c - a$ as above. Then, for $x \in
   [b-1;b]$, let $e = c+(b-1-a)$ and $f(x) = d - (b-x)(d-e)$. We have $f(b-1) =
   e$, which is consistent with the definition of $f$ on $[a;b-1]$, and $f(b) =
   d$, and $f$ is moreover increasing and bijective. Finally, $f(b-1) \in \N$
   and $f(b) \in \N$. Overall, $f$ satisfies the required properties.
   \begin{figure}[ht]
     \begin{tikzpicture}[xscale=1.2,yscale=0.8]
       \node[inner sep=0pt,label={left:$a$}] (a)  at (0,0) {\textbullet};
       \node[inner sep=0pt,label={right:$c$}] (c) at (2,0.5) {\textbullet};
       \draw[->,>=stealth] (a) -- (c);
       \node[rotate=11] at (1,1.75) {$\cdots$};
       \node[inner sep=0pt,label={left:$b-1$}] (bm1)  at (0,3) {\textbullet};
       \node[inner sep=0pt,label={right:$e=c+(b-1-a)$}] (e) at (2,3.5) {\textbullet};
       \draw[->,>=stealth] (bm1) -- (e);
       \begin{scope}[opacity=0.5]
       \node[inner sep=0pt,label={left:$b-1/2$}] (bm12)  at (0,3.5) {\textbullet};
       \node[inner sep=0pt,label={right:$e+\dfrac{d-e}{2}$}] (de2) at (2,4.75) {\textbullet};
       \draw[->,>=stealth] (bm12) -- (de2);
       \end{scope}
       \node[inner sep=0pt,label={left:$b$}] (b)  at (0,4) {\textbullet};
       \node[inner sep=0pt,label={right:$d$}] (d) at (2,6) {\textbullet};
       \draw[->,>=stealth] (b) -- (d);
\end{tikzpicture}
\end{figure} \qedhere
 \end{itemize}
\end{proof}

\end{document}